\documentclass[
    aps, 
    prr, 
    twocolumn, 
    groupedaddress, 
    reprint, 
    floatfix, 
    nofootinbib, 
    longbibliography
]{revtex4-2}
\pdfoutput=1
\usepackage[utf8]{inputenc}
\usepackage[T1]{fontenc}
\usepackage{amsmath,amssymb,amsthm}
\usepackage{mathtools}
\usepackage{amsthm}
\usepackage{bm}
\usepackage{hyperref}
\usepackage{physics}
\usepackage{float}
\usepackage{xcolor}
\usepackage[shortlabels]{enumitem}
\usepackage[normalem]{ulem}
\usepackage{caption}
\usepackage{subcaption}
\usepackage{orcidlink}

\newtheorem{theorem}{Theorem}
\newtheorem{lemma}{Lemma}
\newtheorem{conjecture}{Conjecture}
\newtheorem{observation}{Observation}
\newtheorem{proposition}{Proposition}
\newtheorem{claim}{Claim}
\newtheorem{corollary}{Corollary}
\theoremstyle{definition}
\newtheorem{definition}{Definition}
\theoremstyle{remark}
\newtheorem{remark}{Remark}
\usepackage{graphicx}

\newcommand{\one}{\mathbf{1}}
\newcommand{\RR}{\mathbb{R}}
\newcommand{\NN}{\mathbb{N}}

\begin{document}

\title{Star Topology Optimizes the Charging Power of Quantum Batteries}

\author{Matthieu Sarkis\orcidlink{https://orcid.org/0009-0002-5494-8406}}
\email[]{matthieu.sarkis@uni.lu}
\affiliation{Department of Physics and Materials Science, University of Luxembourg, L-1511 Luxembourg City, Luxembourg}

\author{Oskar A. Pro\'sniak\orcidlink{https://orcid.org/0000-0003-4550-7561}}
\email[]{oskar.prosniak@uni.lu}
\affiliation{Department of Physics and Materials Science, University of Luxembourg, L-1511 Luxembourg City, Luxembourg}

\author{Samuel Nigro}
\email[]{samuel.nigro@ens-paris-saclay.fr}
\affiliation{Université Paris-Saclay, ENS Paris-Saclay, DER de Physique, 91190, Gif-sur-Yvette, France}

\author{Alexandre Tkatchenko\orcidlink{https://orcid.org/0000-0002-1012-4854}}
\email[]{alexandre.tkatchenko@uni.lu}
\affiliation{Department of Physics and Materials Science, University of Luxembourg, L-1511 Luxembourg City, Luxembourg}

\begin{abstract}
    Quantum batteries are quantum systems that store energy and deliver it on demand, and their practical value hinges on how fast they can be charged. While collective charging protocols and global control are known to enhance charging power, it remains unclear how the battery’s internal interaction architecture itself constrains performance. Here we study interacting fermionic batteries whose internal couplings are encoded by a graph adjacency matrix, charged via a simple interaction with an external fermionic device. We prove that the \emph{star} topology maximises the early time charging power, which proxies the maximal average power -- a widely used quantum battery quality metric. We substantiate the result numerically by an exhaustive sweep over all graphs with $N\leq 7$ vertices and by benchmarks against random graph ensembles at larger $N$. Our findings shed light on architecture as a controllable knob for fast charging and motivate hub-and-spoke designs in scalable quantum-battery platforms.
\end{abstract}

\maketitle

\section{Introduction}
    Quantum batteries (QBs) are defined as quantum systems engineered to store energy and deliver it on demand, leveraging the laws of quantum mechanics. Since the seminal observation that global (entangling) unitary control can enhance the extractable work and the rate of energy deposition compared to independent charging of subsystems~\cite{alicki2013entanglement,ferraro2018highpower, andolina2019extractable}, the field has developed into a vibrant subarea of quantum thermodynamics that connects quantum control, many-body physics, and open-system engineering; see the comprehensive recent overview in Ref.~\cite{campaioli2024colloquium} and related perspectives~\cite{friis2018gaussian,barra2019dissipative}. A core question is whether, and under what constraints, collective effects can produce genuine scaling advantages in the charging power--the work per unit time--over strategies that address the battery cells in parallel or sequentially.

    Early rigorous bounds clarified the resource requirements behind such advantages. In particular, Ref.~\cite{campaioli2017enhancing} derived limits on the achievable ``quantum advantage'' in charging power for constrained driving, linking performance improvements to the \emph{order} of interactions that the charger can induce across the $N$ cells~\cite{juliafarre2020bounds, gyhm2021advantage}, and showing that extensive scaling cannot be achieved without global operations~\cite{gyhm2022cannot}. A complementary geometric viewpoint yields minimal-time charging bounds that further tighten constraints on mean power~\cite{gyhm2024minimal}. Subsequent models demonstrated that superextensive power scaling can arise either from coherent shortcuts in Hilbert space or from effective mean-field mechanisms in many-body platforms, including spin chains and long-range interacting systems~\cite{le2018spin,andolina2019versus,rossini2019mbl,rossini2020syk}. At the same time, microscopic open-system analyses have shown that collective enhancements can persist---and can even be assisted---in realistic dissipative settings~\cite{carrasco2022collective,barra2019dissipative,farina2019charger,gherardini2020stabilizing,qi2021magnon,pokhrel2025dissipative}, while collision-model approaches established that quantum coherence of the chargers yields a genuine single-cell speed-up over all incoherent strategies~\cite{seah2021quantum}, and indefinite causal order has been shown to further enhance charging strategies~\cite{zhu2023ico}. Beyond bounds, explicit many-body models demonstrate that superextensive power and enhanced stability can emerge from delocalized modes, long-range couplings, or chaotic dynamics. In particular, fast-scrambling fermionic models of Sachdev--Ye--Kitaev (SYK) type were shown to support a quantum charging advantage and to exhibit exceptional stability of the stored energy after charging~\cite{rossini2020sykadvantage, rosa2020ultrastable, divi2025randomwalkgraphs}. Other fermionic platforms---emphasizing particle statistics and band-structure effects---have also been explored~\cite{konar2022ultracold,grazi2025chargingfreefermion,grazi2025universal}. Together, these results suggest that both the interaction topology and the nonequilibrium resources available to the charger are decisive ingredients for maximizing charging power.

    The previous results found in the literature often focused on the interaction structure introduced via the charging protocol, disregarding the internal structure of the battery itself. Important exceptions are Refs.~\cite{konar2022ultracold,grazi2025chargingfreefermion,grazi2025universal}, which, however, investigate only the regular lattice structure of the internal interactions, and general results completely agnostic to the battery architecture~\cite{gyhm2022cannot,caravelli_energy_2021,zakavati_bounds_2021}. In this work we analyze the influence of the internal topology on the battery quality. In particular, we consider the battery consisting of $N$ fermionic degrees of freedom interacting with uniform strength according to a graph structure $G$. The charging is modelled by a quenched interaction with an external fermionic bath. We pick the maximal average power to be our main quality metric of interest. Our analysis shows that the \emph{star} topology is both optimal and stable against perturbation.

    The star architecture has additional practical appeal. It is compatible with widely used physical platforms where a central mode or mediator couples identically (or near-identically) to an ensemble of cells: examples include cavity and circuit QED (a single electromagnetic mode addressing many qubits), trapped-ion chains driven by a common phonon, and magnonic or mechanical resonators coupled to spin or superconducting degrees of freedom.
    Such hub-and-spoke layouts have already featured in theoretical and experimental proposals for QBs with harmonic or micromaser chargers, collisional schemes, and dissipative cavity setups~\cite{seah2021quantum,carrasco2022collective,shaghaghi2022micromasers,shaghaghi2023lossy,rodriguez2023aidiscovery,rojofrancas2024stablecollective}.
    Our results provide a possible explanation for their good performance. 

    We emphasize that, while the particular choice of the system was dictated by its analytical tractability and experimental relevance, we expect our findings to generalize to charging protocols and microscopic models beyond those considered here.

    The remainder of the paper is organized as follows. In Sec.~\ref{sec:model} we introduce the model, fix the graph-theoretic conventions used to encode the battery architecture, and state the main results. Sec.~\ref{sec:math} provides the analytical backbone of the work: we derive an early-time expansion of the charging dynamics, reduce the relevant power functional to spectral quantities of the underlying graph, and prove the optimality of the star topology first in the absence of a local field and then for a general nonzero local field. In Sec.~\ref{sec:num} we complement these proofs with numerical evidence, combining broad searches over graph families with direct simulations of the microscopic model to benchmark performance and assess the absence of close competitors. Finally, Sec.~\ref{sec:conclusion} concludes and summarizes the implications for topology-driven quantum-battery design and outlines directions for extending the approach to other architectures, charging scenarios, and platforms.

\section{Model and results}\label{sec:model}

    \subsection{Graphs and notations}
        Unless stated otherwise, graphs $G=(V,E)$ are simple, undirected, and unweighted. The adjacency $A\in\{0,1\}^{N\times N}$ is symmetric with zero diagonal; $|V|=N$ and $|E|=M$. We write $\mathbf{1}=(1,\dots,1)^\top$, and use $\{u_k\}_{k=1}^N$ for an orthonormal eigenbasis of $A$ in the Euclidean inner product $\langle x,y\rangle=x^\top y$. The average degree is $\bar d:=2M/N$. By the (adjacency) Rayleigh quotient of a unit vector $u$ we mean $u^\top A u$.

    \subsection{Microscopic model}
        We will consider a system composed of a collection of $N$ fermionic particles, called hereafter the \textit{battery} and described by a Fock space $\mathcal F_\textsc{b} =  \bigoplus_{i=0}^N \bigwedge^i \mathbb C^N \simeq \left( \mathbb{C}^2\right)^{\otimes N}$. We adjoin to the battery a charging device composed of $L$ fermions, hereafter referred to as the \textit{charger}, described by a Fock space $\mathcal F_\textsc{c} \simeq \left(\mathbb{C}^2\right)^{\otimes L}$. The total system is then described by a tensor product of the two Fock spaces $\mathcal F=\mathcal F_\textsc{c}\otimes \mathcal F_\textsc{b}$. We endow the total system with the following unitary dynamics:
        \begin{equation}
            H = H_\textsc{b} + H_\textsc{c} + H_\text{int},
        \end{equation}
        with the following quadratic battery Hamiltonian:
        \begin{equation}
            H_\textsc{b} = \frac{H_\textsc{b}^0 - \bar E}{\Delta/2},
        \end{equation}
        with ($h\geq 0$)
        \begin{equation}
            H_\textsc{b}^0 = h\sum_{i\in V}c_i^\dagger c_i + \frac{N}{M}\sum_{\{i,j\}\in E}A_{ij}\left(c_i^\dagger c_j + \text{h.c.}\right),
        \end{equation}
        and $\bar E = \frac{E_\text{max} + E_\text{min}}{2}$, $\Delta = E_\text{max} - E_\text{min}$ are the average spectral energy and the width of the spectrum, respectively ($E_\text{max}$ and $E_\text{min}$ are the maximal and minimal eigenvalues of $H_\textsc{b}^0$). The fermionic creation and annihilation operators satisfy the usual canonical anti-commutation relations:
        \begin{equation}
            \{c_i, c_j^\dagger\} = \delta_{ij}\,.
        \end{equation}

        The couplings $A_{ij}$ will be taken to be valued in $\{0, 1\}$, endowing $A$ with an interpretation of an adjacency matrix of a graph $G(V,E)$ with $|V|=N$ vertices and $|E|=M$ edges.
        
        The reason for this restriction in the range of the coupling constants is to focus solely on the role of connectivity.

        The charger Hamiltonian is chosen to be a fermionic bath with positive energies
        \begin{equation}
            H_\textsc{c} = \sum_{i=1}^L \omega_i d_i^\dagger d_i,\quad \omega_i>0,
        \end{equation}
        with
        \begin{equation}
            \{c_i,d_j\}=\{c_i,d_j^\dagger\}=0\,.
        \end{equation}
        The interaction (or \textit{charging}) term is chosen to be uniform:
        \begin{equation}
            H_\text{int} = \kappa\sum_{i=1}^{N} \sum_{j=1}^L \left(c_i^\dagger d_j + d_j^\dagger c_i\right).
        \end{equation}
        The system is initialized in the product state $|\psi(0)\rangle = |0\rangle\otimes|\varnothing\rangle$ where $|0\rangle$ is the Fock vacuum of the charger and $|\varnothing\rangle$ is the ground state of the battery Hamiltonian. Observe that $|\psi(0)\rangle$ is in particular the ground state of the non-interacting model. The system then evolves according to the full Hamiltonian, and we keep track of the state of the battery $\rho_\textsc{b}(t) = \Tr_C\ketbra{\psi(t)}{\psi(t)}$, and, by corollary, of various quality metrics.

        The quality of a quantum battery can be measured through various energetic observables. Among the standard observables, the stored energy, the ergotropy, the maximal charging power and the energy fluctuations are often considered in the literature. In this work, we will focus on the maximal average charging power, which is the primary quantity of interest in the context of quantum advantage in quantum thermodynamics. The average charging power is defined as:
        \begin{equation}
        \label{eq:instant_power}
            P(t) = \frac{1}{t}\left(\text{Tr}\big[H_\textsc{b}\rho_\textsc{b}(t)\big] - \Tr[H_\textsc{b}\rho_\textsc{b}(0)]\right).
        \end{equation}
        The maximal average charging power is defined as its supremum over all times
        \begin{equation}
            P_\text{max} = \sup_{t\geq 0}P(t).
        \end{equation}
        Equipped with the above definitions, we are now ready to state the main results of this work.

    \subsection{Results}

    The main result of this work is a partial proof of the following claim.

    \begin{claim}
        Among all quantum batteries described in the previous subsection, together with their charging protocols, for fixed values of parameters $(N,L,h,\lbrace\omega_i\rbrace,\kappa)$, both
        \begin{enumerate}[(i)]
            \item the early time average charging power $P(0^+)$,
            \item the maximal average charging power $P_\text{max}$,
        \end{enumerate}
         are maximised by the quantum battery with a star graph $S_N$ architecture.
    \end{claim}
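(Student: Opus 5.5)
The plan is to reduce both parts of the Claim to single-particle spectral data of $A$. This works because the full Hamiltonian is quadratic, $|\psi(0)\rangle$ is Gaussian, and the charger couples to the battery only through the uniform mode $\phi=\one/\sqrt N$. Indeed, $H_\text{int}=\kappa\sqrt{NL}\,(c_\phi^\dagger d_\chi+\text{h.c.})$ with $d_\chi=L^{-1/2}\sum_j d_j$. Write the single-particle battery energies as $e_k=h+\frac{N}{M}\lambda_k$, where $A u_k=\lambda_k u_k$. Then $E_\text{max}-E_\text{min}=\Delta=\sum_k|e_k|$, since the many-body extremes fill all negative or all positive modes. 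The ground state $|\varnothing\rangle$ fills exactly the modes with $e_k<0$.

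For (i), note that $P(t)\to0$ as $t\to 0$, so I interpret $P(0^+)$ as the leading coefficient $\lim_{t\to0}P(t)/t$. Expand $\langle H_\textsc{b}\rangle_t$ to second order; the additive constant $\bar E$ drops out of every commutator. The first-order term $\langle i[H,H_\textsc{b}]\rangle_0$ vanishes because $H_\text{int}$ changes the particle number of each subsystem and $|\psi(0)\rangle$ is a product of number eigenstates. The second-order term comes from two processes: a particle hopping from the empty charger into empty battery modes is impossible, but the double commutator still gives $+e_k$ for each empty mode $k$ and $-e_k$ for each filled mode $k$ (a hole created in the battery). This yields
\begin{equation}
\lim_{t\to0}\frac{P(t)}{t}=\kappa^2 L N\,\frac{2}{\Delta}\sum_k w_k|e_k|,\qquad w_k=\frac{\langle u_k,\one\rangle^2}{N},\quad \sum_k w_k=1 .
\end{equation}
Hence I must maximise $F(G)=\sum_k w_k|e_k|\big/\sum_k|e_k|$.

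For $h=0$ the argument is short. The numerator is at most $\frac NM\lambda_{\max}$. Since $\operatorname{tr}A=0$, the positive and negative eigenvalues have equal absolute sums, so the denominator is $\frac NM\sum|\lambda_k|\ge\frac{2N}{M}\lambda_{\max}$. Together these give $F\le 1/2$. For $S_N$ the spectrum is $\pm\sqrt{N-1}$ together with zeros, and $\one$ lies in the span of the two extremal eigenvectors (center plus sum of leaves), so $F(S_N)=1/2$. Equality in general requires a single positive eigenvalue (complete multipartite graphs, by Smith's theorem) together with $\one$ having no weight on the other eigenvectors. Complete bipartite graphs also achieve equality, so the Claim holds as ``a maximiser'' rather than as uniqueness.

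For $h>0$ I would split the spectrum at the sign change of $e_k$. The numerator is bounded by $\max_k|e_k|$ weighted through $w_k$. For the denominator, I would use $\sum_k e_k=Nh$ to show that $\sum|e_k|\ge 2\max(e_{\max},-e_{\min})-Nh$. I expect that on $S_N$, with its $N-2$ modes at $e=h$ carrying zero weight $w$, these two bounds combine into an equality. Checking this case analysis over whether $h\gtrless \frac NM\lambda_{\max}$ is routine but needs care.

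For (ii) I would work entirely in the effective single-particle problem. The relevant dynamics is confined to the Krylov space generated from $\phi$ by the single-particle battery Hamiltonian (plus the charger modes). For $S_N$ this Krylov space is two-dimensional (center and symmetric leaf mode), so $P(t)$ can be written in closed form. For a general $G$, I would write $\Delta E(t)=\frac2\Delta\sum_k|e_k|\,n_k(t)$, where $n_k(t)\ge0$ is the change in occupation (particle or hole) of mode $k$. The goal is to show that $\sum_k n_k(t)\le n^{\star}(t)$, where $n^{\star}(t)$ is the transfer obtained when $\phi$ sits entirely in the extremal modes. The supporting idea is that spreading the weight of $\phi$ over more frequencies can only dephase the transfer amplitude $\langle\phi|e^{-ith}|\phi\rangle$. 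Combined with the same ratio bound as in (i), this would give $P(t)\le P_{S_N}(t)$ pointwise in $t$, and hence the inequality for $P_\text{max}$. This pointwise domination is the main obstacle. I would attempt it first by a majorisation argument on the spectral measure $\sum_k w_k\delta_{e_k}$, and fall back on the exhaustive $N\le7$ and random-graph numerics of Sec.~\ref{sec:num} to support the step if the majorisation argument fails.
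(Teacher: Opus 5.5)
There is a genuine gap in the second-order coefficient, so everything you prove at $h=0$ concerns the wrong functional. The charger starts in its Fock vacuum, so $d_j|0\rangle=0$. Hence $H_\text{int}|\psi(0)\rangle=\kappa\sum_{j}\sum_{k\in K_-}v_k\,d_j^\dagger\tilde c_k|\psi(0)\rangle$ with $v_k=\one^\top u_k$: the only allowed process is removing a particle from a filled mode. Since $|\psi(0)\rangle$ is an eigenstate of $H_\textsc{b}$ (eigenvalue $E_i$), one has exactly $-\tfrac12\langle[H_\text{int},[H_\text{int},H_\textsc{b}]]\rangle_{\psi(0)}=\sum_f|\langle f|H_\text{int}|\psi(0)\rangle|^2(E_f-E_i)$. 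A forbidden process therefore contributes nothing; there is no ``$+e_k$ for each empty mode''. The correct coefficient is
\[
\lim_{t\to0}\frac{P(t)}{t}=\frac{2\kappa^2L}{\Delta}\sum_{k\in K_-}v_k^2\,|e_k|,
\]
which is the paper's Eq.~\eqref{eq:P-early-final} (at $h=0$, $\Delta/2=|E_{\min}|$). Your version fails an immediate sanity check. Your $F$ equals $1/2$ for $S_N$, for every $K_{p,q}$, and also for the complete graph $K_N$, whose only positive eigenvector is $\one/\sqrt N$. Yet for $K_N$ and balanced $K_{p,p}$ every negative eigenvector is orthogonal to $\one$ (Proposition~\ref{prop:reg-zero}), so their true early-time power is zero. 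Within the complete bipartite family the star is the unique maximiser (Corollary~\ref{cor:star-max}). Your bound $F\le1/2$ is algebraically fine, but it only gives $\lim P/t\le\kappa^2LN$. The star's true value $\kappa^2L\,(N/2-\sqrt{N-1})$ lies strictly below this, so the argument establishes nothing about the claim, and your non-uniqueness conclusion is an artefact.

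The missing idea is how to control the restriction to the negative sector, which is exactly where the graph dependence sits. The paper uses $\sum_k v_k^2\lambda_k=\one^\top A\one=2M$ to write $\sum_{k\in K_-}v_k^2|\lambda_k|=\tfrac12\bigl(\sum_k v_k^2|\lambda_k|-2M\bigr)$. Your all-mode sum reappears here, but with an edge penalty. The paper then proceeds in three steps:
\begin{itemize}
\item it applies Cauchy--Schwarz with $\sum_k v_k^2\lambda_k^2=\|d\|_2^2$ and $\sum_k v_k^2=N$;
\item it divides by $\sum_{\lambda_k>0}\lambda_k\ge\sqrt M$ (Lemma~\ref{lem:neg-sum-bound});
\item it finishes with de Caen's bound and monotonicity in $M$ down to $M=N-1$, where the star saturates everything (Theorem~\ref{thm:star-h0}).
\end{itemize}
Your own estimate $\sum_k v_k^2|\lambda_k|\le N\lambda_{\max}$ can also be recycled after this identity. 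Together with $\sum_{\lambda_k>0}\lambda_k\ge\lambda_{\max}$ it gives $R_G\le N/2-M/\lambda_{\max}$, where $R_G=\sum_{k\in K_-}v_k^2|\lambda_k|/\sum_{\lambda_k>0}\lambda_k$. For connected graphs, Hong's bound $\lambda_{\max}\le\sqrt{2M-N+1}$ then yields $M/\lambda_{\max}\ge\sqrt{N-1}$, because $M^2-(N-1)(2M-N+1)=(M-N+1)^2\ge0$.

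Beyond $h=0$ your plan does not close the claim either. With the correct negative-sector functional, $h>0$ is not a routine case analysis. Through the envelope bound $R_G(h)\le w_-(G)$ it reduces to Conjecture~\ref{conjecture}, which the paper leaves open and supports only for special families and numerically. Part~(ii) is likewise only numerical in the paper. Your pointwise domination $P_G(t)\le P_{S_N}(t)$ via majorisation is an unproven hypothesis, not a step, and it leans on the same incorrect ratio bound.
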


    For $h=0$ we rigorously prove part $(i)$ in the above claim (Theorem~\ref{thm:star-h0}). We numerically show that there is a high positive correlation between the values of $P(0^+)$ and $P_\text{max}$, and confirm the claim by both a full numerical sweep over graphs with $N=7$ vertices and tests of random families of graphs.

\section{Mathematical analysis}
\label{sec:math}

    We derive an early–time expression for the charging power and reduce its maximization to a
    spectral problem on the interaction graph. Diagonalizing the adjacency $A$ yields normal
    modes with eigenvalues $\{\varepsilon_k\}$ and uniform overlaps $v_k:=\mathbf{1}^\top u_k$,
    which control the relevant bounds. In the zero–field case $h=0$ this reduction becomes
    particularly transparent and leads to a sharp optimality statement for the star graph.

    \subsection{Early-time expansion and spectral reduction}

        We focus on the average power given in eq. (\ref{eq:instant_power}) and consider the early-time regime. Up to $O(t^2)$ corrections
        \begin{equation}
        \label{eq:P-early}
            P(t)=-\frac{t}{2}\,\big\langle\,[H_\textrm{int},[H_\textrm{int},H_\textsc{b}]]\,\big\rangle_{\psi(0)}.
        \end{equation}

        \paragraph*{Diagonalization of $A$ and mode weights.}
        Let $A=O\,\mathrm{diag}(\varepsilon_1,\dots,\varepsilon_N)\,O^\top$ with $O\in O(N)$ and orthonormal eigenvectors $u_k$ (columns of $O$), $Au_k=\varepsilon_k u_k$. Define the uniform overlaps and weights
        \begin{equation}
            v_k=\mathbf{1}^\top u_k,
            \qquad
            w_k=v_k^2\;\ge 0.
        \end{equation}
        We note here that in the literature $v_k$ are known as fundamental weights of the graph \cite{van_mieghem_graph_2016}. Introduce rotated annihilation operators
        \begin{equation}
            \tilde c_k=\sum_{i=1}^N c_i\,O_{ik},
        \end{equation}
        which form another representation of CAR algebra. One readily checks
        \begin{equation}
        \label{eq:diag-Hb-Hint}
        \begin{aligned}
            H_\textsc{b}^0&=\sum_{k=1}^N\Big(\frac{N}{M}\,\varepsilon_k+h\Big)\,\tilde c_k^\dagger\tilde c_k, \\
            H_{\mathrm{int}}&=\kappa\sum_{k=1}^N\sum_{i=1}^L v_k\big(\tilde c_k^\dagger d_i+d_i^\dagger \tilde c_k\big).
        \end{aligned}
        \end{equation}

        \paragraph*{Active (negative) sector.}
        Set $E_k(h):=\frac{N}{M}\,\varepsilon_k+h$ and define the $h$–dependent set of \emph{active} modes
        \begin{equation}
            K_-(h)=\big\{\,k:\,E_k(h)<0\,\big\},
        \end{equation}
        and let $K_+(h)$ denote its complement. Let $E_{\max}=\sum_{k\in K_+(h)}E_k(h)$ and $E_{\min}=\sum_{k\in K_-(h)}E_k(h)$ be the extremal many–body energies (the $\varepsilon_k$ come in opposite pairs). Inserting~\eqref{eq:diag-Hb-Hint} in~\eqref{eq:P-early} and evaluating on the initial state $|0\rangle\!\otimes\!|\varnothing\rangle$ gives the compact form
        \begin{equation}
        \label{eq:P-early-final}
            P(t)=\frac{t L \kappa^2}{E_{\min}}\,\sum_{k\in K_-(h)} w_k\,E_k(h).
        \end{equation}
        This is the key spectral reduction: only modes with $E_k(h)<0$ contribute, and their weight is precisely the uniform overlap $w_k$. For the future convenience let us isolate the graph contribution and define
        \begin{equation}
            R_G(h) = \frac{1}{E_{\min}}\,\sum_{k\in K_-(h)} w_k\,E_k(h),
        \end{equation}
        so that $P(t) = t L \kappa^2 R_G(h)$ and the task of maximising the early-time power reduces to maximisation of $R_G(h)$.

    \subsection{No local field: $h=0$}
    \label{sec:h0}
        To simplify the notation let us set $K_\pm \equiv K_\pm(0)$ and $R_G\equiv R_G(0)$. We then have
        \begin{equation}
        \label{eq:P-h0}
            R_G=\frac{1}{\varepsilon_+}\!\!\sum_{k\in K_-} w_k\,|\varepsilon_k|,
            \qquad
            \varepsilon_+=\sum_{\ell\in K_+}\varepsilon_\ell.
        \end{equation}

        The simplicity of the expression allows for the rigorous treatment. In the following we show that $R_G$ is maximised by the star graph.

        \begin{lemma}
        \label{lem:neg-sum-bound}
            Let $d_i:=\sum_j A_{ij}$ be the degrees, $d:=(d_1,\dots,d_N)$, and $\|d\|_2$ its Euclidean norm. Then
            \begin{equation}
            \label{eq:neg-sum-bound}
                \frac{\sum_{k\in K_-} w_k\,|\varepsilon_k|}{\varepsilon_+}
                \;\le\;
                \frac{\|d\|_2\,\sqrt{N}}{2\sqrt{M}}\;-\;\sqrt{M}.
            \end{equation}
        \end{lemma}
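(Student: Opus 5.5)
The plan is to combine two trace-type identities for the fundamental weights with Cauchy--Schwarz for the numerator, and a lower bound on the graph energy for the denominator. Throughout, write $S:=\sum_{k\in K_-} w_k|\varepsilon_k|$.

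\emph{Step 1: identities.} Expanding $\mathbf{1}=\sum_k v_k u_k$ in the eigenbasis of $A$ gives three facts:
\begin{equation*}
\sum_k w_k=\|\mathbf{1}\|^2=N,\qquad
\sum_k w_k\varepsilon_k=\mathbf{1}^\top A\mathbf{1}=2M,\qquad
\sum_k w_k\varepsilon_k^2=\|A\mathbf{1}\|^2=\|d\|_2^2 .
\end{equation*}
Since $\operatorname{Tr}A=0$, we also have $\varepsilon_+=\sum_{k\in K_-}|\varepsilon_k|=\tfrac12\sum_k|\varepsilon_k|$, i.e.\ $\varepsilon_+$ is half the graph energy. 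Moreover $\operatorname{Tr}A^2=\sum_k\varepsilon_k^2=2M$.

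\emph{Step 2: bound the numerator.} Split the second identity into $K_+$ and $K_-$:
\begin{equation*}
\sum_{k\in K_+}w_k\varepsilon_k - S = 2M .
\end{equation*}
Adding $2S$ to both sides gives $\sum_k w_k|\varepsilon_k| = 2M+2S$. Now apply Cauchy--Schwarz with weights $w_k\ge 0$:
\begin{equation*}
\sum_k w_k|\varepsilon_k|\le \Big(\sum_k w_k\varepsilon_k^2\Big)^{1/2}\Big(\sum_k w_k\Big)^{1/2}=\|d\|_2\sqrt N .
\end{equation*}
Hence $S\le \tfrac12\|d\|_2\sqrt N - M$. This is automatically nonnegative, because $S\ge 0$.

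\emph{Step 3: bound the denominator.} The step requiring a genuine extra input is a lower bound on $\varepsilon_+$ that produces exactly the factor $\sqrt M$. I would use the classical inequality that the graph energy satisfies $\mathcal E(G)=\sum_k|\varepsilon_k|\ge 2\sqrt M$. Here is a self-contained route. Write $P=\varepsilon_+$; the negative eigenvalues also sum in absolute value to $P$. Then
\begin{equation*}
\sum_{k\in K_+}\varepsilon_k^2\le P^2,\qquad \sum_{k\in K_-}\varepsilon_k^2\le P^2,
\end{equation*}
since for nonnegative numbers the sum of squares is at most the square of the sum. Adding these gives $2M\le 2P^2$, so $\varepsilon_+\ge\sqrt M$.

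\emph{Step 4: conclude.} Since $\varepsilon_+>0$ whenever $M\ge1$ and the numerator bound is nonnegative, dividing gives
\begin{equation*}
\frac{S}{\varepsilon_+}\le\frac{\tfrac12\|d\|_2\sqrt N-M}{\sqrt M}=\frac{\|d\|_2\sqrt N}{2\sqrt M}-\sqrt M ,
\end{equation*}
which is \eqref{eq:neg-sum-bound}.

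As a sanity check, for the star $S_N$ we have $\varepsilon=\pm\sqrt{N-1}$ with $M=N-1$, so $\varepsilon_+=\sqrt M$ and both steps are tight. This is consistent with the lemma being used to single out the star.
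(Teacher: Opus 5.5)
Your proof is correct and follows the paper's argument: the same weight identities, Cauchy--Schwarz applied to $\sum_k w_k|\varepsilon_k|$, and division by the lower bound $\varepsilon_+\ge\sqrt M$. The only difference is that you derive $\varepsilon_+\ge\sqrt M$ yourself from $\operatorname{Tr}A=0$ and $\operatorname{Tr}A^2=2M$, whereas the paper cites this bound from the literature.
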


        \begin{proof}
            We use the identities
            \begin{align*}
                &\text{(i)}\quad \sum_{k\in K_+}w_k\,\varepsilon_k-\sum_{k\in K_-}w_k\,|\varepsilon_k|
                =\sum_k w_k\,\varepsilon_k\\
                &\hspace{10em}=\mathbf{1}^\top A\mathbf{1}
                =2M,\\
                &\text{(ii)}\quad \sum_k w_k\,\varepsilon_k^2
                =\mathbf{1}^\top A^2\mathbf{1}
                =\sum_i d_i^2
                =\|d\|_2^2,\\
                &\text{(iii)}\quad \sum_k w_k
                =\|\mathbf{1}\|_2^2
                =N,\\
                &\text{(iv)}\quad \varepsilon_+\;\ge\;\sqrt{M},
            \end{align*}
            where the last item comes from \cite{jiongsheng_lower_1998}. From (i), we have
            \begin{equation}
            \begin{aligned}
                \sum_{k\in K_-}w_k\,|\varepsilon_k|
                &=\tfrac12\Big(\sum_k w_k\,|\varepsilon_k|\;-\;2M\Big) \\
                &\le \tfrac12\Big(\sqrt{\textstyle\sum_k w_k\,\varepsilon_k^2}\,\sqrt{\textstyle\sum_\ell w_\ell}\;-\;2M\Big) \\
                &\le \tfrac12\big(\|d\|_2\,\sqrt{N}-2M\big),
            \end{aligned}
            \end{equation}
            where we used Cauchy–Schwarz and (ii)–(iii). Dividing by (iv) gives~\eqref{eq:neg-sum-bound}.
        \end{proof}

        \begin{theorem}[Star optimality at $h=0$]
        \label{thm:star-h0}
            For fixed $N\geq 2$, the right-hand side of~\eqref{eq:neg-sum-bound} is maximized (and the bound is saturated) by the star graph. Consequently, at early times and $h=0$ the star attains the largest $R_G$ among graphs with the same number of vertices.
        \end{theorem}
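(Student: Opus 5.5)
The plan is a sandwich argument. For every graph $G$ on $N$ vertices with $M\ge 1$ edges (for $M=0$ we have $\varepsilon_+=0$ and $R_G$ is undefined), write
\[
F(G):=\frac{\|d\|_2\sqrt N}{2\sqrt M}-\sqrt M .
\]
Lemma~\ref{lem:neg-sum-bound} gives $R_G\le F(G)$. I will show two things: first $F(G)\le F(S_N)$ for every $G$, and second $R_{S_N}=F(S_N)$. Together these give $R_G\le R_{S_N}$, which is the second sentence of the theorem.

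\emph{Step 1: evaluate the star and check saturation.} For $S_N$ we have $M=N-1$ and degrees $(N-1,1,\dots,1)$. Hence $\|d\|_2^2=(N-1)^2+(N-1)=N(N-1)$, and therefore $F(S_N)=\tfrac N2-\sqrt{N-1}$.

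The spectrum of the star is $\pm\sqrt{N-1}$ together with $0$ of multiplicity $N-2$. So $K_-$ contains a single mode and $\varepsilon_+=\sqrt{N-1}$. Solve $Au=\lambda u$ with $\lambda=-\sqrt{N-1}$ using center amplitude $x$ and equal leaf amplitudes $y$; this gives $x=-\sqrt{N-1}\,y$ and, after normalisation, $y=1/\sqrt{2(N-1)}$. The weight is then
\[
w=\bigl(x+(N-1)y\bigr)^2=\tfrac12\bigl(\sqrt{N-1}-1\bigr)^2 .
\]
Consequently $R_{S_N}=w\sqrt{N-1}/\sqrt{N-1}=\tfrac N2-\sqrt{N-1}=F(S_N)$. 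The bound is therefore saturated: both the Cauchy--Schwarz step and the inequality $\varepsilon_+\ge\sqrt M$ are tight for the star.

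\emph{Step 2: show $F(G)\le F(S_N)$ when $M\ge N-1$.} I will use de Caen's inequality
\[
\sum_i d_i^2\le M\Bigl(\tfrac{2M}{N-1}+N-2\Bigr).
\]
It gives $F(G)\le g(M):=\tfrac{\sqrt N}{2}S(M)-\sqrt M$, where $S(M)^2=\tfrac{2M}{N-1}+N-2$.

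At $M=N-1$ we get $g(N-1)=\tfrac N2-\sqrt{N-1}=F(S_N)$. For $M\ge N-1$ we have $S^2\ge N$, so
\[
g'(M)=\frac{\sqrt N}{2(N-1)S}-\frac1{2\sqrt M}\le \frac1{2(N-1)}-\frac1{2\sqrt M}\le 0 ,
\]
because $M\le N(N-1)/2\le (N-1)^2$ for $N\ge2$. Thus $g$ is nonincreasing on $[N-1,\binom N2]$, and $F(G)\le F(S_N)$ throughout this range.

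\emph{Step 3: the sparse range $1\le M<N-1$.} I expect this to be the main obstacle, since de Caen's bound degrades here. My first attempt is the edge-sum identity $\sum_i d_i^2=\sum_{\{i,j\}\in E}(d_i+d_j)$ together with $d_i+d_j\le M+1$, which gives $\|d\|_2^2\le M(M+1)$. Then
\[
F(G)\le \varphi(M):=\tfrac12\sqrt{N(M+1)}-\sqrt M .
\]
It remains to prove $\varphi(M)\le \tfrac N2-\sqrt{N-1}$ for integers $1\le M\le N-2$. Note that $\varphi(N-1)$ equals exactly the star value, so the comparison is sharp at the upper end.

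My plan is to study $\varphi$ on the real interval $[1,N-1]$. Its derivative $\tfrac{\sqrt N}{4\sqrt{M+1}}-\tfrac1{2\sqrt M}$ changes sign at most once, from negative to positive, because the condition $NM\ge4(M+1)$ is monotone in $M$. So $\varphi$ is quasi-convex on this interval and its maximum is attained at an endpoint. The right endpoint gives exactly $F(S_N)$. The left endpoint gives $\varphi(1)=\sqrt{N/2}-1$, and I will check that $\sqrt{N/2}-1\le \tfrac N2-\sqrt{N-1}$ for all $N\ge2$ by squaring or by direct inspection for small $N$.

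If the crude bound $d_i+d_j\le M+1$ turns out too weak at some intermediate value, the fallback is to apply Step 2 componentwise to the non-trivial components of $G$ and compare with a star on fewer vertices. Combining Steps 1--3 yields $\max_G F(G)=F(S_N)=R_{S_N}$, and hence $R_G\le R_{S_N}$ for every graph on $N$ vertices.
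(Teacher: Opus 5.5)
Your Steps 1 and 2 are correct, and Step 2 is the paper's argument for connected graphs: de Caen's bound, monotonicity of the resulting expression in $M$, and $M\ge N-1$.

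\textbf{The gap is in Step 3.} The endpoint inequality $\sqrt{N/2}-1\le \tfrac N2-\sqrt{N-1}$ fails for $N=3,4,5$. Numerically it reads $0.22>0.09$, $0.41>0.27$ and $0.58>0.50$. Indeed, for $N\le 5$ your $\varphi$ is nonincreasing on $[1,N-1]$, so its maximum is at $M=1$, not at $M=N-1$.

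A sharper degree estimate cannot fix this, because $\|d\|_2^2\le M(M+1)$ is attained at $M=1$. The graph $K_2\cup(N-2)K_1$ has $F=\sqrt{N/2}-1>F(S_N)$ for these $N$. Other counterexamples are $P_3\cup K_1$ at $N=4$ and $S_4\cup K_1$ at $N=5$. So your intermediate claim $\max_G F(G)=F(S_N)$ is false over all graphs for $N\in\{3,4,5\}$, and the first sentence of the theorem holds there only over connected graphs. Your argument as written is complete only for $N=2$ and $N\ge 6$.

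The cause is that $F$ carries the global Cauchy--Schwarz factor $\sqrt{\sum_k w_k}=\sqrt N$. Isolated vertices inflate this factor even though they support no negative mode.

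\textbf{What is needed for $M<N-1$.} Here $G$ is necessarily disconnected, and you need a bound on $R_G$ itself rather than on $F(G)$. Your fallback points the right way, but it fails if phrased through $F$, since $F(G)$ is not controlled by the $F(G_r)$ of its components. For example, $F(K_2)=0$ while $F(K_2\cup(N-2)K_1)>0$.

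Instead, write $\sum_{k\in K_-}w_k|\varepsilon_k|=\mathbf 1^\top A_-\mathbf 1$ with $A_-:=\sum_{k\in K_-}|\varepsilon_k|\,u_ku_k^\top$. This quantity is basis-independent and additive over the components $G_r$, and so is $\varepsilon_+$; isolated vertices contribute to neither. Hence
\[
R_G=\sum_r \bigl(\varepsilon_+(G_r)/\varepsilon_+(G)\bigr)\,R_{G_r}
\]
is a convex combination over the nontrivial components.

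Each nontrivial component is connected on $N_r\le N$ vertices. Lemma~\ref{lem:neg-sum-bound} together with your Step 2 then gives $R_{G_r}\le \tfrac{N_r}{2}-\sqrt{N_r-1}$. Since $x\mapsto x/2-\sqrt{x-1}$ is nondecreasing for $x\ge 2$, you get $R_G\le R_{S_N}$.

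This is presumably the ``component-wise'' algebra the paper omits. With it in place of Step 3, your proof of the second sentence of the theorem is complete for every $N$.
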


        \begin{proof}
            A sharp bound on the degree norm (see, e.g., standard spectral graph inequalities~\cite{de_caen_upper_1998}) states that for fixed $N,M$,
            \begin{equation}
                \|d\|_2^2\;\le\;M\Big(\frac{2M}{N-1}+N-2\Big).
            \end{equation}
            Hence
            \begin{equation}
                \frac{\|d\|_2\,\sqrt{N}}{2\sqrt{M}}\;-\;\sqrt{M}
                \;\le\;
                \sqrt{N}\,\frac{\sqrt{\tfrac{2M}{N-1}+\,N-2}}{2}\;-\;\sqrt{M},
            \end{equation}
            and the right-hand side is monotonically decreasing in $M$. The connectivity constraint implies $M\ge N-1$, so the maximum is achieved at $M=N-1$. A direct check shows that the star saturates the bound at $M=N-1$. Finally, among disconnected graphs with $\sum_r M_r\le N-1$ edges and $\sum_r N_r=N$ vertices, the analogous component-wise bound is strictly worse than the connected-star value (algebra omitted), so no disconnected graph can outperform the star.
        \end{proof}

        \begin{remark}
            At $h=0$, early-time power rewards two features simultaneously: (i) large \emph{magnitudes} of negative eigenvalues (to stay in the active sector), and (ii) large \emph{uniform overlaps} $w_k$ of the corresponding eigenvectors. The hub-and-spoke (star) concentrates connectivity into one mediator, producing (a) an extremal spectral radius and (b) an eigenvector structure that carries a large uniform component on the negative side. Both ingredients enter~\eqref{eq:P-h0}, explaining the optimality.
        \end{remark}

    \subsection{Generalization to a non-trivial local field: $h>0$}
    \label{subsec:h_non_trivial}

        In order to be able to physically interpret the system as being composed of $N$ local cells, each endowed with its own intrinsic dynamics, it is important to reinstate a non-zero local energy stored in individual cells. The finite-field regime $h>0$ is a smooth deformation of the zero-field analysis: we keep the eigenbasis of the interaction graph fixed, so all conventions remain unchanged. Turning on $h$ shifts mode frequencies and reweights the eigenvalue-weighted sums that control our power bounds.

        \begin{observation}
            The quantity $R_G(h)$ may be viewed as a weighted average in the following way
            \begin{equation}
                    \label{eq:convex-comb}
            \begin{aligned}
                R_G(h)&=\sum_{k\in K_-(h)} p_k(h)\,w_k, \\
                p_k(h)&:=\frac{E_k(h)}{\sum_{j\in K_-(h)}E_j(h)}\in(0,1],\\ \sum_{k\in K_-}p_k(h)&=1.
            \end{aligned}
            \end{equation}
        \end{observation}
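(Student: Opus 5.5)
The identity follows by rewriting the definition of $R_G(h)$ directly. The plan is to make explicit what $E_{\min}$ is, check that the denominator is nonzero, and then read off the three properties of the $p_k(h)$.

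First, recall the many-body minimum of the quadratic Hamiltonian $H_\textsc{b}^0=\sum_k E_k(h)\,\tilde c_k^\dagger\tilde c_k$ from~\eqref{eq:diag-Hb-Hint}. Since the $\tilde c_k$ satisfy the CAR, its eigenvalues are $\sum_{k\in S}E_k(h)$ over occupation subsets $S\subseteq\{1,\dots,N\}$. The minimum is attained by filling exactly the modes with negative energy, so
\begin{equation*}
E_{\min}=\sum_{j\in K_-(h)}E_j(h).
\end{equation*}
This is the definition already adopted above. Substituting it into the definition of $R_G(h)$ gives
\begin{equation*}
R_G(h)=\frac{\sum_{k\in K_-(h)}w_kE_k(h)}{\sum_{j\in K_-(h)}E_j(h)}=\sum_{k\in K_-(h)}p_k(h)\,w_k,
\end{equation*}
with $p_k(h)$ as in~\eqref{eq:convex-comb}. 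The normalization $\sum_{k\in K_-}p_k(h)=1$ then holds trivially, because numerator and denominator range over the same index set.

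Second, I check positivity and the upper bound. For $k\in K_-(h)$ we have $E_k(h)<0$ by definition. Hence the denominator is a sum of strictly negative numbers, so it is strictly negative, and each ratio $p_k(h)$ is strictly positive. Since all terms have the same sign, $|E_k(h)|\le\sum_{j\in K_-}|E_j(h)|$, which gives $p_k(h)\le 1$. Equality holds exactly when $K_-(h)$ is a singleton. Because every $w_k\ge0$, $R_G(h)$ is then a genuine convex combination of the weights $w_k$ over the active modes.

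The only real obstacle is ensuring that $K_-(h)\neq\varnothing$, so that $E_{\min}\neq0$ and the expression is well defined. I would argue this as follows.
\begin{itemize}
\item If $G$ has at least one edge, then $\operatorname{tr}A=0$ with $A\neq0$ forces $\varepsilon_{\min}<0$. Consequently $K_-(h)$ is nonempty precisely for $0\le h<-\tfrac{N}{M}\varepsilon_{\min}$.
\item For larger $h$ (or for the empty graph), $E_{\min}=0$. The ground state is then the battery vacuum, and~\eqref{eq:P-early-final} degenerates.
\end{itemize}
In that regime I would adopt the convention $R_G(h):=0$, i.e.\ an empty sum with no active modes and hence no early-time charging, and state the observation for the nondegenerate range of $h$.
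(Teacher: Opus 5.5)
Your proposal is correct and follows the paper, which leaves the observation unproved as an immediate consequence of the definitions: you substitute $E_{\min}=\sum_{j\in K_-(h)}E_j(h)$ into $R_G(h)$ and use $E_k(h)<0$ on $K_-(h)$ to get positivity, $p_k(h)\le 1$, and the normalization. Your caveat that the identity needs $K_-(h)\neq\varnothing$, i.e.\ $h<-\tfrac{N}{M}\varepsilon_{\min}$, is worth keeping; the paper states this assumption only in its appendix.
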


        Our special attention deserves the following quantity $w_\mathrm{min} := \sum_{k'} w_{k'}$, where $k'$ are such that $E_{k'}(h) = \min_{k} E_{k'}(h)$, and which is invariant under the unitary symmetry of the eigenspace. We devote our special attention to it because of two reasons: first, because it receives the largest weight $p_{k'}(h)$, and second, because $k'$ belong to $K_{-}(h)$ for the largest window of values of $h$. While we cannot prove that $R_G(h)$ is governed by $w_\mathrm{min}$ we are going to provide arguments consistent with such a statement and moreover pointing towards overall optimality of a star graph.

        \begin{observation}[Star architecture]\label{prop:star}
            For the star $S$ with $M=N-1$ and spectrum $\{\sqrt{N-1},-\sqrt{N-1},0,\dots,0\}$, the negative sector consists of the single eigenvalue $-\sqrt{N-1}$ for $0\le h<h_\star:=\frac{N}{\sqrt{N-1}}$. Hence
            \begin{equation}
            \label{eq:star-flat}
                R_{S}(h)\equiv w_\textrm{min}{(S)}=\frac{N}{2} - \sqrt{N-1}.
            \end{equation}
        \end{observation}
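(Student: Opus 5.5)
The plan is a direct computation: diagonalize the star adjacency matrix explicitly, identify the active sector $K_-(h)$, and then substitute into the convex-combination form \eqref{eq:convex-comb} of $R_G(h)$. Label the hub vertex $0$ and the leaves $1,\dots,N-1$, so that $A = e_0 \mathbf{1}_L^\top + \mathbf{1}_L e_0^\top$, where $\mathbf{1}_L$ is the indicator of the leaves. The matrix $A$ has rank $2$, so its spectrum is two nonzero eigenvalues plus $N-2$ zeros. I would proceed in three steps.

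\emph{Step 1: spectrum and eigenvectors.}
\begin{itemize}
\item Any vector $x$ supported on the leaves with $\sum_{i\ge1}x_i=0$ satisfies $Ax=0$. This gives an $(N-2)$-dimensional zero eigenspace.
\item On the complementary span of $\{e_0,\mathbf{1}_L\}$, the ansatz $u=(a,b,\dots,b)$ gives $Au=\big((N-1)b,a,\dots,a\big)$. Hence $\varepsilon^2=N-1$, i.e. $\varepsilon=\pm\sqrt{N-1}$.
\item The normalized eigenvectors are
\begin{equation*}
u_\pm=\frac{(\pm\sqrt{N-1},1,\dots,1)}{\sqrt{2(N-1)}} .
\end{equation*}
\end{itemize}

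\emph{Step 2: active sector.} With $M=N-1$ we have $E_k(h)=\frac{N}{N-1}\varepsilon_k+h$. This yields three mode energies:
\begin{itemize}
\item the negative mode has $E_-(h)=h-\frac{N}{\sqrt{N-1}}$, which is $<0$ exactly when $h<h_\star$;
\item the zero modes have $E=h\ge0$, so they never lie in $K_-(h)$ (at $h=0$ they lie in $K_+$ since the inequality is strict);
\item the positive mode has $E_+>0$.
\end{itemize}
Therefore $K_-(h)=\{-\}$ on $[0,h_\star)$, and this mode is also the unique minimizer of $E_k(h)$. The minimum energy $E_-(h)$ is simple, so $w_{\min}(S)=w_-$ and no basis ambiguity arises.

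\emph{Step 3: evaluate $R_S(h)$.} Since the active sector is a single mode, $p_-(h)=1$ in \eqref{eq:convex-comb}. Equivalently, $E_{\min}=E_-(h)$ cancels in the definition of $R_G(h)$. Hence $R_S(h)=w_-$, independent of $h$. Computing the overlap,
\begin{equation*}
v_-=\mathbf{1}^\top u_-=\frac{(N-1)-\sqrt{N-1}}{\sqrt{2(N-1)}}=\frac{\sqrt{N-1}-1}{\sqrt2},
\end{equation*}
so
\begin{equation*}
w_-=\frac{N}{2}-\sqrt{N-1},
\end{equation*}
which is \eqref{eq:star-flat}.

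As a consistency check, the zero modes are orthogonal to $\mathbf{1}$ and so carry $w=0$ in any basis, and $w_++w_-=N$ matches identity (iii) of Lemma~\ref{lem:neg-sum-bound}. The degenerate case $N=2$ (a single edge, with no zero modes) fits the same formulas.

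There is no real obstacle. The only points needing care are the boundary behaviour of the zero modes at $h=0$, which requires the strict inequality in the definition of $K_-$, and confirming that the minimal-energy eigenvalue is nondegenerate so that $w_{\min}$ is well defined.
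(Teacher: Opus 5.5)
Your proposal is correct and follows essentially the same route as the paper, which treats this as a direct computation: the explicit star spectrum leaves a single active negative mode on $[0,h_\star)$, so the weighted mean in \eqref{eq:convex-comb} collapses to $w_-$ independently of $h$. Your overlap $(\sqrt{N-1}-1)^2/2=\frac{N}{2}-\sqrt{N-1}$ is also exactly the $p=1$, $q=N-1$ case of Proposition~\ref{prop:Kpq-spectrum}.
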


        We would like to compare now $w_\textrm{min}$ for different graphs and show the optimality of the star. After performing exhaustive numerical check of graphs with $N\leq7$ and sample check of larger random graphs we arrived with the following claim, which however we were unable to prove. This is our primary evidence of optimality of the star graph also in the presence of a non-trivial local field $h>0$.

        \begin{conjecture}[Star optimality for the uniform overlap]\label{conjecture}
        Let $G$ be a non-empty graph on $N$ vertices. Its minimal energy uniform overlap $w_\mathrm{min}(G)$ is upper bounded by
            \begin{equation}
                w_\mathrm{min}(G) \leq w_\mathrm{min}(S) = \frac{N}{2} - \sqrt{N-1}.
            \end{equation}
        \end{conjecture}

        In the remainder of this subsection we present a case study of certain families of graphs and rigorously show that for all graphs belonging to these families $R_G(h)$ is upper bounded by $R_{S}(h)$. Our analysis is based on the following observation.

        \begin{observation}\label{obs: envelope bound}
            From the fact that $R_G(h)$ is a weighted mean it follows that for any graph $G$ and any $h>0$
            \begin{equation}
                R_G(h) \leq w_-(G) \equiv \max_{k\in K_-}( w_k).
            \end{equation}
            Moreover, the inequality is saturated e.g.\ by a star graph.
        \end{observation}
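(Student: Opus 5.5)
The bound follows directly from Observation~1, which writes $R_G(h)=\sum_{k\in K_-(h)}p_k(h)\,w_k$ as a convex combination. The plan is to check the weights, bound the average by its largest term, and then verify saturation on the star using Observation~\ref{prop:star}.

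\textbf{Step 1: the $p_k(h)$ are genuine convex weights.} For every $k\in K_-(h)$ we have $E_k(h)<0$ by definition of the active sector. Hence numerator and denominator of
\begin{equation*}
p_k(h)=\frac{E_k(h)}{\sum_{j\in K_-(h)}E_j(h)}
\end{equation*}
are both negative, so $p_k(h)\in(0,1]$ and $\sum_{k\in K_-(h)}p_k(h)=1$. The identification of $R_G(h)$ with this weighted mean uses $E_{\min}=\sum_{k\in K_-(h)}E_k(h)$, which is the definition of $E_{\min}$ given with \eqref{eq:P-early-final}.

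I will treat one degenerate case separately. If $K_-(h)$ is empty, which happens for large $h$, then $E_{\min}=0$ and both $R_G(h)$ and $w_-(G)$ are vacuous. I will adopt the convention $R_G(h)=0$ there, consistent with the vanishing sum in \eqref{eq:P-early-final}, and the inequality is then trivial.

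\textbf{Step 2: bound the average by its maximum.} Since $w_k=v_k^2\ge 0$, a weighted mean cannot exceed its largest entry:
\begin{equation*}
R_G(h)=\sum_{k\in K_-(h)}p_k(h)\,w_k\le \Big(\max_{k\in K_-(h)}w_k\Big)\sum_{k\in K_-(h)}p_k(h)=w_-(G).
\end{equation*}

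\textbf{Step 3: basis independence.} This needs some care because $w_k$ depends on the choice of orthonormal eigenbasis inside a degenerate eigenspace. Two points settle it.
\begin{itemize}
\item $R_G(h)$ itself is basis independent. Modes sharing an energy $E$ receive the same $p_k$, and their weights sum to $\|P_E\mathbf{1}\|^2$, where $P_E$ is the projector onto the eigenspace.
\item For $w_-(G)$, I will read the maximum as taken over any fixed admissible basis. The inequality holds for every such choice, so it also holds for the infimum over bases. Alternatively, one can aggregate eigenspaces, in which case the bound applies with $w_E=\|P_E\mathbf{1}\|^2$.
\end{itemize}

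\textbf{Step 4: saturation on the star.} By Observation~\ref{prop:star}, for $0\le h<h_\star$ the active sector of $S_N$ is the single simple eigenvalue $-\sqrt{N-1}$. The weighted mean therefore has one term with $p=1$, which gives $R_S(h)=w_-(S)=w_{\min}(S)$.

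To confirm the value $N/2-\sqrt{N-1}$, I will compute the eigenvector explicitly. It has hub component $-1/\sqrt2$ and leaf components $1/\sqrt{2(N-1)}$. Its overlap with $\mathbf{1}$ is
\begin{equation*}
v=\frac{\sqrt{N-1}-1}{\sqrt2},
\end{equation*}
so that
\begin{equation*}
w=v^2=\frac{(\sqrt{N-1}-1)^2}{2}=\frac{N}{2}-\sqrt{N-1}.
\end{equation*}

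I expect no real obstacle here. The only delicate points are the sign convention that keeps the $p_k$ positive, the empty-sector edge case, and the degenerate-eigenspace bookkeeping in Step~3.
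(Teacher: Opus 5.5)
Your proposal is correct and follows the paper's own argument. Observation~1 makes $R_G(h)$ a convex combination of the $w_k$ over $K_-(h)$, so it is at most the largest of them. For the star with $0\le h<h_\star$, the active sector is the single simple mode $-\sqrt{N-1}$, so the mean equals its weight $N/2-\sqrt{N-1}$.

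One small tightening concerns which set $K_-$ denotes. For $h\ge 0$, $E_k(h)<0$ forces $\varepsilon_k<0$, so $K_-(h)\subseteq K_-(0)$. Your bound therefore also gives the $h$-independent envelope $\sup_h R_G(h)\le\max_{\varepsilon_k<0}w_k$ that the paper uses later. Under that reading, $K_-(0)$ is nonempty because $A$ is traceless and nonzero. This is what makes your empty-sector case genuinely trivial, rather than a comparison with a maximum over an empty set.
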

        
        \subsubsection{Expanders and (almost-)regular expanders are not serious competitors}
        \label{sec:expanders-no-star}

            One may possibly worry in general about expanders. Expanders are sparse yet highly connected graphs: random walks mix rapidly and small vertex sets have many outgoing edges. Heuristically, this means eigenvectors tend to be \emph{delocalized}. Since $R_G(h)$ rewards large $w_k$ in the negative sector, one might suspect that expander families, by being both delocalized and capable of having sizeable negative eigenvalues, could challenge or even exceed the star benchmark. We show below that this does \emph{not} occur. For regular expanders the relevant overlap is \emph{exactly zero}; for almost-regular expanders we obtain a sharp, explicit bound in terms of degree variance that remains far below the star for any reasonable notion of ``near-regularity.''

            \begin{definition}[Regular graphs and spectral expanders]
                A graph $G$ on $N$ vertices is \emph{$d$-regular} if every vertex has degree $d$. Its adjacency spectrum satisfies $d=\epsilon_1\ge \epsilon_2\ge \dots\ge \epsilon_N\ge -d$. A $d$-regular graph is a \emph{spectral expander} if $\max\{|\epsilon_2|,|\epsilon_N|\}\le \epsilon_\star$ with $\epsilon_\star\ll d$; Ramanujan graphs achieve $\epsilon_\star\le 2\sqrt{d-1}$.
            \end{definition}

            \begin{proposition}[Regular graphs have zero uniform overlap in nontrivial modes]\label{prop:reg-zero}
                Let $G$ be connected and $d$-regular with adjacency $A$. Then $A\one=d\,\one$, and for any eigenvector $u_k$ with eigenvalue $\epsilon_k\ne d$ one has $\one^\top u_k=0$. Consequently,
                \begin{equation}
                w_-(G)=0.
                \end{equation}
            \end{proposition}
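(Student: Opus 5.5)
The argument uses only the symmetry of $A$ and the orthogonality of eigenvectors belonging to distinct eigenvalues. First, $d$-regularity means every row of $A$ sums to $d$, so $(A\one)_i=\sum_j A_{ij}=d_i=d$; hence $A\one=d\,\one$, and $\one$ is an eigenvector with eigenvalue $d$. Next take any eigenvector $u_k$ with $Au_k=\epsilon_k u_k$ and $\epsilon_k\neq d$. Since $A=A^\top$,
\begin{equation*}
d\,\one^\top u_k=(A\one)^\top u_k=\one^\top A u_k=\epsilon_k\,\one^\top u_k,
\end{equation*}
so $(d-\epsilon_k)\,\one^\top u_k=0$, and therefore $v_k=\one^\top u_k=0$ and $w_k=0$. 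This identity holds for every eigenvector outside the $d$-eigenspace, regardless of the orthonormal basis chosen inside degenerate eigenspaces.

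Connectedness is used only to identify the $d$-eigenspace. By Perron--Frobenius for a connected graph, the eigenvalue $d$ is simple, so that eigenspace is exactly $\mathrm{span}(\one)$. Every other basis vector $u_k$ therefore has $\epsilon_k\neq d$ and, by the previous step, $w_k=0$. Without connectedness the conclusion for the non-top modes would still hold. The only difference is that several basis vectors could share the eigenvalue $d$ (for instance indicator vectors of components), and then the non-uniform ones would no longer be forced to have zero overlap.

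It remains to see that the mode $\one/\sqrt N$, which carries the whole weight $w=N$, never enters $K_-(h)$. For $h\ge 0$ its energy is $E(h)=\frac{N}{M}d+h\ge 0$, so it is not active. Here $M\ge 1$ is needed for $\frac{N}{M}$ to be defined; if $G$ has an edge then $d\ge1$ and in fact $E(h)>0$. Every $k\in K_-(h)$ is thus a non-top mode and has $w_k=0$, which gives $w_-(G)=\max_{k\in K_-}w_k=0$. If $K_-(h)$ is empty, we use the convention that the maximum over the empty set is zero.

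The only step that needs care is this last identification of active modes: one must check that the uniform mode sits at a nonnegative energy and that degeneracies cannot reintroduce overlap. Once regularity and connectedness are in place, both checks are immediate.
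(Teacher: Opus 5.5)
Your proof is correct and is exactly the ``direct computation'' the paper leaves implicit: the symmetry of $A$ together with $A\one=d\,\one$ gives $(d-\epsilon_k)\,\one^\top u_k=0$ (the same identity the paper's degree-variance lemma uses, specialized to equal degrees), and the uniform mode has energy $\frac{N}{M}d+h>0$, so it never enters $K_-(h)$. You are also right that connectedness is not needed for $w_-(G)=0$, since every eigenvector in the $d$-eigenspace has positive energy whatever its overlap.
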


            \begin{proof}
                Follows from a direct computation.
            \end{proof}

            \begin{remark}
                For $d$-regular \emph{bipartite} graphs one has the eigenvalue $-d$ with an eigenvector that is constant $+1$ on one side and $-1$ on the other; because the sides have equal cardinality in the $d$-regular case, this eigenvector also has zero sum and is orthogonal to $\one$. We will soon study in greater details the case of bipartite graphs.
            \end{remark}

            The preceding proposition already rules out \emph{all} $d$-regular expanders.

            We now quantify how far \emph{almost-regular} expanders remain from the star. One could indeed imagine that \emph{impurities} in the connectivity pattern of the battery could lead to an enhancement of the charging power property of the system. The following result rules out this possibility.

            \begin{lemma}[Degree-variance control of negative-mode uniform overlap]\label{thm:deg-variance}
                Let $d_i$ denote the degree of vertex $i$ and $\bar d:=\frac{1}{N}\sum_i d_i$ the average degree. Define the (relative) root mean squared degree fluctuation
                \begin{equation}
                    \delta_{\mathrm{rms}}=\frac{\Big(\frac{1}{N}\sum_{i=1}^N (d_i-\bar d)^2\Big)^{\!1/2}}{\bar d}.
                \end{equation}
                For any simple graph $G$ with adjacency $A$ and average degree $\bar d>0$,
                \begin{equation}
                    w_-(G) \;\le\;\frac{\sum_{i=1}^N (d_i-\bar d)^2}{\bar d^2}=N\,\delta_{\mathrm{rms}}^2.
                \end{equation}
            \end{lemma}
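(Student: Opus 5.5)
The plan is to use only the eigenvalue equation together with the two ways of evaluating $\one^\top A u_k$; this gives a one-line linear relation between the uniform overlap $v_k=\one^\top u_k$ and the degree fluctuations. Fix $k\in K_-$. Since $E_k(h)=\frac{N}{M}\varepsilon_k+h<0$ and $h\ge 0$, we have $\varepsilon_k<0$. Because $A$ is symmetric, we can compute $\one^\top A u_k$ in two ways:
\begin{equation*}
    \one^\top A u_k=\varepsilon_k\,\one^\top u_k=\varepsilon_k v_k,
    \qquad
    \one^\top A u_k=(A\one)^\top u_k=d^\top u_k .
\end{equation*}
Hence $d^\top u_k=\varepsilon_k v_k$.

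Next I decompose the degree vector as $d=\bar d\,\one+\delta$ with $\delta_i:=d_i-\bar d$. Substituting gives $\bar d\,v_k+\delta^\top u_k=\varepsilon_k v_k$, that is,
\begin{equation*}
    (\bar d-\varepsilon_k)\,v_k=-\,\delta^\top u_k .
\end{equation*}
This is where negativity of the mode enters. Since $\varepsilon_k<0$, we have $\bar d-\varepsilon_k\ge \bar d>0$. Cauchy--Schwarz with $\|u_k\|_2=1$ then yields $|v_k|\le \|\delta\|_2/(\bar d-\varepsilon_k)\le \|\delta\|_2/\bar d$. Squaring gives $w_k\le \sum_i (d_i-\bar d)^2/\bar d^2$. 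The right-hand side does not depend on $k$, so taking the maximum over $K_-$ gives the bound on $w_-(G)$. Rewriting $\sum_i(d_i-\bar d)^2=N\bar d^2\delta_{\mathrm{rms}}^2$ gives the second form.

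The only point needing care is that $w_k$ is basis-dependent when an eigenvalue is degenerate. However, the argument above holds for \emph{any} unit vector in the eigenspace of $\varepsilon_k$. In particular it holds for $u=P_k\one/\|P_k\one\|$, where $P_k$ is the spectral projector, which shows that even the basis-invariant total overlap $\|P_k\one\|^2$ of a negative eigenspace obeys the same bound. This also covers $w_{\min}$. I do not expect a genuine obstacle. The proof recovers Proposition~\ref{prop:reg-zero} as the special case $\delta=0$. One can also keep the sharper intermediate bound $w_k\le \|\delta\|_2^2/(\bar d+|\varepsilon_k|)^2$ if needed later.
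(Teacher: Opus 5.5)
Your proof is correct and follows the paper's argument: you take the $u_k$-component of $A\one=d=\bar d\,\one+(d-\bar d\,\one)$, solve for $v_k$, use $\bar d-\varepsilon_k\ge\bar d$ for negative modes, and apply Cauchy--Schwarz. Your remark on degenerate eigenspaces goes slightly beyond the paper: applying the same argument to $P_k\one/\|P_k\one\|$ gives a basis-invariant form of the bound, which the paper does not state.
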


            \begin{proof}
                Write an orthonormal eigenbasis $A u_k=\epsilon_k u_k$ and expand $\one=\sum_k \alpha_k u_k$, with $\alpha_k=\one^\top u_k$. Since $A\one=d$, decompose $d=\bar d\,\one + (d-\bar d\,\one)$ and take the $u_k$-component:
                \begin{equation}
                    \epsilon_k \alpha_k = u_k^\top d = \bar d\,\alpha_k + u_k^\top(d-\bar d\,\one),
                \end{equation}
                hence
                \begin{equation}
                    \alpha_k = \frac{u_k^\top(d-\bar d\,\one)}{\epsilon_k-\bar d}.
                \end{equation}
                If $\epsilon_k<0$, then $|\epsilon_k-\bar d|\ge \bar d$, so by Cauchy--Schwarz
                \begin{equation}
                    |\alpha_k|\;\le\;\frac{\|\,d-\bar d\,\one\,\|_2}{\bar d}
                    =\frac{\Big(\sum_i(d_i-\bar d)^2\Big)^{1/2}}{\bar d}.
                \end{equation}
                Squaring and maximizing over $K_-$ yields the claimed bound.
            \end{proof}

            \begin{remark}
                 We call $G$ \emph{almost-regular} if $\delta_{\mathrm{rms}}\ll 1$.
            \end{remark}

            \begin{corollary}[No almost-regular expander beats the star]
            \label{cor:almost-regular}
                If $\delta_{\mathrm{rms}}\le \delta_0<1/\sqrt{2}$ uniformly along a family $\{G_N\}$ (in particular for any fixed-degree almost-regular expander family), then for all sufficiently large $N$
                \begin{equation}
                    \frac{w_-(G)}{w_-(S)}\;\le\;2\,\delta_0^2\,(1+o(1))\;<\;1.
                \end{equation}
                If the family is exactly $d$-regular, then $\delta_{\mathrm{rms}}=0$ and $w_-(G)=0$ for all $N$ by Proposition~\ref{prop:reg-zero}.
            \end{corollary}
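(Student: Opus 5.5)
The corollary follows by combining the uniform upper bound of Lemma~\ref{thm:deg-variance} on the numerator with the exact value of the denominator from Observation~\ref{prop:star}. The plan is to bound $w_-(G)$ and compute $w_-(S)$ separately, then compare their ratio asymptotically in $N$.

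\emph{Numerator.} For each $G_N$ in the family the average degree satisfies $\bar d>0$, since the graphs are assumed non-empty. Lemma~\ref{thm:deg-variance} then gives
\begin{equation}
w_-(G_N)\le N\,\delta_{\mathrm{rms}}(G_N)^2\le N\,\delta_0^2 .
\end{equation}
This step needs nothing beyond the lemma and the uniform hypothesis $\delta_{\mathrm{rms}}\le\delta_0$.

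\emph{Denominator.} For the star, the spectrum is $\{\sqrt{N-1},-\sqrt{N-1},0,\dots,0\}$. The only negative eigenvalue is $-\sqrt{N-1}$, and it is simple, so $w_-(S)$ equals its uniform overlap. By Observation~\ref{prop:star} this overlap is $N/2-\sqrt{N-1}$. If one wants to check it directly, the eigenvector is $\tfrac{1}{\sqrt2}\big(1,-\tfrac{1}{\sqrt{N-1}},\dots,-\tfrac{1}{\sqrt{N-1}}\big)$, and then $v^2=\tfrac12\big(1-\sqrt{N-1}\big)^2=\tfrac N2-\sqrt{N-1}$. This quantity is positive for $N\ge 3$.

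\emph{Comparison.} Dividing the two estimates gives
\begin{equation}
\frac{w_-(G_N)}{w_-(S_N)}\le \frac{N\delta_0^2}{N/2-\sqrt{N-1}}=\frac{2\delta_0^2}{1-2\sqrt{N-1}/N}=2\delta_0^2\,(1+o(1)),
\end{equation}
because $2\sqrt{N-1}/N=O(N^{-1/2})\to0$. The hypothesis $\delta_0<1/\sqrt2$ gives $2\delta_0^2<1$. Hence for all $N$ large enough that $\big(1-2\sqrt{N-1}/N\big)^{-1}<1/(2\delta_0^2)$, the ratio is strictly below $1$.

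\emph{Regular case.} For exactly $d$-regular families, Proposition~\ref{prop:reg-zero} gives $w_-(G)=0$ for every $N$. This is consistent with the general bound, since $\delta_{\mathrm{rms}}=0$ there.

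There is no real obstacle in this argument. The only points needing care are that $\bar d>0$, so Lemma~\ref{thm:deg-variance} applies, and that the claim is asymptotic: for small $N$ the factor $(1-2\sqrt{N-1}/N)^{-1}$ may be large, so "sufficiently large $N$" must be stated explicitly. Fixed-degree almost-regular expanders fall under the hypothesis because their degree fluctuations are bounded relative to $\bar d$; such a family satisfies the corollary whenever that relative bound $\delta_0$ is below $1/\sqrt2$.
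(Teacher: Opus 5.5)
Your proposal is correct and follows the argument the paper intends; the paper states the corollary without a written proof. You bound the numerator by Lemma~\ref{thm:deg-variance} as $w_-(G_N)\le N\delta_0^2$, insert the exact star value $w_-(S)=N/2-\sqrt{N-1}$ from Observation~\ref{prop:star}, and use $2\sqrt{N-1}/N\to 0$; your remarks on needing $\bar d>0$ and on the statement holding only for large $N$ are appropriate.
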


            \begin{remark}
                Expanders are homogeneous: the ``flat'' direction $\one$ is confined to the \emph{top positive} eigenmode, while all nontrivial modes---including negative ones---are orthogonal to~$\one$. Almost-regularity preserves this picture up to a quantitative factor controlled by degree variance. Therefore, despite their excellent global connectivity and delocalization, \mbox{(almost-)regular} expanders cannot approach, let alone surpass, the star benchmark in the early-time behaviour.
            \end{remark}

        \subsubsection{Complete-bipartite family $K_{p,q}$}
        \label{sec:Kpq}

        As mentioned, we require large uniform overlap $w_k$ in the negative mode sector. There is a natural connection between the size of $w_k$ and negativity of associated eigenvalue. It can be understood more concretely by looking at the sign pattern of eigenvectors.  Though it can appear as a trivial step, splitting the quadratic form into within-part and cross-part contributions reveals how negativity arises from edges bridging opposite signs. This structural view highlights that nearly bipartite graphs are natural candidates for producing negative eigenvectors with nontrivial uniform overlap.

            \begin{lemma}[Sign-split identity]
            \label{lem:signsplit}
                For any $u\in\RR^N$, writing $S^+:=\{i: u_i\ge 0\}$ and $S^-:=\{i: u_i<0\}$, one has
                \begin{equation}
                \label{eq:signsplit}
                \begin{aligned}
                    u^\top A u &= 2\sum_{\{i,j\}\in E(S^+)} u_i u_j\ +2\sum_{\{i,j\}\in E(S^-)} u_i u_j\\
                    &-2\sum_{\{i,j\}\in E(S^+,S^-)} \abs{u_i}\abs{u_j},
                \end{aligned}
                \end{equation}
                where $E(S^+)$ and $E(S^-)$ denote edges internal to $S^+$ and $S^-$, and $E(S^+,S^-)$ the set of cross edges.
            \end{lemma}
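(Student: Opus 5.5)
The plan is to expand the quadratic form edge by edge and then sort the edges according to the sign pattern of $u$. Since $A$ is symmetric with zero diagonal and entries in $\{0,1\}$, we first write
\begin{equation*}
    u^\top A u=\sum_{i,j}A_{ij}u_iu_j = 2\sum_{\{i,j\}\in E} u_i u_j,
\end{equation*}
where each unordered edge appears twice in the double sum, and loops are absent because the diagonal vanishes. Next, since $V=S^+\sqcup S^-$ is a partition, every edge $\{i,j\}\in E$ falls into exactly one of three disjoint classes. Either both endpoints lie in $S^+$, or both lie in $S^-$, or the edge has one endpoint in each set. This gives the disjoint decomposition $E=E(S^+)\sqcup E(S^-)\sqcup E(S^+,S^-)$, and we split the edge sum accordingly into three pieces.

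The first two pieces are carried over unchanged and give the within-part terms. For a cross edge, take $i\in S^+$ and $j\in S^-$ without loss of generality, since the summand is symmetric. Then $u_i\ge 0$ and $u_j<0$, so $u_i=\abs{u_i}$ and $u_j=-\abs{u_j}$. Hence $u_iu_j=-\abs{u_i}\abs{u_j}$, which produces the third term with its minus sign. Combining the three contributions yields \eqref{eq:signsplit}.

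It is worth noting as a byproduct that the within-part terms are themselves nonnegative. On $S^+$ both entries are $\ge 0$, and on $S^-$ the product of two negatives is positive. So negativity of $u^\top A u$ can only come from cross edges, which is the structural point the paper wants.

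There is no real obstacle here. The only care needed is the convention for vertices with $u_i=0$, which the statement puts in $S^+$. Any edge touching such a vertex contributes zero in whichever class it lands, so the assignment is harmless. Beyond that, one only has to keep the factor $2$ consistent with summing over unordered edges.
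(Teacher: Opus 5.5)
Your proof is correct and follows the paper's argument: you rewrite $u^\top A u$ as $2\sum_{\{i,j\}\in E}u_iu_j$, split $E$ into the three disjoint edge classes, and use $u_iu_j=-\abs{u_i}\abs{u_j}$ on cross edges. Your treatment of vertices with $u_i=0$ is also slightly more careful than the paper's wording, which says the cross-edge product ``has negative sign'' even though it can vanish.
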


            \begin{proof}
                Because $A$ is symmetric with zero diagonal, $u^\top A u = 2\sum_{\{i,j\}\in E} u_i u_j$. Split the edge set into the three disjoint families and note that on cross edges the product $u_i u_j$ has negative sign, so $u_i u_j = -\abs{u_i}\abs{u_j}$.
            \end{proof}

            \begin{remark}
                Within-side edges \emph{help} positivity; cross edges \emph{help} negativity. Thus a negative eigenvector with large uniform sum prefers graphs that look nearly bipartite \emph{relative to its sign pattern}. This nudges us toward complete bipartites as the model family to analyze exactly.
            \end{remark}

            Motivated by the sign-based picture, we now analyze the cleanest bipartite examples: complete bipartite graphs. These allow exact computation of spectra and overlaps, providing a controlled laboratory for testing the intuition. Within this family, we find that the uniform overlap is maximized at the most unbalanced partition—namely, the star. We compute the negative eigenpair and uniform overlap for $K_{p,q}$ and identify the star as the maximizer within this family.

            \begin{definition}[Complete bipartite graph]
                Let $K_{p,q}$ denote the bipartite graph on disjoint parts $U,V$ with $|U|=p$, $|V|=q$, and all $pq$ cross edges present, no within-part edges.
            \end{definition}

            \begin{proposition}[Spectrum and negative eigenvector of $K_{p,q}$]
            \label{prop:Kpq-spectrum}
                The adjacency matrix of $K_{p,q}$ has eigenvalues $\pm\sqrt{pq}$ (simple) and $0$ (with multiplicity $N-2$). A unit negative eigenvector is
                \begin{equation}
                \label{eq:Kpq-neg}
                    u_-=\frac{1}{\sqrt2}\left(\frac{\one_U}{\sqrt p}-\frac{\one_V}{\sqrt q}\right),\qquad \epsilon_-=-\sqrt{pq}.
                \end{equation}
                Moreover
                \begin{equation}
                \label{eq:Kpq-overlap}
                    \big(\one^\top u_-\big)^2=\frac{(\sqrt p-\sqrt q)^2}{2}.
                \end{equation}
            \end{proposition}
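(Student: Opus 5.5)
The proof proceeds by exploiting the block structure of the adjacency matrix. Ordering vertices so that $U$ comes first, we have
\begin{equation*}
A=\begin{pmatrix}0 & J_{p\times q}\\ J_{q\times p} & 0\end{pmatrix},
\end{equation*}
where $J$ is the all-ones matrix. Thus $A\one_U=p\,\one_V$ and $A\one_V=q\,\one_U$, so the two-dimensional subspace $W=\mathrm{span}\{\one_U,\one_V\}$ is invariant under $A$. The spectrum then splits into the spectrum of $A$ restricted to $W$ and the spectrum on $W^\perp$.

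On $W^\perp$, take any $x$ with $\one_U^\top x=0$ and $\one_V^\top x=0$. Each row of $A$ is either $(0,\one_V^\top)$ or $(\one_U^\top,0)$, so $Ax=0$. Since $\dim W^\perp=N-2$, the eigenvalue $0$ has multiplicity at least $N-2$.

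On $W$, we look for eigenvectors of the form $a\one_U+b\one_V$. Applying $A$ gives $bq\,\one_U+ap\,\one_V$, so the restriction is represented by the $2\times 2$ matrix $\begin{pmatrix}0&q\\p&0\end{pmatrix}$, whose eigenvalues are $\pm\sqrt{pq}$. Both are nonzero since $p,q\ge1$, which shows they are simple and that $0$ has multiplicity exactly $N-2$.

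For $\epsilon=-\sqrt{pq}$, the eigen-equation $bq=-\sqrt{pq}\,a$ gives $b/a=-\sqrt{p}/\sqrt{q}$. Taking $a=1/\sqrt{2p}$ yields $b=-1/\sqrt{2q}$, which is exactly the vector in \eqref{eq:Kpq-neg}. Its norm is $\|u_-\|^2=p\cdot\frac{1}{2p}+q\cdot\frac{1}{2q}=1$. A direct application of $A$ confirms $Au_-=-\sqrt{pq}\,u_-$.

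For the overlap, $\one^\top u_-=\frac{1}{\sqrt2}\left(\frac{p}{\sqrt p}-\frac{q}{\sqrt q}\right)=\frac{\sqrt p-\sqrt q}{\sqrt2}$, and squaring gives \eqref{eq:Kpq-overlap}. Because the eigenvalue is simple, this overlap is basis-independent, so it equals $w_\mathrm{min}$ for the negative sector.

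No step is a real obstacle. The only point needing care is the multiplicity and simplicity argument: the $W^\perp$ count must be combined with the two nonzero eigenvalues on $W$ to conclude that $\pm\sqrt{pq}$ are simple, and therefore that $w_-$ is well defined independently of the choice of basis. As a sanity check, the star case $p=1$, $q=N-1$ gives $(1-\sqrt{N-1})^2/2=\frac N2-\sqrt{N-1}$, matching \eqref{eq:star-flat}.
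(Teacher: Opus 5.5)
Your proof is correct and is the elementary computation the paper has in mind; the paper itself only says that the proof is elementary. The invariant splitting into $W=\mathrm{span}\{\one_U,\one_V\}$ and $W^\perp$, the $2\times 2$ reduction giving $\pm\sqrt{pq}$, the multiplicity count, and the overlap computation are all sound, and your simplicity argument correctly shows that the overlap is basis-independent.
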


            \begin{proof}
                The proof is elementary.
            \end{proof}

            \begin{remark}
                In the complete bipartite world, the optimal negative eigenvector is \emph{exactly} constant (with opposite signs) on the two parts. Its uniform overlap depends only on $p$ and $q$, and grows as the parts become more unbalanced. The most unbalanced partition, $p=1$ or $q=1$, is the star; it maximizes the overlap within this family.
            \end{remark}

            \begin{corollary}[The star as a maximizer]
            \label{cor:star-max}
                For fixed $N$, the map $p\mapsto (\sqrt p-\sqrt{N-p})^2$ is maximized at $p=1$ or $p=N-1$. Therefore among all complete bipartites on $N$ vertices, the star $K_{1,N-1}$ uniquely maximizes the negative-mode uniform overlap.
            \end{corollary}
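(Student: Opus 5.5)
The plan is to reduce the statement to a one-variable optimisation using Proposition~\ref{prop:Kpq-spectrum}. By \eqref{eq:Kpq-overlap}, the negative-mode uniform overlap of $K_{p,N-p}$ is
\[
\tfrac12\bigl(\sqrt p-\sqrt{N-p}\bigr)^2 .
\]
Here $p$ ranges over $1\le p\le N-1$, since both parts must be nonempty for $K_{p,q}$ to be a genuine complete bipartite graph with a negative eigenvalue. Because the negative eigenvalue $-\sqrt{pq}$ is simple, there is no basis ambiguity: this quantity is simultaneously $w_-$ and $w_{\min}$ for $K_{p,N-p}$. Note also that $K_{p,N-p}\cong K_{N-p,p}$, so the problem is symmetric under $p\leftrightarrow N-p$ and I may restrict to $p\le N/2$.

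Next I expand $f(p):=(\sqrt p-\sqrt{N-p})^2 = N-2\sqrt{p(N-p)}$. Maximising $f$ is therefore equivalent to minimising $g(p)=p(N-p)$ over the integers $1\le p\le N-1$. Since $g$ is a concave parabola symmetric about $N/2$, on the integer range its minimum sits at the endpoints $p=1$ and $p=N-1$. There it takes the value $N-1$.

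For uniqueness, for $2\le p\le N-2$ I have
\[
g(p)-g(1)=(p-1)(N-1-p)>0 .
\]
So the maximum is strict, and it is attained only by $K_{1,N-1}\cong K_{N-1,1}$, the star $S_N$. The maximal value is $\tfrac12 f(1)=\tfrac N2-\sqrt{N-1}$, which is consistent with Observation~\ref{prop:star}.

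There is no real obstacle here. The only points needing care are to exclude $p=0$ (the empty graph), to identify $p$ and $N-p$ as isomorphic graphs so that ``unique'' means up to isomorphism, and to handle small $N$: for $N\le3$ every $K_{p,q}$ on $N$ vertices is already a star, so the claim holds trivially.
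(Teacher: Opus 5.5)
Your argument is correct. It reaches the conclusion by a slightly different route than the paper.

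\textbf{The paper's route.} It extends $g(p)=(\sqrt p-\sqrt{N-p})^2$ to real $p\in(0,N)$ and differentiates. It then combines a monotonicity claim on $[1,\lfloor N/2\rfloor]$ with the symmetry $p\leftrightarrow N-p$.

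\textbf{Your route.} You use the identity $(\sqrt p-\sqrt{N-p})^2=N-2\sqrt{p(N-p)}$ to reduce the problem to minimising the concave quadratic $p(N-p)$ over the integers $1\le p\le N-1$. You get strictness from the explicit factorisation $p(N-p)-(N-1)=(p-1)(N-1-p)$. This is calculus-free, and it makes the uniqueness claim fully explicit, up to the isomorphism $K_{1,N-1}\cong K_{N-1,1}$, which you rightly flag.

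\textbf{A slip in the paper's version that you avoid.} The correct derivative is $g'(p)=(2p-N)/\sqrt{p(N-p)}$. This is negative for $p<N/2$, so $g$ \emph{decreases} on $[1,\lfloor N/2\rfloor]$. The paper writes a nonnegative expression and says $g$ increases there. Taken literally, that would put the maximum at the balanced partition rather than at the star. Once the sign is fixed, the paper's argument yields the same strict conclusion as yours.

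Both versions show that the overlap grows with the imbalance $|p-N/2|$, since $p(N-p)=N^2/4-(p-N/2)^2$. Your version makes this visible without any differentiation.
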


            \begin{proof}
                Consider $g(p):=(\sqrt p-\sqrt{N-p})^2$ for $p\in[1,N-1]\cap\NN$. Extending to real $p\in(0,N)$, $g'(p)=\frac{(\sqrt{N-p}-\sqrt p)^2}{2\sqrt p\,\sqrt{N-p}(\sqrt p+\sqrt{N-p})}\ge 0$ for $p\le N/2$, so $g$ increases on $[1,\lfloor N/2\rfloor]$ and is symmetric by $p\leftrightarrow N-p$. Hence maxima at $p=1$ or $N-1$.
            \end{proof}

            \begin{remark}
                This is the formal version of a simple picture: with one hub and $N-1$ leaves, the negative eigenvector is as ``polarized'' as possible while still having a large positive mass on many vertices. That polarization furnishes a large negative Rayleigh and simultaneously a large uniform sum.
            \end{remark}

\section{Numerical analysis}
\label{sec:num}

        \subsection{Evidence for the absence of strong challengers}

        The analytical results of Sec.~\ref{subsec:h_non_trivial} reduce the maximization over $h$ to a purely spectral
        envelope: on any $h$-window with fixed active set, the early-time power ratio is a convex
        combination of the weights $w_k$ and satisfies
        $\sup_h R_G(h)\le w_-(G) \equiv \max_{\varepsilon_k<0} w_k$, with an equality for a star graph  (Obs.~\ref{obs: envelope bound}). Thus, any architecture that ever outperforms the star must realize a \emph{negative} eigenvector with even larger uniform overlap than the star’s benchmark $w_{-}(S)$ (Eq.~\ref{eq:star-flat}).

        We probe here a broad set of sparse and heterogeneous topologies across sizes
        $N\in\{10,15,20,30,40,50,60\}$. For each $N$ and model we generate
        $500$ independent samples:
        (i) Erd\H{o}s--R\'enyi $G(N,p)$ with $p\in\{0.10,0.20\}$,
        (ii) uniform random trees (Pr\"ufer sequences),
        (iii) Barab\'asi--Albert preferential attachment with $m\in\{2,3\}$,
        (iv) an unbalanced stochastic block model (two parts of sizes $\lfloor N/4\rfloor$ and $N-\lfloor N/4\rfloor$,
        within-part edge probability $0.05$ and cross-part probability $0.9$). We refer the reader to Appendix~\ref{sec:random-graph-ensembles} for more details.
        For each sampled graph $G$ we compute the adjacency spectrum and the overlaps $w_k$.

        We report the envelope ratio
        \begin{equation}\label{eq:ratio-metric}
        \mathrm{ratio}(G):=\frac{w_-(G)}
        {\displaystyle w_{-}{(S)}}
        ,
        \end{equation}
        with
        \begin{equation}
            w_{-}{(S)}=\frac{\big(\sqrt{N-1}-1\big)^2}{2},
        \end{equation}
        so that any graph $G$ with $\mathrm{ratio}(G)>1$ would be a challenger for the star's optimality.

\begin{figure}[t]
            \centering
            \includegraphics[width=0.47\textwidth]{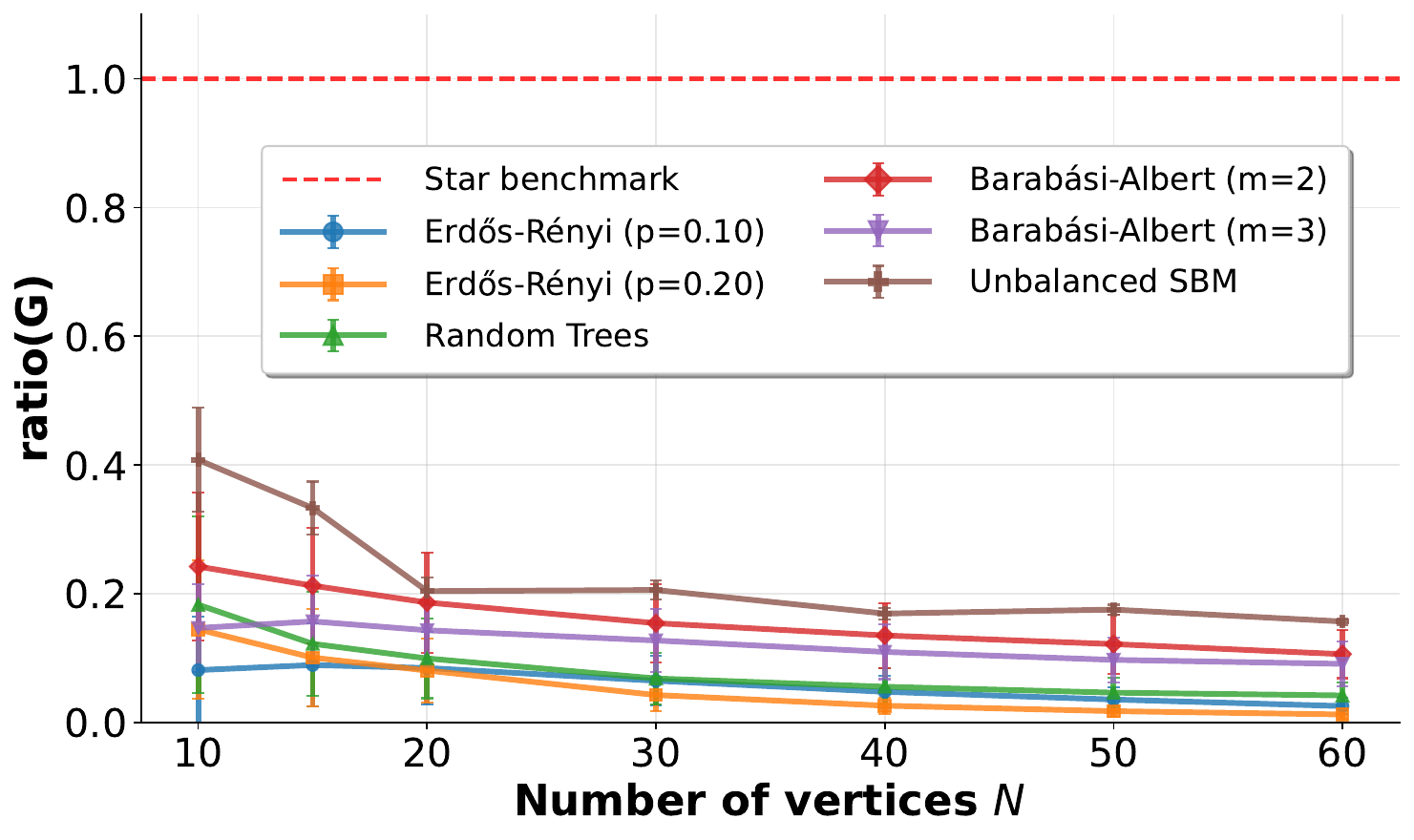}
            \caption{\textbf{Spectral envelope ratios for random graph families show no exceedances of the star benchmark.} Mean values and standard deviations of $\max_{\varepsilon_k<0}(\mathbf{1}^{\top} u_k)^2 / w_-^{(S)}$ across graph sizes $N \in \{10, 15, 20, 30, 40, 50, 60\}$ for six random graph models (500 samples per model per size). The red dashed line marks the star benchmark (ratio = 1). All tested architectures remain systematically below the benchmark, with ratios decreasing as system size increases, providing strong empirical evidence that no random topology surpasses the star's early-time charging performance.}
            \label{fig:scatter}
\end{figure}

\begin{figure}[t]
            \centering
            \includegraphics[width=0.47\textwidth]{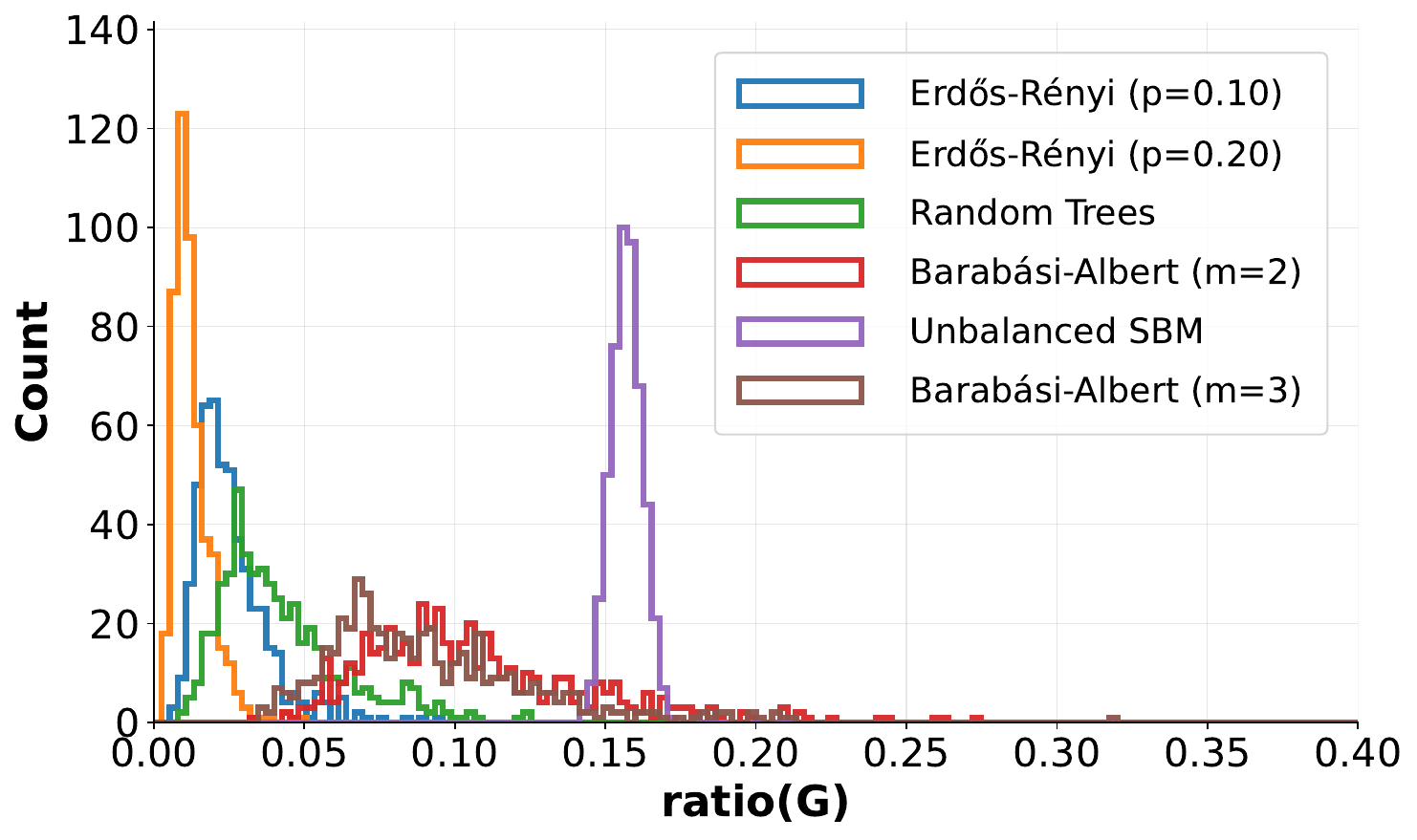}
            \caption{\textbf{Distribution analysis confirms systematic underperformance relative to the star benchmark.} Histograms of spectral envelope ratios $\max_{\varepsilon_k<0}(\mathbf{1}^{\top} u_k)^2 / w_-^{(S)}$ at the largest tested size ($N=60$) for six random graph families (500 samples each). Each panel shows the ratio distribution for: (top row) Erdős-Rényi graphs with edge probabilities $p=0.10$ and $p=0.20$; (middle row) uniform random trees and Barabási-Albert preferential attachment with $m=2$; (bottom row) unbalanced stochastic block model and Barabási-Albert with $m=3$. The red vertical line marks the star benchmark (ratio = 1). All distributions peak well below unity and exhibit minimal tail overlap with the benchmark, demonstrating that the star architecture's superiority is robust across diverse random topologies. Statistical summaries (max and mean values) are displayed for each family.}
            \label{fig:hists-by-model}
\end{figure}

        Across all sizes and all ensembles, we observe \emph{no exceedances}: the empirical ratios
        stay below $1$, typically with a visible margin that \emph{widens} as $N$ grows (see Fig.~\ref{fig:scatter}).
        Distributional views at $N=60$ (Fig.~\ref{fig:hists-by-model}) likewise show tight concentration strictly under
        the benchmark, with negligible tail mass near~$1$.

        These data align with the Sec.~\ref{subsec:h_non_trivial} mechanism.
        The envelope $\sup_h R_G(h)$ is driven by the largest uniform overlap in the negative
        sector. Random ER, tree, and preferential-attachment families almost never produce the
        strongly polarized, near-bipartite negative eigenvectors needed to accumulate a large
        uniform sum; even the deliberately unbalanced SBM, while closer in spirit, typically falls
        short of the star’s overlap. Combined with the analytical exclusions for regular and
        almost-regular constructions (Sec.~\ref{sec:expanders-no-star}), the empirical picture makes it highly
        implausible that generic sparse architectures can surpass the star at the envelope level.\footnote{This sweep is not an exhaustive search over \emph{all} $N$-vertex graphs; rather, it
        samples widely used random families and a biased bipartite model. Hence it provides
        evidence \emph{against} strong challengers in practice, complementing the formal bounds of Sec.~\ref{subsec:h_non_trivial}. We next turn to direct dynamical simulations of the
        microscopic model to confirm that the star’s envelope advantage translates into higher
        charging power under time evolution.}
    
\subsection{Direct simulation of the model}
Through this section we fix the parameters of the microscopic model to be $\kappa=\omega=1$ unless otherwise specified. The influence of these parameters on the system's dynamics is investigated in the Appendix~\ref{sec: micro params} and in the subsection~\ref{sec: Correlation}, and proves inconsequential for the star optimality.

\subsubsection{Power comparison across different graphs}

We directly simulated our model for all connected graphs with up to $N=7$ vertices and extracted corresponding values of $P_{\max}$ for $h=0$ and $=1$ (Fig.~\ref{fig:pmax_vs_N_h0_h1}). We confirmed that in both cases the star topology systematically yields the highest value of \(P_{\max}\) for each system size \(N\). The second best power is obtained by the ``perturbed star'', defined later (cf.~Fig.~\ref{fig:top3_structures}).
This is to be contrasted with the path graph which exhibits very low values of \(P_{\max}\) for \(N \geq 5\), and the complete graph which yields negligible charging power for all considered system sizes. 

\begin{figure}[h]
    \centering
    \begin{subfigure}[t]{0.48\textwidth}
        \centering
        \includegraphics[width=\textwidth]{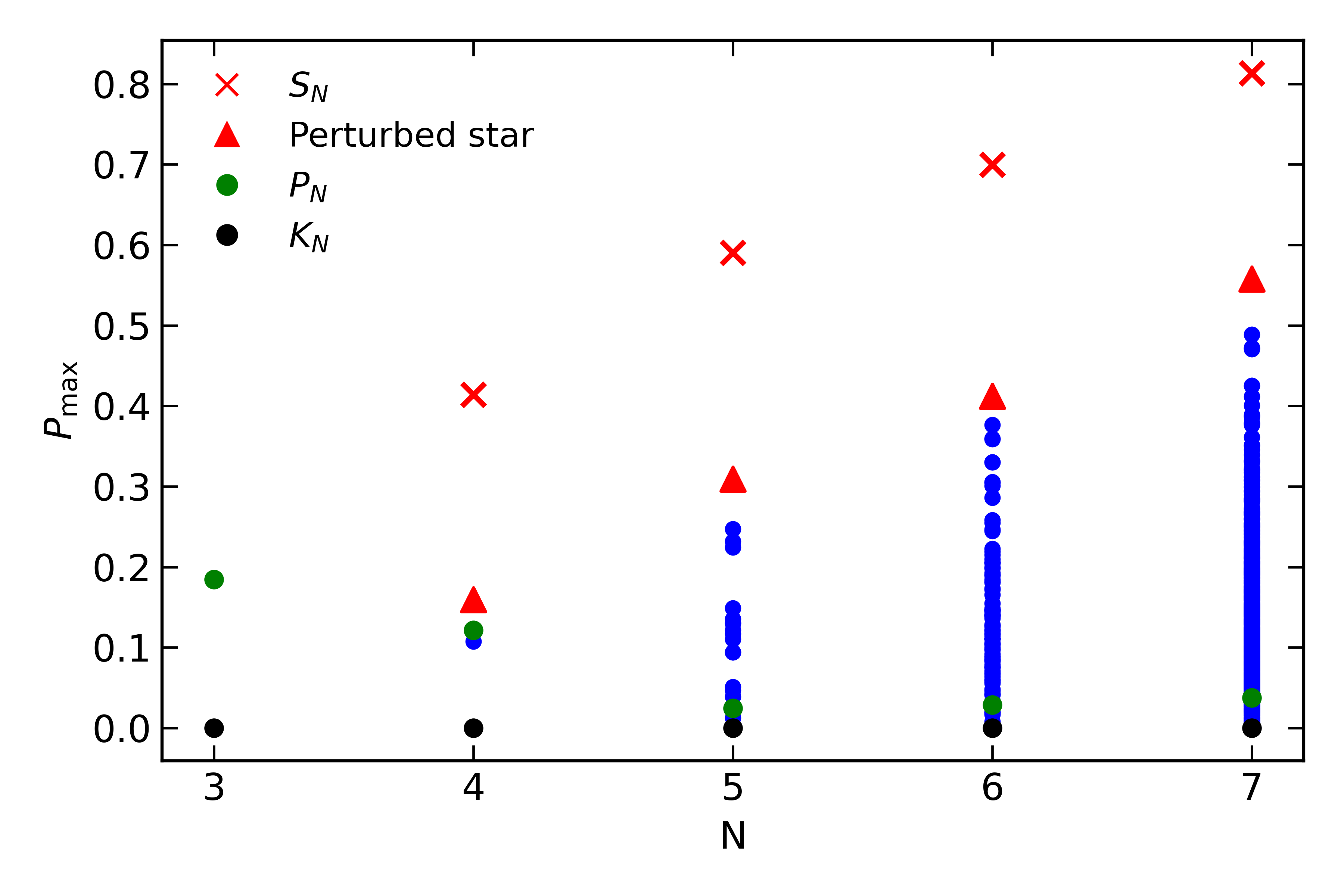}
        \label{fig:pmax_vs_N_h0}
    \end{subfigure}
    \hfill
    \begin{subfigure}[t]{0.48\textwidth}
        \centering
        \includegraphics[width=\textwidth]{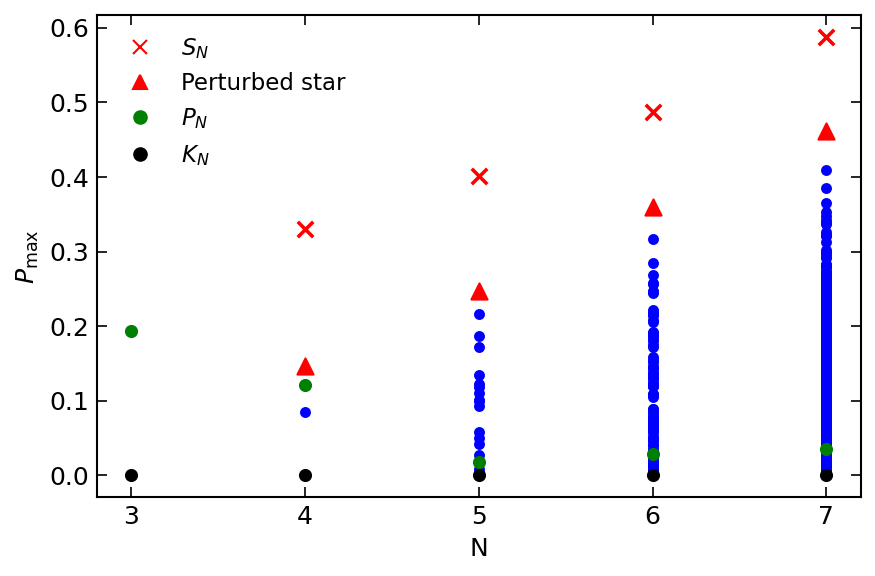}
        \label{fig:pmax_vs_N_h1}
    \end{subfigure}
    \caption{\textbf{$P_{\max}$ across graph topologies and system sizes.}
    Maximum charging power \(P_{\max}\) as a function of the number of sites $N$ for all connected interaction graphs, with $N=3,\dots,7$.
    Each blue dot corresponds to a distinct graph topology.
    The star graph ($S_N$), the perturbed star graph (second-best for \(N=7\); see Fig.~\ref{fig:top7-star-perturbed}), the path graph ($P_N$) and the complete graph  ($K_N$) are highlighted.
    The two panels correspond to different values of \(h\): \(h = 0\) (top) and \(h = 1\) (bottom), with \(\omega = \kappa = 1\).}
    \label{fig:pmax_vs_N_h0_h1}
\end{figure}

To further investigate the influence of battery's topology on the charging power we restrict ourselves to graphs on $N=7$ vertices. These can be indexed lexicographically according to their number of edges, degree sequence and number of automorphisms as implemented in the NetworkX graph atlas \cite{hagberg2008exploring}. As a result, increasing values of this index \(G\) roughly correspond to interaction graphs of increasing structural complexity and decreasing centralization. Figure~\ref{fig:pmax_vs_G} reveals a clear structural trend.
Observe that the data points are arranged into clear groups corresponding to different values of the number of edges $M$. Within each of such groups, the largest $P_{\max}$ is generally obtained by a graph with the lowest index, i.e.\ the one with the most star-like degree sequence. Moreover, the largest $P_{\max}$ within each group is generally decreasing with $M$ and so the largest $P_{\max}$ overall is obtained by the star graph itself.

We also look at the behaviour of $w_\textrm{min}(G)$ testing our hypothesis that it is the main factor governing the behaviour of $P_{\max}$. We confirm that the graphs with the largest $w_\textrm{min}(G)$ are these for which $P_{\max}$ is also maximal and the behaviour of both these quantities is qualitatively similar.

\begin{figure}[h!]
    \centering
    \begin{subfigure}[t]{0.48\textwidth}
        \centering
        \includegraphics[width=\textwidth]{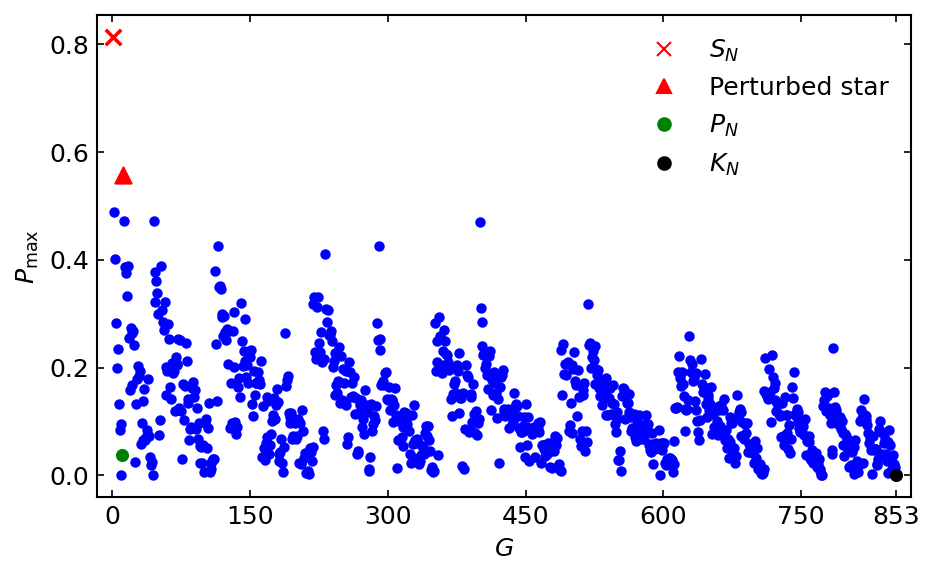}
        \label{fig:pmax_vs_G_h0}
    \end{subfigure}
    \hfill
    \begin{subfigure}[t]{0.48\textwidth}
        \centering
        \includegraphics[width=\textwidth]{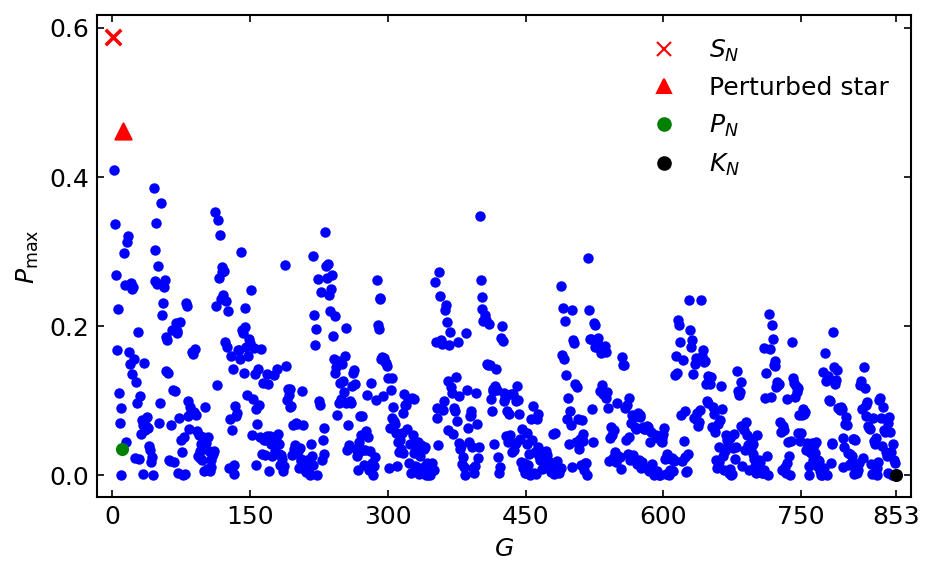}
        \label{fig:pmax_vs_G_h1}
    \end{subfigure}
    \hfill
    \begin{subfigure}[t]{0.49\textwidth}
        \centering
        \includegraphics[width=\textwidth]{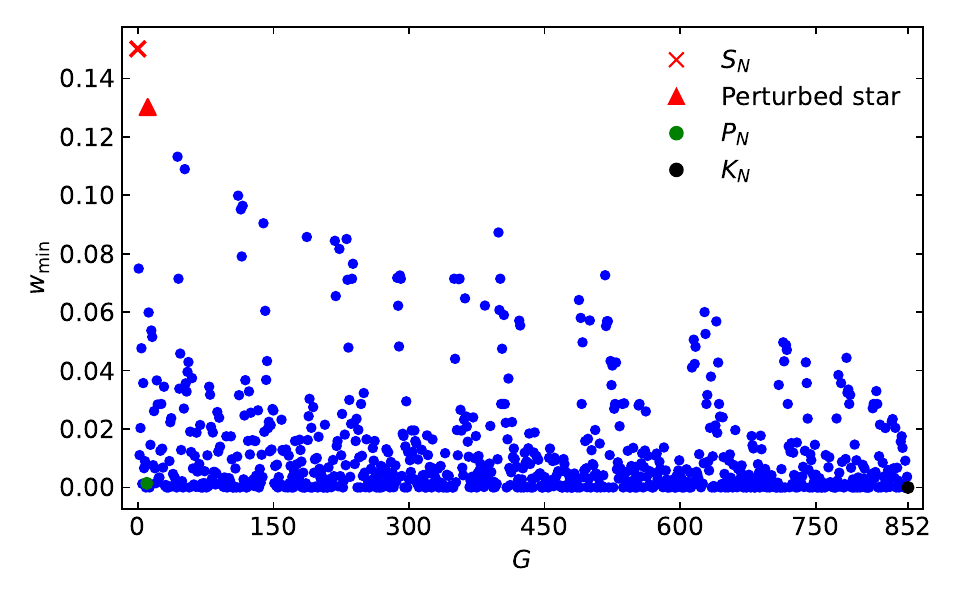}
        \label{fig:wmin_vs_G_h1}
    \end{subfigure}
    \caption{\textbf{$P_{\max}$ and $w_\mathrm{min}$ as a function of the topological index at fixed system size.} $N=7$ for all connected interaction graphs. Canonical graph topologies are highlighted: star ($S_N$), perturbed star, path ($P_N$), and complete graph ($K_N$).
    The index \(G\) follows the canonical ordering of the NetworkX graph atlas. The first plot corresponds to $h=0$, the second to $h=1$. The plot of $w_\mathrm{min}$ is for $h=0$.
    }
    \label{fig:pmax_vs_G}
\end{figure}

Finally, we investigate the stability of the star topology. Figure~\ref{fig:top3_structures} displays the three interaction graphs that achieve the highest values of \(P_{\max}\).
All three exhibit a strongly centralized, star-like architecture dominated by a single hub, which concentrates most of the couplings.

The optimal configuration is the pure star shown in panel (a), where all six edges are attached to a single central node.
The second structure, panel (b), is a minimal perturbation of the star obtained by adding a single extra edge between two peripheral nodes, introducing a local motif while preserving the dominant hub. We call such graphs ``perturbed stars''.
Finally, panel (c) corresponds to a more asymmetric star variant in which the connectivity is slightly redistributed: one peripheral node is linked to another leaf, forming a short chain attached to the hub and breaking the perfect single-hub symmetry.
\begin{figure}[h!]
    \centering
    \begin{subfigure}{0.32\linewidth}
        \centering
        \includegraphics[width=\linewidth]{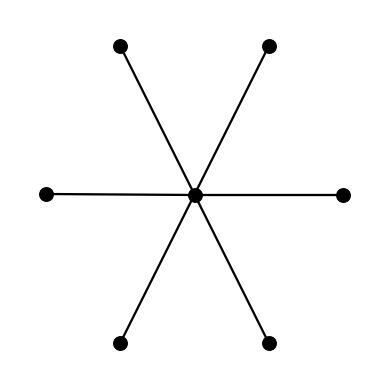}
        \subcaption{$P_{\max} \approx 0.81$}
        \label{fig:top7-star}
    \end{subfigure}
    \hfill
    \begin{subfigure}{0.32\linewidth}
        \centering
        \includegraphics[width=\linewidth]{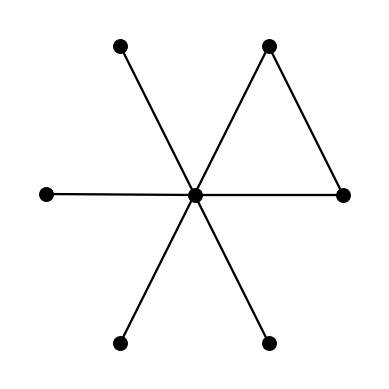}
        \subcaption{$P_{\max} \approx 0.55$}
        \label{fig:top7-star-perturbed}
    \end{subfigure}
    \hfill
    \begin{subfigure}{0.32\linewidth}
        \centering
        \includegraphics[width=\linewidth]{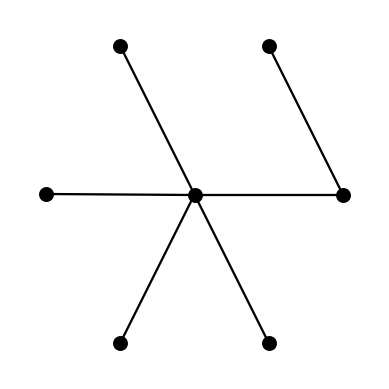}
        \subcaption{$P_{\max} \approx 0.48$}
        \label{fig:top7-asym}
    \end{subfigure}

    \caption{\textbf{Top three interaction topologies for \( N = 7 \).}
    }
    \label{fig:top3_structures}
\end{figure}

Centralized star-like structures thus emerge as the optimal design principle: concentrating interactions around a single hub is the most effective way to maximize the charging power.

\subsubsection{Scaling of maximum average power with system size}

A commonly discussed signature of cooperative quantum effects in charging protocols is a superlinear scaling of the charging power with system size \cite{campaioli2017enhancing}.
To probe this behavior in our model, we focus on the star topology and study how the maximum average charging power scales with the system size.

For each system size \(N \in \{4,\dots,11\}\), we simulate the charging dynamics of the star graph and extract \(P_{\max}\).
The resulting values \(P_{\max}(N)\) are then fitted to a power-law scaling
\begin{equation}
    P_{\max}(N) = a\,N^{\eta},
\end{equation}
from which the exponent \(\eta\) is obtained numerically.

As shown in Fig.~\ref{fig:scaling_star}, we observe a clear superlinear scaling for the star architecture, with exponent
\(\eta \simeq 1.18\).
This indicates the emergence of strong collective effects in the charging dynamics as the system size increases.
Such behavior is consistent with the physical picture that concentrating interactions around a single hub enhances cooperative energy transfer and leads to an amplification of the achievable peak charging power.

\begin{figure}[h!]
    \centering
    \includegraphics[width=0.47\textwidth]{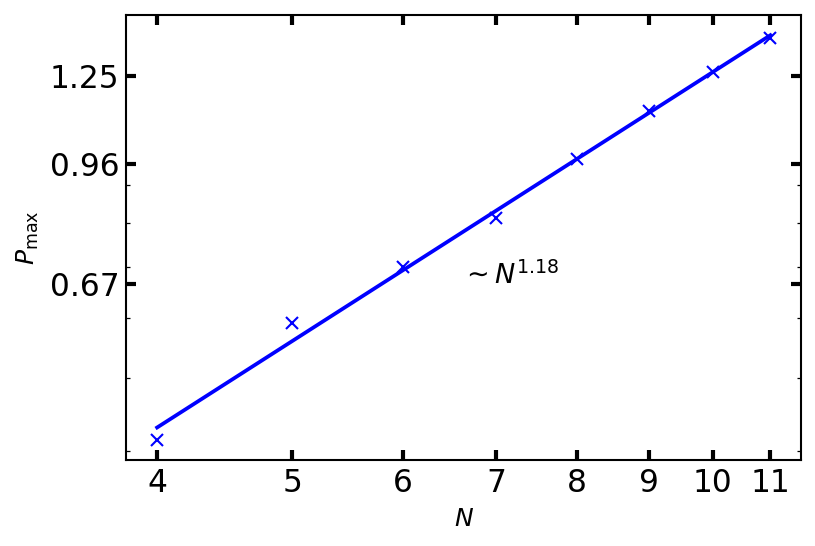}
    \caption{\textbf{Scaling of the maximum average charging power for the star topology.}
    For each system size $N$, the maximum average power
    \(P_{\max}(N)\) is extracted from the charging dynamics and fitted to a power law
    \(P_{\max}(N)\sim N^{\eta}\), yielding \(\eta= 1.18\pm0.03\).
 }
    \label{fig:scaling_star}
\end{figure}

\subsubsection{Correlation between initial and maximal charging power}\label{sec: Correlation}

Finally, we investigate the relationship between the short-time charging behaviour and the overall charging performance of the quantum battery.
To this end, we consider all interaction graphs with $N=7$ and compare the initial charging power $P_{\mathrm{init}}$, defined as the power after the first time step of the dynamics $dt = 0.01$, with the maximal average power $P_{\max}$ attained during the full time evolution.
In order to assess the robustness of this relation, we explore a broad range of parameter regimes by varying the driving frequency $\omega$ and the interaction strength $\kappa$ over the set \(\omega,\kappa \in \{0.1,1,5\}\), thereby spanning several orders of magnitude.

Figure~\ref{fig:PmaxPinit_grid} displays scatter plots of $P_{\mathrm{init}}$ versus $P_{\max}$ for the complete set of $N=7$ graph topologies.
The figure is divided into two panels corresponding to the two cases $h=0$ and $h=1$.
Within each panel, different subplots explore the various combinations of $(\omega,\kappa)$ arranged on a grid.

In all considered parameter regimes, a clear positive correlation between the initial and maximal charging powers is observed.
Graph topologies that exhibit a larger initial charging power systematically reach a higher maximal average power, independently of the specific choice of parameters.
This trend is quantitatively confirmed by the Pearson correlation coefficient, which takes values in the range \(r_P \in [0.81,1.00]\), depending on the parameters $(\omega,\kappa)$, and for both values of $h$. The robustness of this correlation across widely different parameter regimes demonstrates that the relation between early-time and long-time charging performance is essentially independent of the microscopic details of the dynamics.
The short-time charging behavior already encodes most of the relevant structural information of the interaction topology, even in regimes where the dynamics becomes strongly nonlinear.

From a practical perspective, this result shows that the initial charging power $P_{\mathrm{init}}$ can be used as a computationally efficient and reliable proxy for the maximal average charging power $P_{\max}$.

\begin{figure}[h!]
    \centering
    \includegraphics[width=1\linewidth]{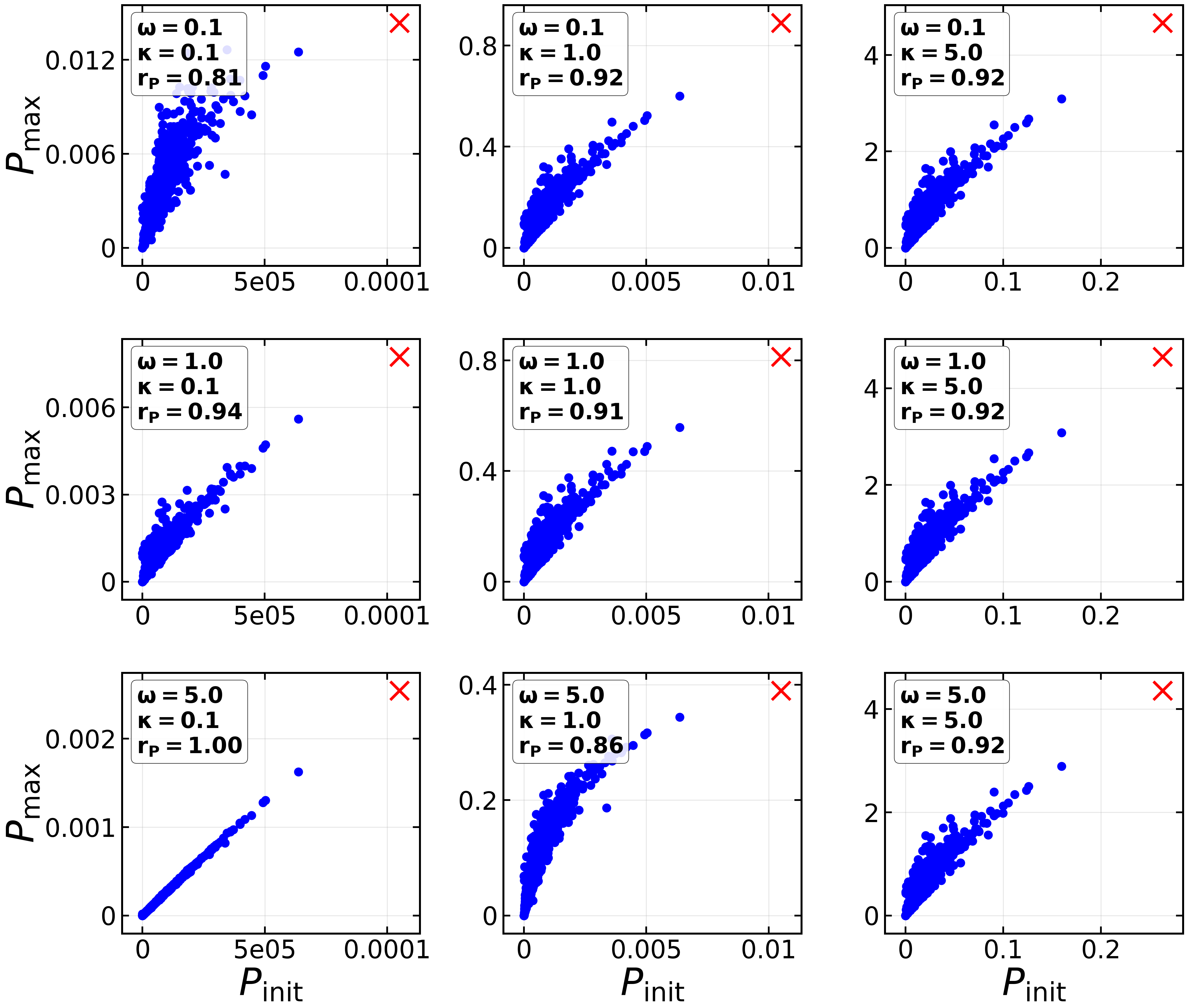}\\
    \vspace{0.5cm}
    \includegraphics[width=1.0\linewidth]{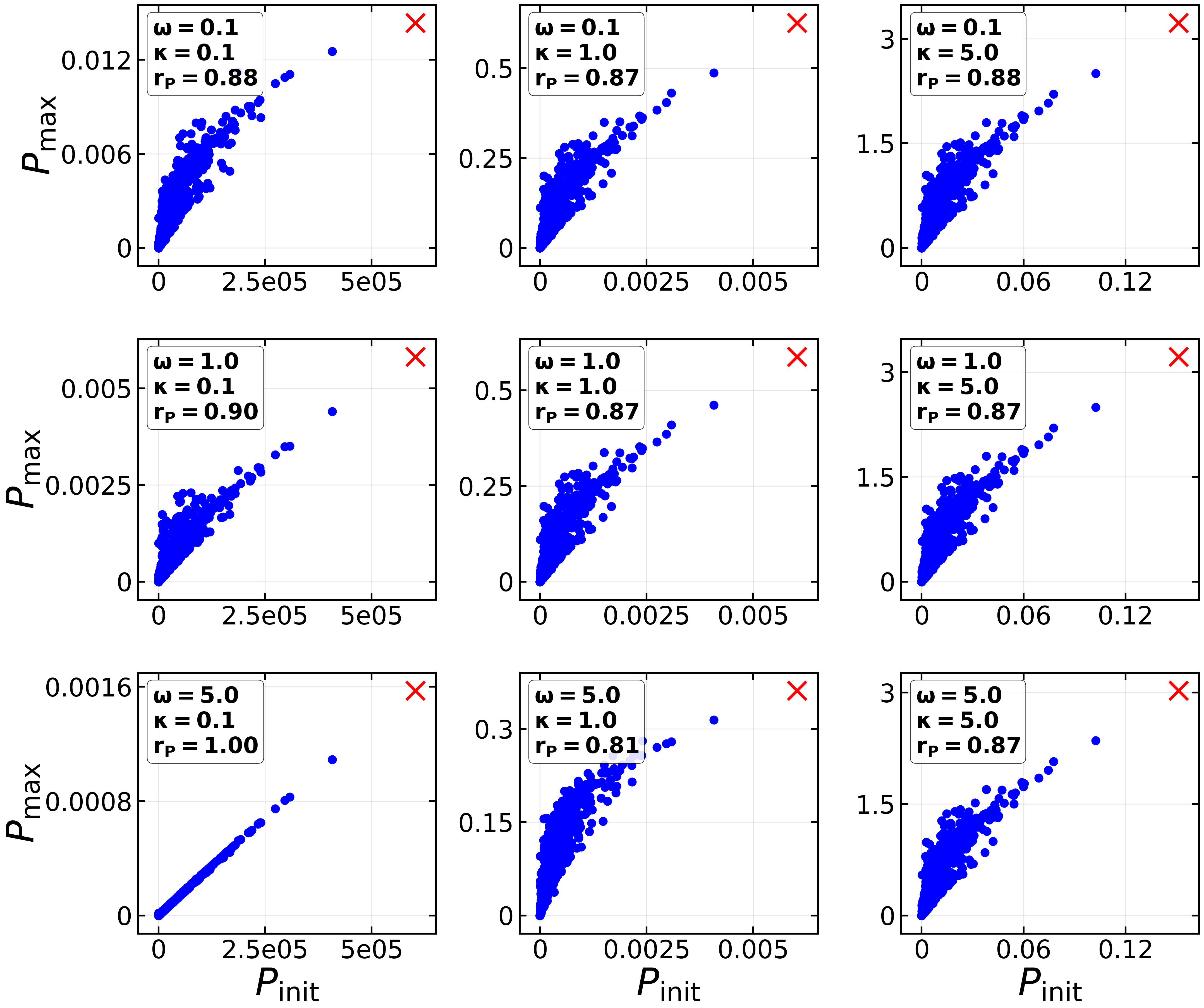}
    \caption{\textbf{Initial charging power $P_{\mathrm{init}}$ versus maximal average charging power $P_{\max}$ for all interaction graphs with $N=7$.}
    Top: $h=0$. Bottom: $h=1$.
    Each panel displays a grid of subplots corresponding to different combinations of the parameters $(\omega,\kappa) \in \{0.1,1,5\}$.
    In all cases, a strong positive correlation is observed between $P_{\mathrm{init}}$ and $P_{\max}$, with Pearson coefficients ranging from $r_P \simeq 0.81$ up to $r_P = 1.00$.
    This demonstrates the robustness and parameter-independence of the correlation between early-time and maximal charging performance. The ``\textcolor{red}{$\times$}'' marker corresponds to the star graph. }
    \label{fig:PmaxPinit_grid}
\end{figure}

\section{Conclusions and Outlook}
\label{sec:conclusion}

    We have framed architecture design for quantum batteries within a spectral graph--theoretic lens and proved that \emph{star} graphs maximize physically relevant bounds to instantaneous charging power. At a mechanistic level, the hub-and-spoke layout concentrates coupling through the principal eigenmode, thereby boosting the spectral radius and the eigenvector-weighted sums that control our protocol's performance. We complemented these bounds with explicit implementations and numerical comparisons, establishing a clear and scalable advantage for stars over other architectures, like complete or path graphs at fixed number of edges.

    Beyond delivering a compact optimality statement, our results suggest concrete principles for near-term devices. Whenever interactions are sparse, prioritizing a single mediating hub---as naturally realized in cavity and circuit QED, trapped ions, and resonator-coupled platforms---yields the largest power for a given hardware budget.

    Several extensions are both natural and valuable.
    (i) \emph{Locality and hardware constraints.} In spatially embedded architectures with distance-dependent couplings, one may trade a single hub for a small number of regional hubs; determining optimal multi-hub topologies as a function of geometric and fabrication constraints is an open problem.
    (ii) \emph{Weighted and time-dependent graphs.} Allowing edge weights and control schedules invites optimal-control formulations on graphs; extending our bounds to weighted Laplacians and to periodic driving is a concrete direction.
    (iii) \emph{Alternative quality metrics.} Power is only one dimension in the space of reasonable quality metrics for quantum batteries: asymptotic extractable work, variance, and cycle-to-cycle repeatability matter for practical loads. Bounding joint trade-offs between power and stability is an important challenge.
    (iv) \emph{Open-system resources.} Correlated noise, non-Markovian baths, and measurement-assisted or feedback-enhanced protocols can possibly reshape optimal topologies; integrating these resources into the graph framework could reveal design rules that go beyond coherent unitary charging.
    (v) \emph{Resource accounting.} A complete assessment should include the control overhead needed to realize a hub, state-preparation costs for the charger, etc... incorporating such costs into constrained graph optimization is a promising avenue.
    (vi) \emph{Discharging and energy transfer.} The same spectral mechanisms that boost charging are relevant to rapid and targeted energy delivery. Extending the theory to optimal discharging and routing on networks is directly relevant for concrete QB as embedded in an actual device.

    We expect the graph perspective developed here to serve as a compact interface between theory and experiment, helping to translate hardware constraints into provably optimal layouts. In the broader roadmap of quantum batteries~\cite{campaioli2024colloquium}, our results clarify how topology alone can underwrite collective power gains and point to concrete routes for scaling sparse, power-efficient devices.

\section*{Acknowledgments}

    A.T. acknowledges funding via the FNR-CORE Grant ``BroadApp'' (FNR-CORE C20/MS/14769845) and ERC-AdG Grant ``FITMOL''.

    This research was funded in whole, or in part, by the Luxembourg National Research
    Fund (FNR), grant reference C24/MS/18969168/MOS. For the purpose of open access, and
    in fulfilment of the obligations arising from the grant agreement, the author has applied
    a Creative Commons Attribution 4.0 International (CC BY 4.0) license to any Author
    Accepted Manuscript version arising from this submission.

\section*{Authors declarations}

    \subsection*{Conflict of Interest}

        The authors have no conflicts to disclose.

    \subsection*{Author Contributions}

        \begin{itemize}
            \item \textbf{Matthieu Sarkis}: Conceptualization, Formal analysis (analytical and numerical), Writing—original draft and editing, Supervision.
            \item \textbf{Oskar A. Pro\'sniak}: Formal analysis (analytical and numerical), Writing—original draft and editing.
            \item \textbf{Samuel Nigro}: Formal analysis (numerical), Writing—original draft and editing.
            \item \textbf{Alexandre Tkatchenko}: Supervision, Funding acquisition, Writing—review and editing.
        \end{itemize}

\section*{Data availability}
    The data generated during this work can be provided upon reasonable request.

\bibliographystyle{apsrev4-2}
\bibliography{references}

\begin{thebibliography}{43}%
\makeatletter
\providecommand \@ifxundefined [1]{%
 \@ifx{#1\undefined}
}%
\providecommand \@ifnum [1]{%
 \ifnum #1\expandafter \@firstoftwo
 \else \expandafter \@secondoftwo
 \fi
}%
\providecommand \@ifx [1]{%
 \ifx #1\expandafter \@firstoftwo
 \else \expandafter \@secondoftwo
 \fi
}%
\providecommand \natexlab [1]{#1}%
\providecommand \enquote  [1]{``#1''}%
\providecommand \bibnamefont  [1]{#1}%
\providecommand \bibfnamefont [1]{#1}%
\providecommand \citenamefont [1]{#1}%
\providecommand \href@noop [0]{\@secondoftwo}%
\providecommand \href [0]{\begingroup \@sanitize@url \@href}%
\providecommand \@href[1]{\@@startlink{#1}\@@href}%
\providecommand \@@href[1]{\endgroup#1\@@endlink}%
\providecommand \@sanitize@url [0]{\catcode `\\12\catcode `\$12\catcode
  `\&12\catcode `\#12\catcode `\^12\catcode `\_12\catcode `\%12\relax}%
\providecommand \@@startlink[1]{}%
\providecommand \@@endlink[0]{}%
\providecommand \url  [0]{\begingroup\@sanitize@url \@url }%
\providecommand \@url [1]{\endgroup\@href {#1}{\urlprefix }}%
\providecommand \urlprefix  [0]{URL }%
\providecommand \Eprint [0]{\href }%
\providecommand \doibase [0]{https://doi.org/}%
\providecommand \selectlanguage [0]{\@gobble}%
\providecommand \bibinfo  [0]{\@secondoftwo}%
\providecommand \bibfield  [0]{\@secondoftwo}%
\providecommand \translation [1]{[#1]}%
\providecommand \BibitemOpen [0]{}%
\providecommand \bibitemStop [0]{}%
\providecommand \bibitemNoStop [0]{.\EOS\space}%
\providecommand \EOS [0]{\spacefactor3000\relax}%
\providecommand \BibitemShut  [1]{\csname bibitem#1\endcsname}%
\let\auto@bib@innerbib\@empty
\bibitem [{\citenamefont {Alicki}\ and\ \citenamefont
  {Fannes}(2013)}]{alicki2013entanglement}%
  \BibitemOpen
  \bibfield  {author} {\bibinfo {author} {\bibfnamefont {R.}~\bibnamefont
  {Alicki}}\ and\ \bibinfo {author} {\bibfnamefont {M.}~\bibnamefont
  {Fannes}},\ }\href {https://doi.org/10.1103/PhysRevE.87.042123} {\bibfield
  {journal} {\bibinfo  {journal} {Physical Review E—Statistical, Nonlinear,
  and Soft Matter Physics}\ }\textbf {\bibinfo {volume} {87}},\ \bibinfo
  {pages} {042123} (\bibinfo {year} {2013})}\BibitemShut {NoStop}%
\bibitem [{\citenamefont {Ferraro}\ \emph {et~al.}(2018)\citenamefont
  {Ferraro}, \citenamefont {Campisi}, \citenamefont {Andolina}, \citenamefont
  {Pellegrini},\ and\ \citenamefont {Polini}}]{ferraro2018highpower}%
  \BibitemOpen
  \bibfield  {author} {\bibinfo {author} {\bibfnamefont {D.}~\bibnamefont
  {Ferraro}}, \bibinfo {author} {\bibfnamefont {M.}~\bibnamefont {Campisi}},
  \bibinfo {author} {\bibfnamefont {G.~M.}\ \bibnamefont {Andolina}}, \bibinfo
  {author} {\bibfnamefont {V.}~\bibnamefont {Pellegrini}},\ and\ \bibinfo
  {author} {\bibfnamefont {M.}~\bibnamefont {Polini}},\ }\href
  {https://doi.org/10.1103/PhysRevLett.120.117702} {\bibfield  {journal}
  {\bibinfo  {journal} {Physical Review Letters}\ }\textbf {\bibinfo {volume}
  {120}},\ \bibinfo {pages} {117702} (\bibinfo {year} {2018})}\BibitemShut
  {NoStop}%
\bibitem [{\citenamefont {Andolina}\ \emph
  {et~al.}(2019{\natexlab{a}})\citenamefont {Andolina}, \citenamefont {Keck},
  \citenamefont {Mari}, \citenamefont {Campisi}, \citenamefont {Giovannetti},\
  and\ \citenamefont {Polini}}]{andolina2019extractable}%
  \BibitemOpen
  \bibfield  {author} {\bibinfo {author} {\bibfnamefont {G.~M.}\ \bibnamefont
  {Andolina}}, \bibinfo {author} {\bibfnamefont {M.}~\bibnamefont {Keck}},
  \bibinfo {author} {\bibfnamefont {A.}~\bibnamefont {Mari}}, \bibinfo {author}
  {\bibfnamefont {M.}~\bibnamefont {Campisi}}, \bibinfo {author} {\bibfnamefont
  {V.}~\bibnamefont {Giovannetti}},\ and\ \bibinfo {author} {\bibfnamefont
  {M.}~\bibnamefont {Polini}},\ }\href
  {https://doi.org/10.1103/PhysRevLett.122.047702} {\bibfield  {journal}
  {\bibinfo  {journal} {Physical Review Letters}\ }\textbf {\bibinfo {volume}
  {122}},\ \bibinfo {pages} {047702} (\bibinfo {year}
  {2019}{\natexlab{a}})}\BibitemShut {NoStop}%
\bibitem [{\citenamefont {Campaioli}\ \emph {et~al.}(2024)\citenamefont
  {Campaioli}, \citenamefont {Gherardini}, \citenamefont {Quach}, \citenamefont
  {Polini},\ and\ \citenamefont {Andolina}}]{campaioli2024colloquium}%
  \BibitemOpen
  \bibfield  {author} {\bibinfo {author} {\bibfnamefont {F.}~\bibnamefont
  {Campaioli}}, \bibinfo {author} {\bibfnamefont {S.}~\bibnamefont
  {Gherardini}}, \bibinfo {author} {\bibfnamefont {J.~Q.}\ \bibnamefont
  {Quach}}, \bibinfo {author} {\bibfnamefont {M.}~\bibnamefont {Polini}},\ and\
  \bibinfo {author} {\bibfnamefont {G.~M.}\ \bibnamefont {Andolina}},\
  }\href@noop {} {\bibfield  {journal} {\bibinfo  {journal} {Reviews of Modern
  Physics}\ }\textbf {\bibinfo {volume} {96}},\ \bibinfo {pages} {031001}
  (\bibinfo {year} {2024})}\BibitemShut {NoStop}%
\bibitem [{\citenamefont {Friis}\ and\ \citenamefont
  {Huber}(2018)}]{friis2018gaussian}%
  \BibitemOpen
  \bibfield  {author} {\bibinfo {author} {\bibfnamefont {N.}~\bibnamefont
  {Friis}}\ and\ \bibinfo {author} {\bibfnamefont {M.}~\bibnamefont {Huber}},\
  }\href {https://doi.org/10.22331/q-2018-04-23-61} {\bibfield  {journal}
  {\bibinfo  {journal} {Quantum}\ }\textbf {\bibinfo {volume} {2}},\ \bibinfo
  {pages} {61} (\bibinfo {year} {2018})}\BibitemShut {NoStop}%
\bibitem [{\citenamefont {Barra}(2019)}]{barra2019dissipative}%
  \BibitemOpen
  \bibfield  {author} {\bibinfo {author} {\bibfnamefont {F.}~\bibnamefont
  {Barra}},\ }\href {https://doi.org/10.1103/PhysRevLett.122.210601} {\bibfield
   {journal} {\bibinfo  {journal} {Physical Review Letters}\ }\textbf {\bibinfo
  {volume} {122}},\ \bibinfo {pages} {210601} (\bibinfo {year}
  {2019})}\BibitemShut {NoStop}%
\bibitem [{\citenamefont {Campaioli}\ \emph {et~al.}(2017)\citenamefont
  {Campaioli}, \citenamefont {Pollock}, \citenamefont {Binder}, \citenamefont
  {C{'e}leri}, \citenamefont {Goold}, \citenamefont {Vinjanampathy},\ and\
  \citenamefont {Modi}}]{campaioli2017enhancing}%
  \BibitemOpen
  \bibfield  {author} {\bibinfo {author} {\bibfnamefont {F.}~\bibnamefont
  {Campaioli}}, \bibinfo {author} {\bibfnamefont {F.~A.}\ \bibnamefont
  {Pollock}}, \bibinfo {author} {\bibfnamefont {F.~C.}\ \bibnamefont {Binder}},
  \bibinfo {author} {\bibfnamefont {L.}~\bibnamefont {C{'e}leri}}, \bibinfo
  {author} {\bibfnamefont {J.}~\bibnamefont {Goold}}, \bibinfo {author}
  {\bibfnamefont {S.}~\bibnamefont {Vinjanampathy}},\ and\ \bibinfo {author}
  {\bibfnamefont {K.}~\bibnamefont {Modi}},\ }\href@noop {} {\bibfield
  {journal} {\bibinfo  {journal} {Physical review letters}\ }\textbf {\bibinfo
  {volume} {118}},\ \bibinfo {pages} {150601} (\bibinfo {year}
  {2017})}\BibitemShut {NoStop}%
\bibitem [{\citenamefont {Julia-Farre}\ \emph {et~al.}(2020)\citenamefont
  {Julia-Farre}, \citenamefont {Salamon}, \citenamefont {Riera}, \citenamefont
  {Bera},\ and\ \citenamefont {Lewenstein}}]{juliafarre2020bounds}%
  \BibitemOpen
  \bibfield  {author} {\bibinfo {author} {\bibfnamefont {S.}~\bibnamefont
  {Julia-Farre}}, \bibinfo {author} {\bibfnamefont {T.}~\bibnamefont
  {Salamon}}, \bibinfo {author} {\bibfnamefont {A.}~\bibnamefont {Riera}},
  \bibinfo {author} {\bibfnamefont {M.~N.}\ \bibnamefont {Bera}},\ and\
  \bibinfo {author} {\bibfnamefont {M.}~\bibnamefont {Lewenstein}},\ }\href
  {https://doi.org/10.1103/PhysRevResearch.2.023113} {\bibfield  {journal}
  {\bibinfo  {journal} {Physical Review Research}\ }\textbf {\bibinfo {volume}
  {2}},\ \bibinfo {pages} {023113} (\bibinfo {year} {2020})}\BibitemShut
  {NoStop}%
\bibitem [{\citenamefont {Gyhm}\ \emph
  {et~al.}(2022{\natexlab{a}})\citenamefont {Gyhm}, \citenamefont
  {{\v{S}}afr{\'a}nek},\ and\ \citenamefont {Rosa}}]{gyhm2021advantage}%
  \BibitemOpen
  \bibfield  {author} {\bibinfo {author} {\bibfnamefont {J.-Y.}\ \bibnamefont
  {Gyhm}}, \bibinfo {author} {\bibfnamefont {D.}~\bibnamefont
  {{\v{S}}afr{\'a}nek}},\ and\ \bibinfo {author} {\bibfnamefont
  {D.}~\bibnamefont {Rosa}},\ }\href
  {https://doi.org/10.1103/PhysRevLett.128.140501} {\bibfield  {journal}
  {\bibinfo  {journal} {Physical Review Letters}\ }\textbf {\bibinfo {volume}
  {128}},\ \bibinfo {pages} {140501} (\bibinfo {year}
  {2022}{\natexlab{a}})}\BibitemShut {NoStop}%
\bibitem [{\citenamefont {Gyhm}\ \emph
  {et~al.}(2022{\natexlab{b}})\citenamefont {Gyhm}, \citenamefont
  {{\v{S}}afr{\'a}nek},\ and\ \citenamefont {Rosa}}]{gyhm2022cannot}%
  \BibitemOpen
  \bibfield  {author} {\bibinfo {author} {\bibfnamefont {J.-Y.}\ \bibnamefont
  {Gyhm}}, \bibinfo {author} {\bibfnamefont {D.}~\bibnamefont
  {{\v{S}}afr{\'a}nek}},\ and\ \bibinfo {author} {\bibfnamefont
  {D.}~\bibnamefont {Rosa}},\ }\href
  {https://doi.org/10.1103/PhysRevLett.128.140501} {\bibfield  {journal}
  {\bibinfo  {journal} {Phys. Rev. Lett.}\ }\textbf {\bibinfo {volume} {128}},\
  \bibinfo {pages} {140501} (\bibinfo {year} {2022}{\natexlab{b}})},\ \Eprint
  {https://arxiv.org/abs/2108.02491} {arXiv:2108.02491 [quant-ph]} \BibitemShut
  {NoStop}%
\bibitem [{\citenamefont {Gyhm}\ \emph {et~al.}(2024)\citenamefont {Gyhm},
  \citenamefont {Rosa},\ and\ \citenamefont
  {{\v{S}}afr{\'a}nek}}]{gyhm2024minimal}%
  \BibitemOpen
  \bibfield  {author} {\bibinfo {author} {\bibfnamefont {J.-Y.}\ \bibnamefont
  {Gyhm}}, \bibinfo {author} {\bibfnamefont {D.}~\bibnamefont {Rosa}},\ and\
  \bibinfo {author} {\bibfnamefont {D.}~\bibnamefont {{\v{S}}afr{\'a}nek}},\
  }\href {https://doi.org/10.1103/PhysRevA.109.022607} {\bibfield  {journal}
  {\bibinfo  {journal} {Phys. Rev. A}\ }\textbf {\bibinfo {volume} {109}},\
  \bibinfo {pages} {022607} (\bibinfo {year} {2024})},\ \Eprint
  {https://arxiv.org/abs/2308.16086} {arXiv:2308.16086 [quant-ph]} \BibitemShut
  {NoStop}%
\bibitem [{\citenamefont {Le}\ \emph {et~al.}(2018)\citenamefont {Le},
  \citenamefont {Levinsen}, \citenamefont {Modi}, \citenamefont {Parish},\ and\
  \citenamefont {Pollock}}]{le2018spin}%
  \BibitemOpen
  \bibfield  {author} {\bibinfo {author} {\bibfnamefont {T.~P.}\ \bibnamefont
  {Le}}, \bibinfo {author} {\bibfnamefont {J.}~\bibnamefont {Levinsen}},
  \bibinfo {author} {\bibfnamefont {K.}~\bibnamefont {Modi}}, \bibinfo {author}
  {\bibfnamefont {M.~M.}\ \bibnamefont {Parish}},\ and\ \bibinfo {author}
  {\bibfnamefont {F.~A.}\ \bibnamefont {Pollock}},\ }\href@noop {} {\bibfield
  {journal} {\bibinfo  {journal} {Physical Review A}\ }\textbf {\bibinfo
  {volume} {97}},\ \bibinfo {pages} {022106} (\bibinfo {year}
  {2018})}\BibitemShut {NoStop}%
\bibitem [{\citenamefont {Andolina}\ \emph
  {et~al.}(2019{\natexlab{b}})\citenamefont {Andolina}, \citenamefont {Keck},
  \citenamefont {Mari}, \citenamefont {Polini},\ and\ \citenamefont
  {Giovannetti}}]{andolina2019versus}%
  \BibitemOpen
  \bibfield  {author} {\bibinfo {author} {\bibfnamefont {G.~M.}\ \bibnamefont
  {Andolina}}, \bibinfo {author} {\bibfnamefont {M.}~\bibnamefont {Keck}},
  \bibinfo {author} {\bibfnamefont {A.}~\bibnamefont {Mari}}, \bibinfo {author}
  {\bibfnamefont {M.}~\bibnamefont {Polini}},\ and\ \bibinfo {author}
  {\bibfnamefont {V.}~\bibnamefont {Giovannetti}},\ }\href
  {https://doi.org/10.1103/PhysRevB.99.205437} {\bibfield  {journal} {\bibinfo
  {journal} {Physical Review B}\ }\textbf {\bibinfo {volume} {99}},\ \bibinfo
  {pages} {205437} (\bibinfo {year} {2019}{\natexlab{b}})}\BibitemShut
  {NoStop}%
\bibitem [{\citenamefont {Rossini}\ \emph {et~al.}(2019)\citenamefont
  {Rossini}, \citenamefont {Andolina},\ and\ \citenamefont
  {Polini}}]{rossini2019mbl}%
  \BibitemOpen
  \bibfield  {author} {\bibinfo {author} {\bibfnamefont {D.}~\bibnamefont
  {Rossini}}, \bibinfo {author} {\bibfnamefont {G.~M.}\ \bibnamefont
  {Andolina}},\ and\ \bibinfo {author} {\bibfnamefont {M.}~\bibnamefont
  {Polini}},\ }\href {https://doi.org/10.1103/PhysRevB.100.115142} {\bibfield
  {journal} {\bibinfo  {journal} {Physical Review B}\ }\textbf {\bibinfo
  {volume} {100}},\ \bibinfo {pages} {115142} (\bibinfo {year}
  {2019})}\BibitemShut {NoStop}%
\bibitem [{\citenamefont {Rossini}\ \emph
  {et~al.}(2020{\natexlab{a}})\citenamefont {Rossini}, \citenamefont
  {Andolina}, \citenamefont {Rosa}, \citenamefont {Carrega},\ and\
  \citenamefont {Polini}}]{rossini2020syk}%
  \BibitemOpen
  \bibfield  {author} {\bibinfo {author} {\bibfnamefont {D.}~\bibnamefont
  {Rossini}}, \bibinfo {author} {\bibfnamefont {G.~M.}\ \bibnamefont
  {Andolina}}, \bibinfo {author} {\bibfnamefont {D.}~\bibnamefont {Rosa}},
  \bibinfo {author} {\bibfnamefont {M.}~\bibnamefont {Carrega}},\ and\ \bibinfo
  {author} {\bibfnamefont {M.}~\bibnamefont {Polini}},\ }\href
  {https://doi.org/10.1103/PhysRevLett.125.236402} {\bibfield  {journal}
  {\bibinfo  {journal} {Physical Review Letters}\ }\textbf {\bibinfo {volume}
  {125}},\ \bibinfo {pages} {236402} (\bibinfo {year}
  {2020}{\natexlab{a}})}\BibitemShut {NoStop}%
\bibitem [{\citenamefont {Carrasco}\ \emph {et~al.}(2022)\citenamefont
  {Carrasco}, \citenamefont {Maze}, \citenamefont {Hermann-Avigliano},\ and\
  \citenamefont {Barra}}]{carrasco2022collective}%
  \BibitemOpen
  \bibfield  {author} {\bibinfo {author} {\bibfnamefont {J.}~\bibnamefont
  {Carrasco}}, \bibinfo {author} {\bibfnamefont {J.~R.}\ \bibnamefont {Maze}},
  \bibinfo {author} {\bibfnamefont {C.}~\bibnamefont {Hermann-Avigliano}},\
  and\ \bibinfo {author} {\bibfnamefont {F.}~\bibnamefont {Barra}},\
  }\href@noop {} {\bibfield  {journal} {\bibinfo  {journal} {Physical Review
  E}\ }\textbf {\bibinfo {volume} {105}},\ \bibinfo {pages} {064119} (\bibinfo
  {year} {2022})}\BibitemShut {NoStop}%
\bibitem [{\citenamefont {Farina}\ \emph {et~al.}(2019)\citenamefont {Farina},
  \citenamefont {Andolina}, \citenamefont {Mari}, \citenamefont {Polini},\ and\
  \citenamefont {Giovannetti}}]{farina2019charger}%
  \BibitemOpen
  \bibfield  {author} {\bibinfo {author} {\bibfnamefont {D.}~\bibnamefont
  {Farina}}, \bibinfo {author} {\bibfnamefont {G.~M.}\ \bibnamefont
  {Andolina}}, \bibinfo {author} {\bibfnamefont {A.}~\bibnamefont {Mari}},
  \bibinfo {author} {\bibfnamefont {M.}~\bibnamefont {Polini}},\ and\ \bibinfo
  {author} {\bibfnamefont {V.}~\bibnamefont {Giovannetti}},\ }\href
  {https://doi.org/10.1103/PhysRevB.99.035421} {\bibfield  {journal} {\bibinfo
  {journal} {Physical Review B}\ }\textbf {\bibinfo {volume} {99}},\ \bibinfo
  {pages} {035421} (\bibinfo {year} {2019})}\BibitemShut {NoStop}%
\bibitem [{\citenamefont {Gherardini}\ \emph {et~al.}(2020)\citenamefont
  {Gherardini}, \citenamefont {Campaioli}, \citenamefont {Caruso},
  \citenamefont {Ruffo}, \citenamefont {Trombettoni}, \citenamefont {Buffoni},
  \citenamefont {M{"u}ller}, \citenamefont {Campisi},\ and\ \citenamefont
  {Smerzi}}]{gherardini2020stabilizing}%
  \BibitemOpen
  \bibfield  {author} {\bibinfo {author} {\bibfnamefont {S.}~\bibnamefont
  {Gherardini}}, \bibinfo {author} {\bibfnamefont {F.}~\bibnamefont
  {Campaioli}}, \bibinfo {author} {\bibfnamefont {F.}~\bibnamefont {Caruso}},
  \bibinfo {author} {\bibfnamefont {S.}~\bibnamefont {Ruffo}}, \bibinfo
  {author} {\bibfnamefont {A.}~\bibnamefont {Trombettoni}}, \bibinfo {author}
  {\bibfnamefont {L.}~\bibnamefont {Buffoni}}, \bibinfo {author} {\bibfnamefont
  {M.~M.}\ \bibnamefont {M{"u}ller}}, \bibinfo {author} {\bibfnamefont
  {M.}~\bibnamefont {Campisi}},\ and\ \bibinfo {author} {\bibfnamefont
  {A.}~\bibnamefont {Smerzi}},\ }\href
  {https://doi.org/10.1103/PhysRevResearch.2.013095} {\bibfield  {journal}
  {\bibinfo  {journal} {Physical Review Research}\ }\textbf {\bibinfo {volume}
  {2}},\ \bibinfo {pages} {013095} (\bibinfo {year} {2020})}\BibitemShut
  {NoStop}%
\bibitem [{\citenamefont {Qi}\ and\ \citenamefont {Jing}(2021)}]{qi2021magnon}%
  \BibitemOpen
  \bibfield  {author} {\bibinfo {author} {\bibfnamefont {S.-F.}\ \bibnamefont
  {Qi}}\ and\ \bibinfo {author} {\bibfnamefont {J.}~\bibnamefont {Jing}},\
  }\href {https://doi.org/10.1103/PhysRevA.104.032606} {\bibfield  {journal}
  {\bibinfo  {journal} {Physical Review A}\ }\textbf {\bibinfo {volume}
  {104}},\ \bibinfo {pages} {032606} (\bibinfo {year} {2021})}\BibitemShut
  {NoStop}%
\bibitem [{\citenamefont {Pokhrel}\ \emph {et~al.}(2025)\citenamefont
  {Pokhrel}, \citenamefont {Nokkala}, \citenamefont {Manzano}, \citenamefont
  {Hu}, \citenamefont {Sun},\ and\ \citenamefont
  {Plastina}}]{pokhrel2025dissipative}%
  \BibitemOpen
  \bibfield  {author} {\bibinfo {author} {\bibfnamefont {S.}~\bibnamefont
  {Pokhrel}}, \bibinfo {author} {\bibfnamefont {J.}~\bibnamefont {Nokkala}},
  \bibinfo {author} {\bibfnamefont {G.}~\bibnamefont {Manzano}}, \bibinfo
  {author} {\bibfnamefont {X.}~\bibnamefont {Hu}}, \bibinfo {author}
  {\bibfnamefont {J.}~\bibnamefont {Sun}},\ and\ \bibinfo {author}
  {\bibfnamefont {F.}~\bibnamefont {Plastina}},\ }\href
  {https://doi.org/10.1103/PhysRevLett.134.130401} {\bibfield  {journal}
  {\bibinfo  {journal} {Physical Review Letters}\ }\textbf {\bibinfo {volume}
  {134}},\ \bibinfo {pages} {130401} (\bibinfo {year} {2025})}\BibitemShut
  {NoStop}%
\bibitem [{\citenamefont {Seah}\ \emph {et~al.}(2021)\citenamefont {Seah},
  \citenamefont {Perarnau-Llobet}, \citenamefont {Haack}, \citenamefont
  {Brunner},\ and\ \citenamefont {Nimmrichter}}]{seah2021quantum}%
  \BibitemOpen
  \bibfield  {author} {\bibinfo {author} {\bibfnamefont {S.}~\bibnamefont
  {Seah}}, \bibinfo {author} {\bibfnamefont {M.}~\bibnamefont
  {Perarnau-Llobet}}, \bibinfo {author} {\bibfnamefont {G.}~\bibnamefont
  {Haack}}, \bibinfo {author} {\bibfnamefont {N.}~\bibnamefont {Brunner}},\
  and\ \bibinfo {author} {\bibfnamefont {S.}~\bibnamefont {Nimmrichter}},\
  }\href@noop {} {\bibfield  {journal} {\bibinfo  {journal} {Physical Review
  Letters}\ }\textbf {\bibinfo {volume} {127}},\ \bibinfo {pages} {100601}
  (\bibinfo {year} {2021})}\BibitemShut {NoStop}%
\bibitem [{\citenamefont {Zhu}\ \emph {et~al.}(2023)\citenamefont {Zhu},
  \citenamefont {Ebler}, \citenamefont {Dai}, \citenamefont {Liu},\ and\
  \citenamefont {Yadin}}]{zhu2023ico}%
  \BibitemOpen
  \bibfield  {author} {\bibinfo {author} {\bibfnamefont {G.}~\bibnamefont
  {Zhu}}, \bibinfo {author} {\bibfnamefont {D.}~\bibnamefont {Ebler}}, \bibinfo
  {author} {\bibfnamefont {J.-L.}\ \bibnamefont {Dai}}, \bibinfo {author}
  {\bibfnamefont {Z.}~\bibnamefont {Liu}},\ and\ \bibinfo {author}
  {\bibfnamefont {B.}~\bibnamefont {Yadin}},\ }\href
  {https://doi.org/10.1103/PhysRevLett.131.240401} {\bibfield  {journal}
  {\bibinfo  {journal} {Physical Review Letters}\ }\textbf {\bibinfo {volume}
  {131}},\ \bibinfo {pages} {240401} (\bibinfo {year} {2023})}\BibitemShut
  {NoStop}%
\bibitem [{\citenamefont {Rossini}\ \emph
  {et~al.}(2020{\natexlab{b}})\citenamefont {Rossini}, \citenamefont
  {Andolina}, \citenamefont {Rosa}, \citenamefont {Carrega},\ and\
  \citenamefont {Polini}}]{rossini2020sykadvantage}%
  \BibitemOpen
  \bibfield  {author} {\bibinfo {author} {\bibfnamefont {D.}~\bibnamefont
  {Rossini}}, \bibinfo {author} {\bibfnamefont {G.~M.}\ \bibnamefont
  {Andolina}}, \bibinfo {author} {\bibfnamefont {D.}~\bibnamefont {Rosa}},
  \bibinfo {author} {\bibfnamefont {M.}~\bibnamefont {Carrega}},\ and\ \bibinfo
  {author} {\bibfnamefont {M.}~\bibnamefont {Polini}},\ }\href
  {https://doi.org/10.1103/PhysRevLett.125.236402} {\bibfield  {journal}
  {\bibinfo  {journal} {Phys. Rev. Lett.}\ }\textbf {\bibinfo {volume} {125}},\
  \bibinfo {pages} {236402} (\bibinfo {year} {2020}{\natexlab{b}})},\ \Eprint
  {https://arxiv.org/abs/1912.07234} {arXiv:1912.07234 [quant-ph]} \BibitemShut
  {NoStop}%
\bibitem [{\citenamefont {Rosa}\ \emph {et~al.}(2020)\citenamefont {Rosa},
  \citenamefont {Rossini}, \citenamefont {Andolina}, \citenamefont {Polini},\
  and\ \citenamefont {Carrega}}]{rosa2020ultrastable}%
  \BibitemOpen
  \bibfield  {author} {\bibinfo {author} {\bibfnamefont {D.}~\bibnamefont
  {Rosa}}, \bibinfo {author} {\bibfnamefont {D.}~\bibnamefont {Rossini}},
  \bibinfo {author} {\bibfnamefont {G.~M.}\ \bibnamefont {Andolina}}, \bibinfo
  {author} {\bibfnamefont {M.}~\bibnamefont {Polini}},\ and\ \bibinfo {author}
  {\bibfnamefont {M.}~\bibnamefont {Carrega}},\ }\href
  {https://doi.org/10.1007/JHEP11(2020)067} {\bibfield  {journal} {\bibinfo
  {journal} {JHEP}\ }\textbf {\bibinfo {volume} {2020}}\bibfield  {number}
  {\bibinfo  {number} { (11)},\ \bibinfo {pages} {067}},\ }\Eprint
  {https://arxiv.org/abs/1912.07247} {arXiv:1912.07247 [quant-ph]} \BibitemShut
  {NoStop}%
\bibitem [{\citenamefont {Divi}\ \emph {et~al.}(2025)\citenamefont {Divi},
  \citenamefont {Murugan},\ and\ \citenamefont
  {Rosa}}]{divi2025randomwalkgraphs}%
  \BibitemOpen
  \bibfield  {author} {\bibinfo {author} {\bibfnamefont {F.}~\bibnamefont
  {Divi}}, \bibinfo {author} {\bibfnamefont {J.}~\bibnamefont {Murugan}},\ and\
  \bibinfo {author} {\bibfnamefont {D.}~\bibnamefont {Rosa}},\ }\href
  {https://doi.org/10.1103/PhysRevB.111.075138} {\bibfield  {journal} {\bibinfo
   {journal} {Phys. Rev. B}\ }\textbf {\bibinfo {volume} {111}},\ \bibinfo
  {pages} {075138} (\bibinfo {year} {2025})},\ \Eprint
  {https://arxiv.org/abs/2412.04560} {arXiv:2412.04560 [quant-ph]} \BibitemShut
  {NoStop}%
\bibitem [{\citenamefont {Konar}\ \emph {et~al.}(2022)\citenamefont {Konar},
  \citenamefont {Lakkaraju}, \citenamefont {Ghosh},\ and\ \citenamefont
  {Sen(De)}}]{konar2022ultracold}%
  \BibitemOpen
  \bibfield  {author} {\bibinfo {author} {\bibfnamefont {T.~K.}\ \bibnamefont
  {Konar}}, \bibinfo {author} {\bibfnamefont {L.~G.~C.}\ \bibnamefont
  {Lakkaraju}}, \bibinfo {author} {\bibfnamefont {S.}~\bibnamefont {Ghosh}},\
  and\ \bibinfo {author} {\bibfnamefont {A.}~\bibnamefont {Sen(De)}},\ }\href
  {https://doi.org/10.1103/PhysRevA.106.022618} {\bibfield  {journal} {\bibinfo
   {journal} {Phys. Rev. A}\ }\textbf {\bibinfo {volume} {106}},\ \bibinfo
  {pages} {022618} (\bibinfo {year} {2022})},\ \Eprint
  {https://arxiv.org/abs/2109.06816} {arXiv:2109.06816 [cond-mat.quant-gas]}
  \BibitemShut {NoStop}%
\bibitem [{\citenamefont {Grazi}\ \emph
  {et~al.}(2025{\natexlab{a}})\citenamefont {Grazi}, \citenamefont {Cavaliere},
  \citenamefont {Sassetti}, \citenamefont {Ferraro},\ and\ \citenamefont
  {Traverso~Ziani}}]{grazi2025chargingfreefermion}%
  \BibitemOpen
  \bibfield  {author} {\bibinfo {author} {\bibfnamefont {R.}~\bibnamefont
  {Grazi}}, \bibinfo {author} {\bibfnamefont {F.}~\bibnamefont {Cavaliere}},
  \bibinfo {author} {\bibfnamefont {M.}~\bibnamefont {Sassetti}}, \bibinfo
  {author} {\bibfnamefont {D.}~\bibnamefont {Ferraro}},\ and\ \bibinfo {author}
  {\bibfnamefont {N.}~\bibnamefont {Traverso~Ziani}},\ }\href
  {https://doi.org/10.1016/j.chaos.2025.116383} {\bibfield  {journal} {\bibinfo
   {journal} {Chaos, Solitons \& Fractals}\ }\textbf {\bibinfo {volume}
  {196}},\ \bibinfo {pages} {116383} (\bibinfo {year} {2025}{\natexlab{a}})},\
  \Eprint {https://arxiv.org/abs/2503.20424} {arXiv:2503.20424 [quant-ph]}
  \BibitemShut {NoStop}%
\bibitem [{\citenamefont {Grazi}\ \emph
  {et~al.}(2025{\natexlab{b}})\citenamefont {Grazi}, \citenamefont {Ferraro},\
  and\ \citenamefont {Traverso~Ziani}}]{grazi2025universal}%
  \BibitemOpen
  \bibfield  {author} {\bibinfo {author} {\bibfnamefont {R.}~\bibnamefont
  {Grazi}}, \bibinfo {author} {\bibfnamefont {D.}~\bibnamefont {Ferraro}},\
  and\ \bibinfo {author} {\bibfnamefont {N.}~\bibnamefont {Traverso~Ziani}},\
  }\href {https://doi.org/10.3390/en18236116} {\bibfield  {journal} {\bibinfo
  {journal} {Energies}\ }\textbf {\bibinfo {volume} {18}},\ \bibinfo {pages}
  {6116} (\bibinfo {year} {2025}{\natexlab{b}})},\ \Eprint
  {https://arxiv.org/abs/2511.16274} {arXiv:2511.16274 [quant-ph]} \BibitemShut
  {NoStop}%
\bibitem [{\citenamefont {Caravelli}\ \emph {et~al.}(2021)\citenamefont
  {Caravelli}, \citenamefont {Yan}, \citenamefont {Garcia-Pintos},\ and\
  \citenamefont {Hamma}}]{caravelli_energy_2021}%
  \BibitemOpen
  \bibfield  {author} {\bibinfo {author} {\bibfnamefont {F.}~\bibnamefont
  {Caravelli}}, \bibinfo {author} {\bibfnamefont {B.}~\bibnamefont {Yan}},
  \bibinfo {author} {\bibfnamefont {L.~P.}\ \bibnamefont {Garcia-Pintos}},\
  and\ \bibinfo {author} {\bibfnamefont {A.}~\bibnamefont {Hamma}},\ }\href
  {https://doi.org/10.22331/q-2021-07-15-505} {\bibfield  {journal} {\bibinfo
  {journal} {Quantum}\ }\textbf {\bibinfo {volume} {5}},\ \bibinfo {pages}
  {505} (\bibinfo {year} {2021})},\ \Eprint {https://arxiv.org/abs/2012.15026
  [quant-ph]} {2012.15026 [quant-ph]} \BibitemShut {NoStop}%
\bibitem [{\citenamefont {Zakavati}\ \emph {et~al.}(2021)\citenamefont
  {Zakavati}, \citenamefont {Tabesh},\ and\ \citenamefont
  {Salimi}}]{zakavati_bounds_2021}%
  \BibitemOpen
  \bibfield  {author} {\bibinfo {author} {\bibfnamefont {S.}~\bibnamefont
  {Zakavati}}, \bibinfo {author} {\bibfnamefont {F.~T.}\ \bibnamefont
  {Tabesh}},\ and\ \bibinfo {author} {\bibfnamefont {S.}~\bibnamefont
  {Salimi}},\ }\href {https://doi.org/10.1103/PhysRevE.104.054117} {\bibfield
  {journal} {\bibinfo  {journal} {Phys. Rev. E}\ }\textbf {\bibinfo {volume}
  {104}},\ \bibinfo {pages} {054117} (\bibinfo {year} {2021})}\BibitemShut
  {NoStop}%
\bibitem [{\citenamefont {Shaghaghi}\ \emph {et~al.}(2022)\citenamefont
  {Shaghaghi}, \citenamefont {Singh}, \citenamefont {Benenti},\ and\
  \citenamefont {Rosa}}]{shaghaghi2022micromasers}%
  \BibitemOpen
  \bibfield  {author} {\bibinfo {author} {\bibfnamefont {V.}~\bibnamefont
  {Shaghaghi}}, \bibinfo {author} {\bibfnamefont {V.}~\bibnamefont {Singh}},
  \bibinfo {author} {\bibfnamefont {G.}~\bibnamefont {Benenti}},\ and\ \bibinfo
  {author} {\bibfnamefont {D.}~\bibnamefont {Rosa}},\ }\href
  {https://doi.org/10.1088/2058-9565/ac8829} {\bibfield  {journal} {\bibinfo
  {journal} {Quantum Sci. Technol.}\ }\textbf {\bibinfo {volume} {7}},\
  \bibinfo {pages} {04LT01} (\bibinfo {year} {2022})},\ \Eprint
  {https://arxiv.org/abs/2204.09995} {arXiv:2204.09995 [quant-ph]} \BibitemShut
  {NoStop}%
\bibitem [{\citenamefont {Shaghaghi}\ \emph {et~al.}(2023)\citenamefont
  {Shaghaghi}, \citenamefont {Singh}, \citenamefont {Carrega}, \citenamefont
  {Rosa},\ and\ \citenamefont {Benenti}}]{shaghaghi2023lossy}%
  \BibitemOpen
  \bibfield  {author} {\bibinfo {author} {\bibfnamefont {V.}~\bibnamefont
  {Shaghaghi}}, \bibinfo {author} {\bibfnamefont {V.}~\bibnamefont {Singh}},
  \bibinfo {author} {\bibfnamefont {M.}~\bibnamefont {Carrega}}, \bibinfo
  {author} {\bibfnamefont {D.}~\bibnamefont {Rosa}},\ and\ \bibinfo {author}
  {\bibfnamefont {G.}~\bibnamefont {Benenti}},\ }\href
  {https://doi.org/10.3390/e25030430} {\bibfield  {journal} {\bibinfo
  {journal} {Entropy}\ }\textbf {\bibinfo {volume} {25}},\ \bibinfo {pages}
  {430} (\bibinfo {year} {2023})},\ \Eprint {https://arxiv.org/abs/2212.13417}
  {arXiv:2212.13417 [quant-ph]} \BibitemShut {NoStop}%
\bibitem [{\citenamefont {Rodr{\'i}guez}\ \emph {et~al.}(2023)\citenamefont
  {Rodr{\'i}guez}, \citenamefont {Rosa},\ and\ \citenamefont
  {Olle}}]{rodriguez2023aidiscovery}%
  \BibitemOpen
  \bibfield  {author} {\bibinfo {author} {\bibfnamefont {C.}~\bibnamefont
  {Rodr{\'i}guez}}, \bibinfo {author} {\bibfnamefont {D.}~\bibnamefont
  {Rosa}},\ and\ \bibinfo {author} {\bibfnamefont {J.}~\bibnamefont {Olle}},\
  }\href {https://doi.org/10.1103/PhysRevA.108.042618} {\bibfield  {journal}
  {\bibinfo  {journal} {Phys. Rev. A}\ }\textbf {\bibinfo {volume} {108}},\
  \bibinfo {pages} {042618} (\bibinfo {year} {2023})},\ \Eprint
  {https://arxiv.org/abs/2301.09408} {arXiv:2301.09408 [quant-ph]} \BibitemShut
  {NoStop}%
\bibitem [{\citenamefont {Rojo-Franc{\`a}s}\ \emph {et~al.}(2024)\citenamefont
  {Rojo-Franc{\`a}s}, \citenamefont {Isaule}, \citenamefont {Santos},
  \citenamefont {Juli{\'a}-D{\'i}az},\ and\ \citenamefont
  {Zinner}}]{rojofrancas2024stablecollective}%
  \BibitemOpen
  \bibfield  {author} {\bibinfo {author} {\bibfnamefont {A.}~\bibnamefont
  {Rojo-Franc{\`a}s}}, \bibinfo {author} {\bibfnamefont {F.}~\bibnamefont
  {Isaule}}, \bibinfo {author} {\bibfnamefont {A.~C.}\ \bibnamefont {Santos}},
  \bibinfo {author} {\bibfnamefont {B.}~\bibnamefont {Juli{\'a}-D{\'i}az}},\
  and\ \bibinfo {author} {\bibfnamefont {N.~T.}\ \bibnamefont {Zinner}},\
  }\href {https://doi.org/10.1103/PhysRevA.110.032205} {\bibfield  {journal}
  {\bibinfo  {journal} {Phys. Rev. A}\ }\textbf {\bibinfo {volume} {110}},\
  \bibinfo {pages} {032205} (\bibinfo {year} {2024})},\ \Eprint
  {https://arxiv.org/abs/2406.07397} {arXiv:2406.07397 [quant-ph]} \BibitemShut
  {NoStop}%
\bibitem [{\citenamefont {Mieghem}(2016)}]{van_mieghem_graph_2016}%
  \BibitemOpen
  \bibfield  {author} {\bibinfo {author} {\bibfnamefont {P.~V.}\ \bibnamefont
  {Mieghem}},\ }\href {https://arxiv.org/abs/1401.4580} {\bibinfo {title}
  {Graph eigenvectors, fundamental weights and centrality metrics for nodes in
  networks}} (\bibinfo {year} {2016}),\ \Eprint
  {https://arxiv.org/abs/1401.4580} {arXiv:1401.4580 [math.SP]} \BibitemShut
  {NoStop}%
\bibitem [{\citenamefont {Jiongsheng}\ and\ \citenamefont
  {Xinmao}(1998)}]{jiongsheng_lower_1998}%
  \BibitemOpen
  \bibfield  {author} {\bibinfo {author} {\bibfnamefont {L.}~\bibnamefont
  {Jiongsheng}}\ and\ \bibinfo {author} {\bibfnamefont {W.}~\bibnamefont
  {Xinmao}},\ }\href {https://doi.org/10.1007/BF02683829} {\bibfield  {journal}
  {\bibinfo  {journal} {Acta Mathematicae Applicatae Sinica}\ }\textbf
  {\bibinfo {volume} {14}},\ \bibinfo {pages} {443} (\bibinfo {year}
  {1998})}\BibitemShut {NoStop}%
\bibitem [{\citenamefont {de~Caen}(1998)}]{de_caen_upper_1998}%
  \BibitemOpen
  \bibfield  {author} {\bibinfo {author} {\bibfnamefont {D.}~\bibnamefont
  {de~Caen}},\ }\href {https://doi.org/10.1016/S0012-365X(97)00213-6}
  {\bibfield  {journal} {\bibinfo  {journal} {Discrete Mathematics}\ }\textbf
  {\bibinfo {volume} {185}},\ \bibinfo {pages} {245} (\bibinfo {year}
  {1998})}\BibitemShut {NoStop}%
\bibitem [{\citenamefont {Hagberg}\ \emph {et~al.}(2008)\citenamefont
  {Hagberg}, \citenamefont {Swart},\ and\ \citenamefont
  {S~Chult}}]{hagberg2008exploring}%
  \BibitemOpen
  \bibfield  {author} {\bibinfo {author} {\bibfnamefont {A.}~\bibnamefont
  {Hagberg}}, \bibinfo {author} {\bibfnamefont {P.}~\bibnamefont {Swart}},\
  and\ \bibinfo {author} {\bibfnamefont {D.}~\bibnamefont {S~Chult}},\
  }\href@noop {} {\emph {\bibinfo {title} {Exploring network structure,
  dynamics, and function using NetworkX}}},\ \bibinfo {type} {Tech. Rep.}\
  (\bibinfo  {institution} {Los Alamos National Lab.(LANL), Los Alamos, NM
  (United States)},\ \bibinfo {year} {2008})\BibitemShut {NoStop}%
\bibitem [{\citenamefont {Zhao}\ \emph {et~al.}(2025)\citenamefont {Zhao},
  \citenamefont {Zhao},\ and\ \citenamefont {Zhuang}}]{zhao2025non}%
  \BibitemOpen
  \bibfield  {author} {\bibinfo {author} {\bibfnamefont {S.-C.}\ \bibnamefont
  {Zhao}}, \bibinfo {author} {\bibfnamefont {Z.-R.}\ \bibnamefont {Zhao}},\
  and\ \bibinfo {author} {\bibfnamefont {N.-Y.}\ \bibnamefont {Zhuang}},\
  }\href@noop {} {\bibfield  {journal} {\bibinfo  {journal} {Physical Review
  E}\ }\textbf {\bibinfo {volume} {112}},\ \bibinfo {pages} {024129} (\bibinfo
  {year} {2025})}\BibitemShut {NoStop}%
\bibitem [{\citenamefont {Von~Collatz}\ and\ \citenamefont
  {Sinogowitz}(1957)}]{von_collatz_spektren_1957}%
  \BibitemOpen
  \bibfield  {author} {\bibinfo {author} {\bibfnamefont {L.}~\bibnamefont
  {Von~Collatz}}\ and\ \bibinfo {author} {\bibfnamefont {U.}~\bibnamefont
  {Sinogowitz}},\ }\href {https://doi.org/10.1007/BF02941924} {\bibfield
  {journal} {\bibinfo  {journal} {Abh.Math.Semin.Univ.Hambg.}\ }\textbf
  {\bibinfo {volume} {21}},\ \bibinfo {pages} {63} (\bibinfo {year}
  {1957})}\BibitemShut {NoStop}%
\bibitem [{\citenamefont {Oliveira}\ \emph {et~al.}(2013)\citenamefont
  {Oliveira}, \citenamefont {Oliveira}, \citenamefont {Justel},\ and\
  \citenamefont {Abreu}}]{oliveira_measures_2013}%
  \BibitemOpen
  \bibfield  {author} {\bibinfo {author} {\bibfnamefont {J.~A.~D.}\
  \bibnamefont {Oliveira}}, \bibinfo {author} {\bibfnamefont {C.~S.}\
  \bibnamefont {Oliveira}}, \bibinfo {author} {\bibfnamefont {C.}~\bibnamefont
  {Justel}},\ and\ \bibinfo {author} {\bibfnamefont {N.~M. M.~D.}\ \bibnamefont
  {Abreu}},\ }\href {https://doi.org/10.1590/S0101-74382013005000012}
  {\bibfield  {journal} {\bibinfo  {journal} {Pesqui. Oper.}\ }\textbf
  {\bibinfo {volume} {33}},\ \bibinfo {pages} {383} (\bibinfo {year}
  {2013})}\BibitemShut {NoStop}%
\bibitem [{\citenamefont {Nikiforov}(2006)}]{nikiforov_eigenvalues_2005}%
  \BibitemOpen
  \bibfield  {author} {\bibinfo {author} {\bibfnamefont {V.}~\bibnamefont
  {Nikiforov}},\ }\href {https://doi.org/10.48550/arXiv.math/0506257} {\bibinfo
  {title} {Eigenvalues and degree deviation in graphs}} (\bibinfo {year}
  {2006}),\ \Eprint {https://arxiv.org/abs/math/0506257} {math/0506257}
  \BibitemShut {NoStop}%
\bibitem [{\citenamefont {Aouchiche}\ \emph {et~al.}(2008)\citenamefont
  {Aouchiche}, \citenamefont {Bell}, \citenamefont {Cvetković}, \citenamefont
  {Hansen}, \citenamefont {Rowlinson}, \citenamefont {Simić},\ and\
  \citenamefont {Stevanović}}]{aouchiche_variable_2008}%
  \BibitemOpen
  \bibfield  {author} {\bibinfo {author} {\bibfnamefont {M.}~\bibnamefont
  {Aouchiche}}, \bibinfo {author} {\bibfnamefont {F.~K.}\ \bibnamefont {Bell}},
  \bibinfo {author} {\bibfnamefont {D.}~\bibnamefont {Cvetković}}, \bibinfo
  {author} {\bibfnamefont {P.}~\bibnamefont {Hansen}}, \bibinfo {author}
  {\bibfnamefont {P.}~\bibnamefont {Rowlinson}}, \bibinfo {author}
  {\bibfnamefont {S.~K.}\ \bibnamefont {Simić}},\ and\ \bibinfo {author}
  {\bibfnamefont {D.}~\bibnamefont {Stevanović}},\ }\href
  {https://doi.org/10.1016/j.ejor.2006.12.059} {\bibfield  {journal} {\bibinfo
  {journal} {European Journal of Operational Research}\ }\textbf {\bibinfo
  {volume} {191}},\ \bibinfo {pages} {661} (\bibinfo {year}
  {2008})}\BibitemShut {NoStop}%
\end{thebibliography}%

\appendix
\onecolumngrid
\section{Linear-fractional form of the early-time power ratio}
\label{sec:linfrac}

    \begin{proposition}[Linear-fractional form and curvature]
    \label{prop:linfrac}
        Fix an open interval $I\subset\mathbb{R}$ on which the active negative set $K_-(h)$ is constant and nonempty. Then
        \begin{equation}
        \label{eq:linfrac-form}
            R_G(h)=\frac{A_0+h\,S_0}{C_0+h\,K_0},
        \end{equation}
        with
        \begin{equation}
            A_0:=\sum_{k\in K_-} w_k\,\frac{N}{M}\,\epsilon_k,\quad
            S_0:=\sum_{k\in K_-} w_k,\quad
            C_0:=\sum_{k\in K_-} \frac{N}{M}\,\epsilon_k,\quad
            K_0:=|K_-|.
        \end{equation}
        Let $D(h):=C_0+hK_0=\sum_{k\in K_-}E_k(h)$. On $I$ one has $D(h)<0$, and
        \begin{equation}
        \label{eq:linfrac-derivs}
            \frac{dR_G}{dh}(h)=\frac{S_0C_0-A_0K_0}{D(h)^2},
            \qquad
            \frac{d^2R_G}{dh^2}(h)=-\frac{2K_0\,(S_0C_0-A_0K_0)}{D(h)^3}.
        \end{equation}
        In particular, $\tfrac{dR_G}{dh}$ has a constant sign on $I$, so $R_G$ is monotone there. Changes in slope (and thus curvature) can occur only at knots where some $E_k(h)=0$ and the set $K_-(h)$ changes.
        \end{proposition}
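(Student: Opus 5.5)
The plan is to start from the definition of $R_G(h)$ given just after \eqref{eq:P-early-final}, namely $R_G(h)=\frac{1}{E_{\min}}\sum_{k\in K_-(h)} w_k E_k(h)$. The key observation is that $E_{\min}=\sum_{k\in K_-(h)}E_k(h)$, so on an interval $I$ where the active set $K_-(h)\equiv K_-$ is frozen, both numerator and denominator are sums over the \emph{same} fixed finite index set of affine functions of $h$. I will substitute $E_k(h)=\frac{N}{M}\epsilon_k+h$ and separate the $h$-independent and $h$-linear parts. In the numerator this gives $\sum_{k\in K_-}w_k\frac{N}{M}\epsilon_k + h\sum_{k\in K_-}w_k = A_0+hS_0$. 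In the denominator it gives $\sum_{k\in K_-}\frac{N}{M}\epsilon_k + h|K_-| = C_0+hK_0 = D(h)$. Their quotient is exactly \eqref{eq:linfrac-form}.

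Next I will check that $D(h)<0$ on $I$. By definition of $K_-(h)$, every term $E_k(h)$ with $k\in K_-$ is strictly negative, and $K_-$ is nonempty by hypothesis. So $D(h)$ is a nonempty sum of negative numbers. This also guarantees the quotient is well defined and smooth (indeed real-analytic) on $I$.

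The derivatives then follow from the quotient rule for a Möbius function $(a+bh)/(c+dh)$. The first derivative is $(bc-ad)/(c+dh)^2$, which with $a=A_0$, $b=S_0$, $c=C_0$, $d=K_0$ gives $(S_0C_0-A_0K_0)/D(h)^2$. Differentiating once more, the numerator is constant and $D'(h)=K_0$, so
\[
\frac{d^2R_G}{dh^2}=-2K_0\,\frac{S_0C_0-A_0K_0}{D(h)^3}.
\]
The monotonicity claim is immediate: the numerator $S_0C_0-A_0K_0$ is constant on $I$ and $D(h)^2>0$ there, so $dR_G/dh$ has constant sign (or vanishes identically). Since $D$ has a fixed sign on $I$, the same holds for the curvature.

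Finally, for the ``knots'' statement I will argue as follows. On $\mathbb{R}_{\ge0}$ the set $K_-(h)$ is piecewise constant, and it can change only when some $E_k(h)$ crosses zero, i.e.\ at $h=-\frac{N}{M}\epsilon_k$. Between consecutive such points the preceding formulas apply with constant coefficients, so any change of slope or curvature must occur at these points.

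The only subtlety, and the closest thing to an obstacle, is the use of $E_{\min}$. In the main text $E_{\min}$ is defined as the many-body minimum, which also absorbs the normalisation of $H_\textsc{b}$. I will take the definition $E_{\min}=\sum_{k\in K_-(h)}E_k(h)$ literally; this matches how it is used in the convex-combination form \eqref{eq:convex-comb}, where $\sum_k p_k=1$ requires precisely this denominator. With that identification the rest is routine algebra.
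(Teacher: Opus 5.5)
Your proposal is correct and follows the paper's proof essentially step for step. You freeze $K_-$ on $I$, substitute $E_k(h)=\frac{N}{M}\epsilon_k+h$ into the numerator and denominator to get the linear-fractional form, note that $D(h)<0$ as a nonempty sum of negative terms, and apply the quotient rule twice. Your concern about $E_{\min}$ does not arise: the paper defines $E_{\min}=\sum_{k\in K_-(h)}E_k(h)$ explicitly, and this is also the many-body ground energy of the quadratic Hamiltonian, so no further identification is needed.
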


        \begin{proof}
        On $I$, the active index set $K_-(h)$ is fixed and $E_k(h)=(N/M)\epsilon_k+h$ for $k\in K_-$. Hence
        \begin{equation}
            R_G(h)=\frac{\sum_{k\in K_-} w_k\,E_k(h)}{\sum_{k\in K_-}E_k(h)}
            =\frac{\sum_{k\in K_-} w_k\big(\tfrac{N}{M}\epsilon_k+h\big)}{\sum_{k\in K_-}\big(\tfrac{N}{M}\epsilon_k+h\big)}
            =\frac{A_0+hS_0}{C_0+hK_0},
        \end{equation}
        giving \eqref{eq:linfrac-form}. Since $k\in K_-$ means $E_k(h)<0$, summing yields $D(h)=\sum_{k\in K_-}E_k(h)<0$ on $I$.
        Differentiate the linear--fractional form:
        \begin{equation}
            \frac{dR_G}{dh}(h)=\frac{S_0(C_0+hK_0)-(A_0+hS_0)K_0}{(C_0+hK_0)^2}
            =\frac{S_0C_0-A_0K_0}{\big(C_0+hK_0\big)^2}.
        \end{equation}
        Differentiating again and using $D'(h)=K_0$ gives
        \begin{equation}
            \frac{d^2R_G}{dh^2}(h)=-\,\frac{2K_0\,(S_0C_0-A_0K_0)}{\big(C_0+hK_0\big)^3},
        \end{equation}
        which proves \eqref{eq:linfrac-derivs}. Since the denominator in the first derivative is $>0$ on $I$, the sign is constant there, so $R_G$ is monotone. Knots occur only when some $E_k(h)$ crosses $0$ and $K_-(h)$ changes.
        \end{proof}

        \begin{remark}
        Because $D(h)<0$ on $I$, the first and second derivatives in \eqref{eq:linfrac-derivs} have the \emph{same} sign: $R_G$ is increasing and convex if $S_0C_0-A_0K_0>0$, decreasing and concave if $S_0C_0-A_0K_0<0$, and constant if $S_0C_0-A_0K_0=0$. It is convenient to rewrite
        \begin{equation}
            S_0C_0-A_0K_0
            =\frac{N}{M}\,S_0K_0\Big(\bar{\epsilon}-\bar{\epsilon}_w\Big),
            \qquad
            \bar{\epsilon}:=\frac{1}{K_0}\sum_{k\in K_-}\epsilon_k,\quad
            \bar{\epsilon}_w:=\frac{1}{S_0}\sum_{k\in K_-} w_k\,\epsilon_k.
        \end{equation}
        Thus the slope and curvature are positive exactly when the weights $w_k$ emphasize the \emph{more negative} eigenvalues ($\bar{\epsilon}_w<\bar{\epsilon}$), negative when they emphasize the \emph{less negative} ones, and zero when the weighted and unweighted means coincide. For the star $S_N$, $|K_-|=1$ on a wide $h$-window; then $\bar{\epsilon}=\bar{\epsilon}_w$ and $R_{S_N}(h)$ is constant on that window. This means that inclusion of the local energy term in the star graph Hamiltonian is inconsequential from the power point of view, at least in the linear order. One could naively hope that $\frac{\text{d}R_G}{\text{d}h}$ is non-positive for other architectures, so that the inclusion of the perturbatively small diagonal term never increases the charging power. This is however \emph{not} the case. Our envelope argument will allow us to `prove' optimality of the star graph for all $h$ independently of this piecewise behavior in $h$.
        \end{remark}

\section{Random graph ensembles}
\label{sec:random-graph-ensembles}

    We briefly recall the precise generative rules used in our experiments; all graphs are simple, undirected, and unweighted (no self-loops, no multi-edges). The adjacency matrix $A$ is symmetric with zero diagonal.
    \begin{description}
        \item[ER $G(N,p)$.] On the vertex set $[N]=\{1,\dots,N\}$, each unordered pair $\{i,j\}$ with $i<j$ is included as an edge independently with probability $p\in(0,1)$. We take $p\in\{0.10,0.20\}$. This yields the classical Erd\H{o}s--R\'enyi model.
        \item[Uniform random tree.] A labeled tree on $[N]$ is sampled uniformly from the $N^{N-2}$ possibilities (Cayley). Operationally, we draw a Pr\"ufer sequence of length $N-2$ with i.i.d.\ uniform entries in $[N]$, then reconstruct the tree from the sequence; the resulting tree is uniform over all labeled trees.
        \item[Barab\'asi--Albert (BA) with parameter $m$.] Start from the complete graph $K_{m+1}$. For $t=m+2,\dots,N$, add a new vertex and connect it to $m$ distinct existing vertices, chosen without replacement with probability proportional to their current degrees (linear preferential attachment). Self-loops and parallel edges are disallowed; if a sampled endpoint repeats, we resample until $m$ distinct neighbors are obtained.
        \item[Unbalanced stochastic block model (SBM).] Partition $[N]=U\cup V$ with $|U|=\lfloor N/4\rfloor$ and $|V|=N-|U|$. For each unordered pair within $U$ or within $V$, place an edge independently with probability $p_{\mathrm{in}}=0.05$; for each cross pair $\{u,v\}$ with $u\in U$, $v\in V$, place an edge with probability $p_{\mathrm{out}}=0.9$. All edge decisions are independent. This yields an approximately bipartite (but noisy) structure with strong cross-part connectivity.
    \end{description}

\section{Charging dependence on microscopic parameters}\label{sec: micro params}
The accumulated work at time $t$ is given by
\begin{equation}
    W(t) = \text{Tr}[\rho_\textsc{b}(t) H_\textsc{b}] - \text{Tr}[\rho_\textsc{b}(0) H_\textsc{b}],
\end{equation}
while the average charging power is naturally defined as
\begin{equation}
    P(t) = \frac{W(t)}{t}.
\end{equation}

These quantities provide a direct picture of the energy transfer process from the cavity to the battery. Figure~\ref{fig:charge_dynamics} and \ref{fig:charge_dynamics1} show representative examples of $W(t)$ and $P(t)$ for a star interaction topology and different values of the cavity frequency $\omega$ and $\kappa$.

\begin{figure}[h!]
    \centering
    \includegraphics[width=0.47\textwidth]{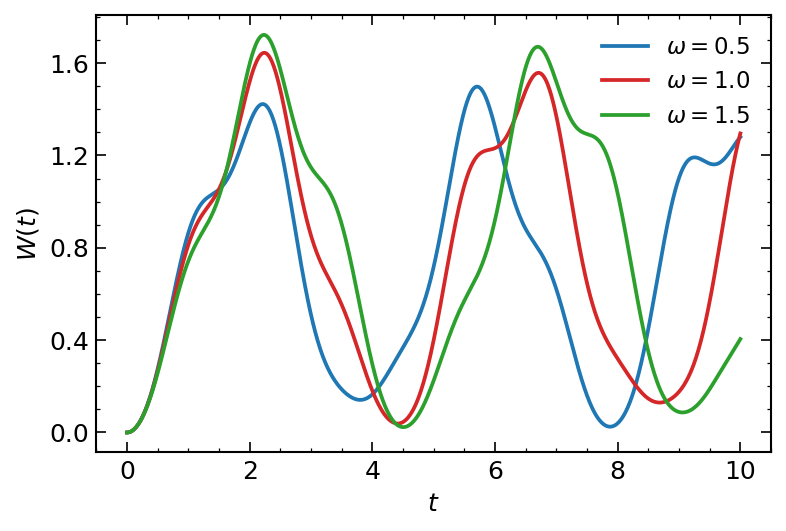}
    \includegraphics[width=0.47\textwidth]{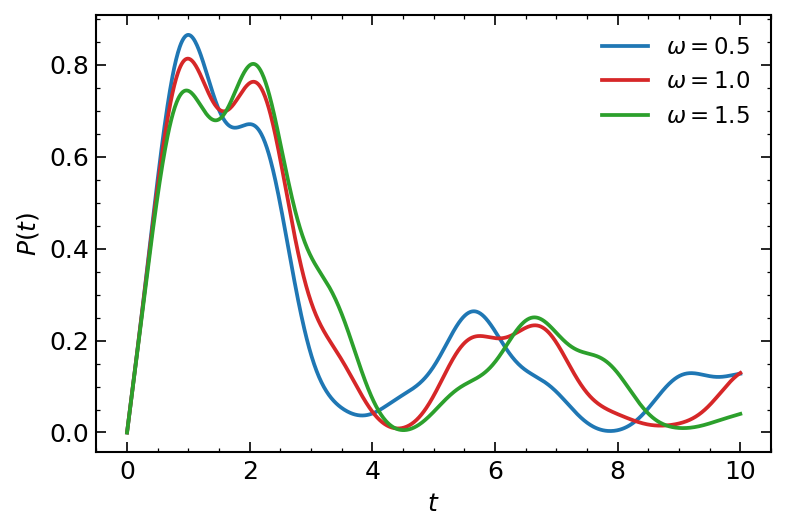}
    \hfill
    \caption{\textbf{Time evolution of the accumulated work (left) and the average charging power (right)} for a star topology with $N = 7$, \(\kappa=1\). The different curves correspond to various cavity frequencies $\omega$. }
    \label{fig:charge_dynamics}

\end{figure}

\begin{figure}[h!]
    \centering
    \includegraphics[width=0.47\textwidth]{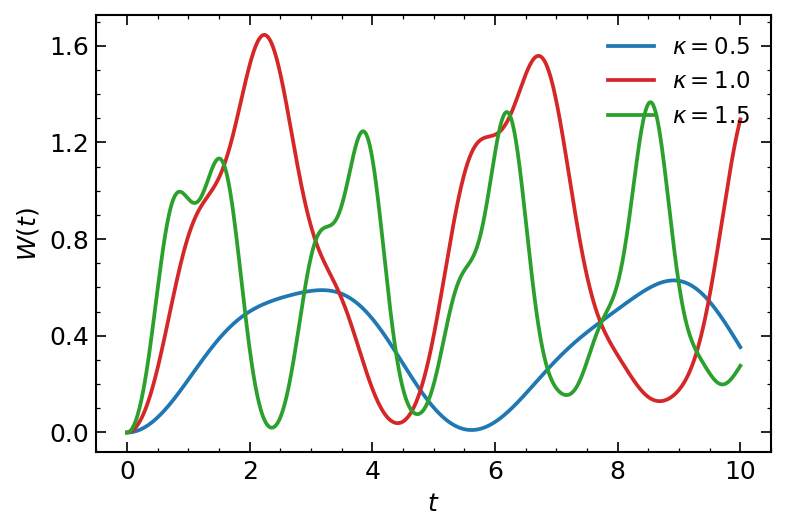}
    \hfill
    \includegraphics[width=0.48\textwidth]{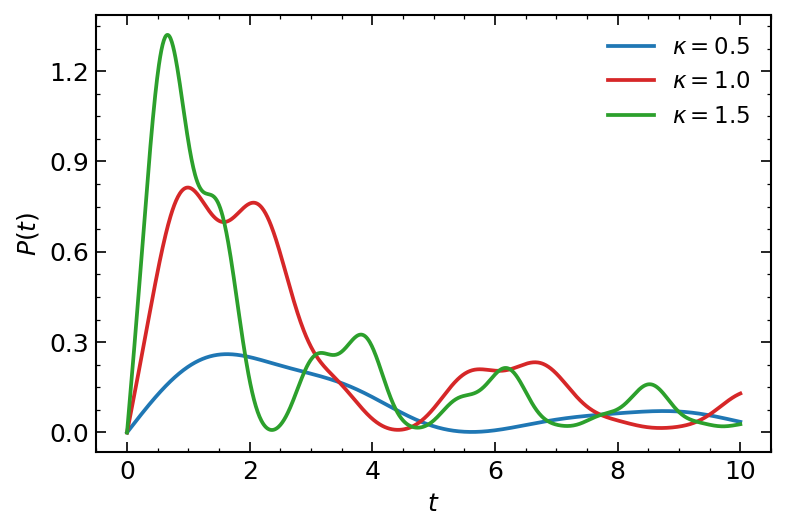}
    \caption{\textbf{Time evolution of the accumulated work (left) and the average charging power (right)} for a star topology with $N = 7$, with \(\omega=1\). The different curves correspond to various coupling values $\kappa$.}
    \label{fig:charge_dynamics1}
\end{figure}

The charging dynamics clearly highlights distinct roles played by the frequency \(\omega\) and the coupling strength \(\kappa\).
Varying \(\omega\) mainly affects the amplitude and structure of the oscillations in the accumulated work \(W(t)\), while leaving the overall charging timescale essentially unchanged.
In particular, changing \(\omega\) shifts the value of the maximal charging power, but has little influence on the time at which this maximum is reached.
This indicates that the charger frequency primarily controls the magnitude of energy exchange and the oscillatory features of the dynamics, rather than the speed at which the battery is charged.

In contrast, varying the coupling strength \(\kappa\) leads to much more pronounced modifications of the charging process.
Different values of \(\kappa\) can drastically change both the oscillatory pattern and the amplitude of \(W(t)\), resulting in qualitatively distinct charging behaviors.
This strong sensitivity is even more apparent in the power \(P(t)\), whose profile is significantly reshaped as \(\kappa\) varies.
In particular, modifying \(\kappa\) not only alters the maximal charging power, but also shifts the time at which this maximum is attained, and can strongly affect the efficiency with which energy is retained in the battery.
Overall, these results show that while \(\omega\) mainly tunes oscillation amplitudes and peak values, the coupling strength \(\kappa\) acts as the dominant parameter controlling both the intensity and the timescale of the charging dynamics.

Let us mention at this point that the oscillations in the stored energy could of course be mitigated by providing a more refined charging protocol, for instance by including dissipation in order to stabilize the asymptotic dynamics \cite{zhao2025non}.

\section{Power for exemplary graphs}
To exemplify the influence of interaction structure, we compute the average charging power $P(t)$ for four distinct graph topologies at fixed system size $N=7$.

The comparison is shown in Fig.~\ref{fig:standard_topologies}. The star topology (green) clearly outperforms all other configurations, achieving both the highest peak power and the largest overall charging performance. In contrast, the complete graph (blue) yields a negligible charging power when the battery is initialized in its ground state. The linear chain (red) and the random Erdős--Rényi graph (purple) display intermediate performance, but remain well below that of the star architecture.

This comparison highlights the crucial role of interaction structure: strongly centralized, star-like topologies enable much more efficient energy transfer than either homogeneous (complete) or weakly connected (chain, random) graphs.

\begin{figure}[h]
    \centering
    \includegraphics[width=0.47\linewidth]{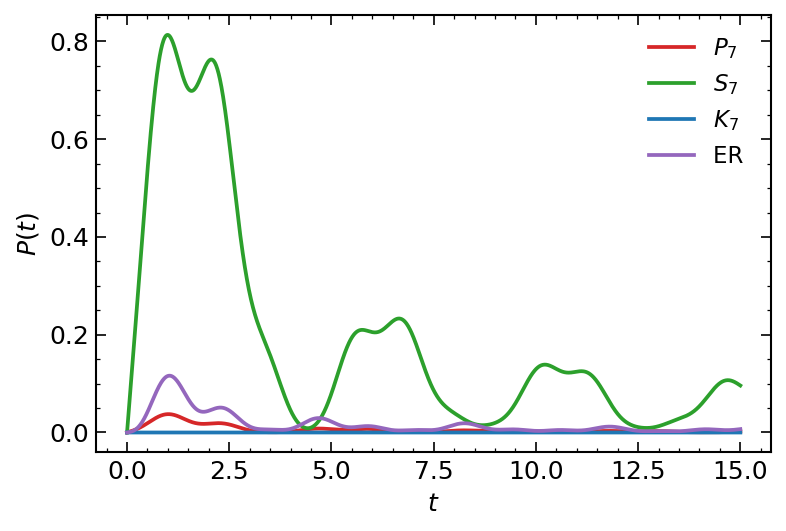}
    \caption{\textbf{Comparison of charging performance across graph families.} Average power $P(t)$ for $N=7$, with \(h=0\), \(\omega=\kappa=1\) . The star (green) achieves the best performance, while the complete graph (blue) fails to charge. The linear chain (red) and random Erdős–Rényi graph ($p=0.4$, purple) yield intermediate results.}
    \label{fig:standard_topologies}
\end{figure}

\section{Power versus independence number}
In our quest of finding the characteristic property of a graph that determines the charging power of the associated battery model, we investigated the graph’s so-called independence number $\alpha(G)$.

The independence number $\alpha(G)$ is defined as the size of the largest subset of vertices such that no two vertices in the subset are connected by an edge. It quantifies the maximum number of sites that can be mutually non-interacting within the graph. As such, $\alpha(G)$ provides a meaningful measure of the level of constraint imposed by the interaction topology.

Figure~\ref{fig:Pmax_alpha} shows $P_{\max}$ as a function of $\alpha(G)$ for all graphs at fixed system size $N=7$, for both $h=0$ and $h=1$. Although, when $h=1$, the values of $P_{\max}$ are reduced for all graphs, the behaviour remains identical for both cases.

\begin{figure}[h]
    \centering
    \begin{subfigure}{0.48\textwidth}
        \centering
        \includegraphics[width=\textwidth]{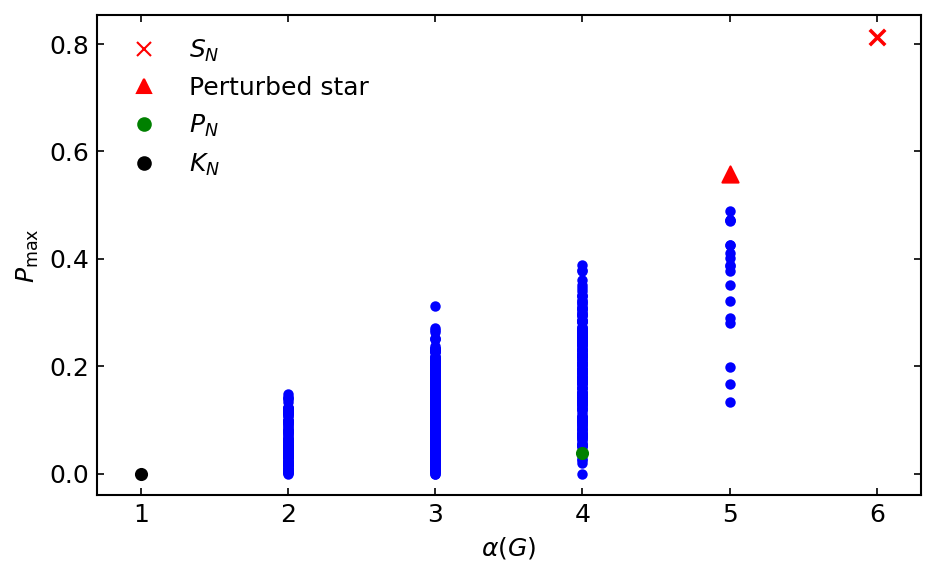}
    \end{subfigure}
    \hfill
    \begin{subfigure}{0.48\textwidth}
        \centering
        \includegraphics[width=\textwidth]{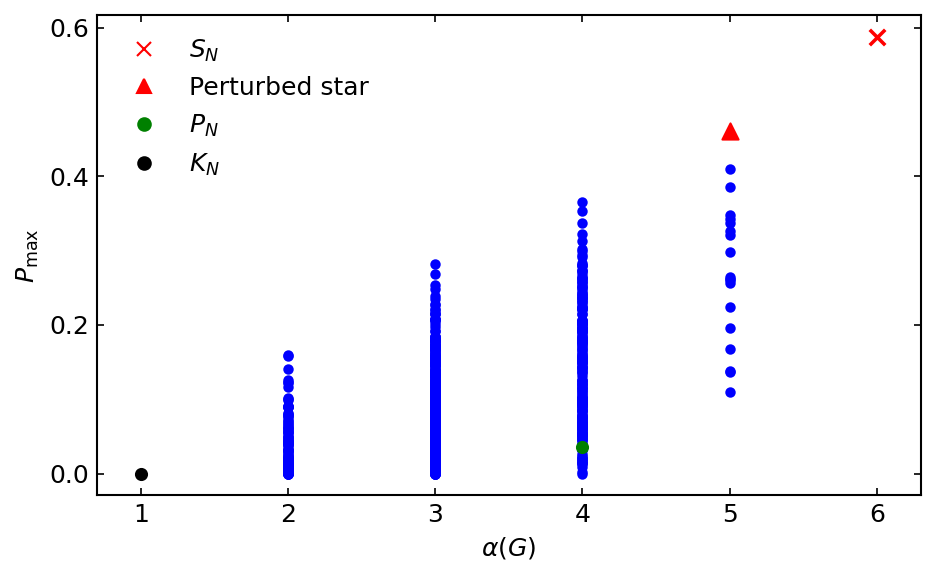}
    \end{subfigure}
    \caption{\textbf{$P_{\max}$ as a function of the independence number $\alpha(G)$ for all graphs at fixed system size}. Left: $h=0$. Right: $h=1$. Canonical graph topologies are highlighted: star ($S_N$), perturbed star, path ($P_N$), and complete graph ($K_N$). With \(\omega=\kappa=1\).}
    \label{fig:Pmax_alpha}
\end{figure}

A clear correlation between $P_{\max}$ and the independence number $\alpha(G)$ is observed: graphs with larger independence numbers systematically achieve higher charging power. In particular, the star topology $S_N$, which maximizes $\alpha(G)$, consistently yields the largest values of $P_{\max}$, followed by the perturbed star topology.

These results indicate that the independence number captures a key structural property governing the charging efficiency of the quantum battery. In particular, a larger $\alpha(G)$ allows for a greater number of weakly constrained sites, facilitating energy absorption and redistribution, whereas highly connected graphs impose strong constraints that hinder the charging process.

\section{Extra analytical results}

            We prove here a lemma which shows that $w_k$ cannot be arbitrarily large for modes with energies closer to zero. In fact it may grow at most as fast with energy as its associated weight $p_k(h)$ decreases. 
        It is hence consistent with $R_G(h)$ being dependent mostly on $w_\textrm{min}$.

            \begin{lemma}[Bound on uniform overlap growth]\label{lemma: w_k}
            Let $\epsilon_\textrm{max} = \max_{k} \epsilon_k$. For any $k$ such that $\epsilon_k\neq \epsilon_\textrm{max}$ we have
            \begin{equation}
                w_k \leq N \frac{\epsilon_\textrm{max} - \frac{2M}{N}}{\epsilon_\textrm{max} - \epsilon_k }.
            \end{equation}
        \end{lemma}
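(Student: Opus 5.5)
We expand $\one$ in the orthonormal eigenbasis, $\one=\sum_j \alpha_j u_j$ with $\alpha_j=\one^\top u_j$, so that $w_j=\alpha_j^2$. The argument compares the Rayleigh quotient of $\one$ with the largest eigenvalue, in the same spirit as the identities used in the proof of Lemma~\ref{lem:neg-sum-bound}.

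First we write the two spectral identities already recorded there. The normalization is $\sum_j w_j=\|\one\|_2^2=N$, and the quadratic form gives $\sum_j w_j\epsilon_j=\one^\top A\one=2M$. Subtracting the second from $\epsilon_{\max}$ times the first yields
\begin{equation*}
    \sum_j w_j\,(\epsilon_{\max}-\epsilon_j)=N\epsilon_{\max}-2M .
\end{equation*}

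Every summand on the left is nonnegative, since $w_j\ge 0$ and $\epsilon_{\max}\ge\epsilon_j$. So for any fixed $k$ the single term is bounded by the full sum:
\begin{equation*}
    w_k(\epsilon_{\max}-\epsilon_k)\le N\epsilon_{\max}-2M = N\Big(\epsilon_{\max}-\tfrac{2M}{N}\Big).
\end{equation*}
When $\epsilon_k\neq\epsilon_{\max}$ the factor $\epsilon_{\max}-\epsilon_k$ is strictly positive, and dividing by it gives the claimed inequality.

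The only subtle point is degenerate eigenvalues. In that case the $u_j$ are not unique, but the identities above hold for any orthonormal eigenbasis, so the bound holds for each basis vector. It also holds for the basis-invariant quantity $\sum_{j:\epsilon_j=\epsilon_k} w_j$, since the argument bounds that whole group of terms at once. Beyond this, no real obstacle is expected: the argument is Markov-type, using nonnegativity of the spectral measure of $\one$ with respect to $A$.
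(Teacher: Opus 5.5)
Your proof is correct and is essentially the paper's argument. The paper bounds the Rayleigh quotient of $x=\one-v_k u_k$, namely $(2M-\epsilon_k w_k)/(N-w_k)$, by $\epsilon_{\max}$; expanded in the eigenbasis, $x^\top A x\le\epsilon_{\max}\|x\|^2$ is exactly your statement that the $j\neq k$ terms of $\sum_j w_j(\epsilon_{\max}-\epsilon_j)$ are nonnegative, and your version has the small advantages of never dividing by $\|x\|^2=N-w_k$ and of giving the basis-invariant bound for degenerate eigenvalues directly.
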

        \begin{proof}
            Consider $x =\one - v_k u_k$, and the Rayleigh quotient
            \begin{equation}
                \frac{x^\top A x}{\norm{x}^2} = \frac{2M-\epsilon_k w_k}{N-w_k},
            \end{equation}
            where we used that $\one^\top A \one = 2M$, that $u_k$ is a normalised eigenvector corresponding to $\epsilon_k$, and the definitions of $v_k$ and $w_k$. The claim now follows from the fact that the Rayleigh quotient is upper bounded by $\epsilon_\mathrm{max}$.
        \end{proof}

        We note that Lemma~\ref{lemma: w_k} does not provide a strong enough bound on $w_\textrm{min}$ for proving the conjecture~\ref{conjecture}, with pineapple graph $P(N,\lfloor\tfrac{N}{2}\rfloor +1)$ being a counterexample. In the case of the least eigenvalue, the bound from Lemma~\ref{lemma: w_k} is proportional to the quotient of the Collatz-Sinogowitz (CS) graph irregularity measure \cite{von_collatz_spektren_1957, oliveira_measures_2013, nikiforov_eigenvalues_2005} and the adjacency matrix spectral spread. The CS irregularity measure is conjectured to be maximised among connected graphs with $N>10$ exactly by the aforementioned pineapple graphs \cite{aouchiche_variable_2008, oliveira_measures_2013}.

\end{document}